\title{Projection Bias in Effort Choices\thanks{Central European University, Department of Economics and Business.
Email: kaufmannm@ceu.edu. I am grateful to Matthew Rabin, David Laibson,
Gautam Rao, and Josh Schwartzstein. I also thank Ben Bushong, Krishna
Dasaratha, Anastassia Fedyk, Tristan Gagnon-Bartsch, James Hodson,
Botond K\H{o}szegi, Annie Liang, Ben Lockwood, Neil Thakral, Linh T. Tô,
and Gal Wettstein, as well as seminar participants at CEU, Harvard,
Humboldt, the Hungarian Academy of Sciences, Liser, and LMU for helpful
comments.}}
\author{
Marc Kaufmann (Central European University)
}
\date{\today}
\newtheorem{proposition}{Proposition}
\newtheorem{theorem}{Theorem}
\newtheorem{definition}{Definition}
\newtheorem{lemma}{Lemma}
\newtheorem{corollary}{Corollary}
\newcommand{\vect}[1]{\boldsymbol{#1}}
\providecommand{\JEL}[1]
{
  \small
  \textbf{\textit{JEL ---}} #1
}
\begin{document}
\maketitle
\begin{abstract}
Working becomes harder as we grow tired or bored. I model individuals
who underestimate these changes in marginal disutility -- as implied by
``projection bias'' -- when deciding whether or not to continue working.
This bias causes people's plans to change: early in the day when they
are rested, they plan to work more than late in the day when they are
rested. Despite initially overestimating how much they will work, people
facing a single task with decreasing returns to effort work optimally.
However, when facing multiple tasks, they misprioritize urgent but
unimportant over important but non-urgent tasks. And when they face a
single task with all-or-nothing rewards (such as being promoted) they
start, and repeatedly work on, some overly ambitious tasks that they
later abandon. Each day they stop working once they have grown tired,
which can lead to large daily welfare losses. Finally, when they have
either increasing or decreasing productivity, people work less each day
than previously planned. This moves people closer to optimal effort for
decreasing, and further away from optimal effort for increasing
productivity.
\end{abstract}

\JEL{D03, J22}

\hypertarget{introduction}{%
\section{Introduction}\label{introduction}}

Our tastes fluctuate, often rapidly: we grow tired and thirsty from
running, and we savor food or crave coffee more the longer we go
without. Furthermore, evidence from a variety of domains suggests that
our perceptions of our tastes are biased towards our current tastes: we
misperceive future tastes as being closer to our current tastes than
they will be (Loewenstein, O'Donoghue, and Rabin (2003)).\footnote{Evidence
  for projection bias has been found for food (Read and Van Leeuwen
  (1998); Nordgren, Pligt, and Harreveld (2008)), drink (Van Boven and
  Loewenstein (2003)), sexual arousal (Loewenstein, Nagin, and
  Paternoster (1997); Ariely and Loewenstein (2006)), effortful tasks
  (Augenblick and Rabin (2019)), heroin substitute cravings (Badger et
  al. (2007)), the endowment effect (Loewenstein and Adler (1995)) and
  for predictions of gym attendance (Acland and Levy (2015)). Projection
  bias resembles immune neglect (Gilbert et al. (1998)) whereby people
  overestimate how long they will feel bad about negative events.
  Bushong and Gagnon-Bartsch (2020) additionally find evidence for
  \emph{interpersonal} projection bias for choices over effort.} This
\emph{projection bias} can trigger undesirable and unintended habits and
behaviors, such as buying too much when shopping on an empty stomach, or
becoming addicted due to under-appreciating the future intensity of
cravings. In this paper, I study effort choices where the distaste for
work changes, such as when students grow bored of studying or employees
become tired of working. Due to projection bias, individuals mispredict
future disutility of work, which can cause them to mis-prioritize
between tasks, waste time on never-to-be-completed tasks, and
inefficiently choose when to work on those tasks.

I provide and discuss the formal model in Section 2. The agent works
continuously on a single task in one period and stops working the moment
she \emph{perceives} stopping to be optimal given the monetary benefits
and the perceived disutility from work. Letting \(s\) denote the total
time she has worked so far that day, the instantaneous disutility from
continuing to work is equal to \(D'(S)\). She misperceives her future
disutility due to projection bias: she predicts that her marginal
disutility after \(e\) hours of work lies between her current marginal
disutility, \(D'(s)\), and her true future marginal disutility,
\(D'(e)\). So when she is rested and marginal disutility is thus
particularly low, she overestimates how easy it will be to work, while
when tired and marginal disutility is thus particularly high, she
underestimates it. This leads to changing and inconsistent plans which
in turn can lead to the time-inconsistent behavior that I study in this
paper.

In Section 3, I analyze the implications from underestimating the
disutility of future work in single-period, single-task settings.
Because I assume that a person grows tired the longer she works, she
underestimates how unpleasant the work will be later that day and
overestimates how much she will work. Despite this, with decreasing
returns to effort, she works optimally. She works until the marginal
benefits equal her marginal disutility, and then stops. I then consider
all-or-nothing rewards, for which she receives known, fixed benefits if
she completes the task by the end of a given day. A projection-biased
person starts some overly costly all-or-nothing tasks, but as work
becomes more unpleasant, she realizes some of the higher costs and may
give up. If she does give up, she is better off than if she had
committed herself to finishing the task, because she still overestimates
how much she should work.

Even for decreasing returns to effort she only works optimally if she
faces a \emph{single} task. If, as I assume in Section 4, she has to
allocate her time across two consecutive tasks, she spends too much time
on the first task compared to the second: when she switches from the
first to the second task, she overestimates how much she will work on
the second task. If she realized how much she will actually work, she
would switch earlier. This suggests more generally that, when
multi-tasking, projection bias leads people to work too much on early
stages of tasks or on urgent but unimportant tasks, compared to later
stages or flexible but important tasks.

In Section 5, I extend the basic model to consider multi-day settings
with all-or-nothing tasks that are due on some fixed future day. I
simplify the analysis by considering a continuous-time setup with a
continuum of days. At the start of each day, the person may plan to
complete the task efficiently, yet stop working earlier than anticipated
once she has grown tired, in which case she plans to drop the task
entirely. This repeated starting and stopping -- due to naiveté about
her bias -- leads to one of two outcomes. Either she completes the task
despite working too little in the early days, but has to work harder
later on. Or she repeatedly wastes time on working on a task that she
eventually drops for good. In fact, as long as benefits are insufficient
to lead the person to complete the task, increasing benefits makes her
worse off, because they lead her to waste even more time across more
days for no change in the outcome. Even an almost unbiased person can
thus be induced to complete almost all the work, yet fail to complete
the task.

In Section 6, I consider a setting in which the main benefit from
working consists in having less work in the future, so that the
perceived value of the benefit fluctuates with current tiredness.
Specifically, the person has to produce a fixed total \emph{output} by a
given day, but her productivity either increases or decreases over time.
Then, the benefit of more work today is to have less work in the future.
On low-productivity days, the person works little, and therefore
perceives the benefits from saving, say, \(5\) hours of future work as
low. On high-productivity days, the person works a lot and therefore
perceives the same \(5\) hours of future work more costly and hence more
valuable to save them. Thus the person works too much on
high-productivity days and too little on low-productivity days.
Moreover, when productivity changes monotonically, whether it is
increasing or decreasing, people work less on a given day than they
planned at the end of the previous day. This change in plans moves them
closer to optimal behavior when productivity is decreasing, but further
from optimal behavior when productivity is increasing, which highlights
again that committed choices are not always preferable to on-the-spot
decisions.

My paper is most closely related to Loewenstein, O'Donoghue, and Rabin
(2003) who formalize the model of projection bias and apply it to
durable goods consumption, the endowment effect, and habit formation. I
focus on the time-inconsistent choices resulting from changes in fatigue
in effort choices, an economically important domain. Since agents
repeatedly reoptimize as they grow tired or rested, the final success of
work depends on combining the many decisions made while having
inconsistent plans.\footnote{Herrnstein and Prelec (1991)'s model of
  melioration under distributed choices is somewhat related, but differs
  substantially in the sense that plans don't matter in their model,
  while they are central to the results in my paper.} The constant
reoptimization in the multi-day model is closest to the model of
instantaneous gratification of Harris and Laibson (2013), including in
using ordinary differential equation in continuous time to approximate
the discrete-time optimization of naive agents.\footnote{Like Harris and
  Laibson (2013), I assume naiveté: a person is unaware of her bias,
  which allows for repeated changes in plans. It is beyond this paper to
  study how a person could be aware of her changing plans without being
  aware of the underlying reason for the changes.} This approach may
help integrate projection bias into specific economic settings, in
particular to time (mis)management and personnel economics. For
instance, Buehler, Griffin, and Ross (1994) find that students believe
that they will finish their bachelor's thesis earlier than they actually
do -- which they explain by students underestimating the number of hours
necessary for the task. Projection bias provides a complementary
explanation that people overestimate how much they will work. Regarding
my results on multi-tasking, Coviello, Ichino, and Persico (2015) and
Bray et al. (2016) find empirical evidence that it decreases
productivity. Coviello, Ichino, and Persico (2014) show that workers may
engage knowingly in intrinsically inefficient multi-tasking due to
lobbying by co-workers and superiors. With projection bias, even if
multi-tasking is not intrinsically inefficient, such lobbying will lead
workers to multi-task inefficiently. Given the possibly large welfare
losses under inconsistent behavior, my results highlight the potential
to expand the study of projection bias beyond domains with large swings
in taste (Levy (2009); Chaloupka, Levy, and White (2019)) and binding
choices (Conlin, O'Donoghue, and Vogelsang (2007); Busse et al. (2015);
Buchheim and Kolaska (2017); Michel and Stenzel (2020)).

\hypertarget{a-model-of-projection-bias-in-effort-choices}{%
\section{\texorpdfstring{A Model of Projection Bias in Effort Choices
\label{sec:setup}}{A Model of Projection Bias in Effort Choices }}\label{a-model-of-projection-bias-in-effort-choices}}

\subsection{The Formal Model}

\paragraph{Environment}

Consider a baseline setup with a single period during which a person
works for a single block of \(e\) hours. She earns a monetary benefit
\(B(e)\) and incurs a disutility \(D(e)\) for this work, where
\(D(\cdot)\) is continuously differentiable, with \(D(0) = 0\),
\(D'(e) \geq 0\), and \(D''(0) \geq 0\). The marginal disutility
\(D'(e)\) is the instantaneous disutility of continuing to work at
\emph{time \(e\)} -- at a time when the person has worked for a duration
\(e\).

\paragraph{Perceived Disutility}

Projection bias as defined by Loewenstein, O'Donoghue, and Rabin (2003)
leads a person to misperceive her future taste for work as more similar
to her current taste for work than it will actually be. As people work,
they grow more tired of working which is captured by increasing marginal
disutility, and so they perceive future work as more onerous the more
tired they currently are. Formally:

\begin{definition}[Projection Bias]
At time $s$ -- when a person has worked for a time $s$ -- the person mispredicts marginal disutility at a future time $e$ to be 
\begin{equation}
    \tilde{D}'(e | s) = (1 - \alpha) D'(e) + \alpha D'(s)
    \label{eq:define-projection-bias}
\end{equation}
where $\alpha \in [0, 1]$ is the degree of projection bias. Moreover, she is \emph{naive} with respect to her projection bias: she does not realize that her perception depends on $s$ and thus believes that she always perceives the marginal disutility as she does currently.
\end{definition}

With this definition the perceived disutility from total effort \(e\) is
\(\tilde{D}(e|s) = ( 1 - \alpha ) \cdot D(e) + \alpha \cdot D'(s) \cdot e\),
since \(D(0) = 0\). Note that, since the taste for money (for consuming
goods that money can buy) does not change substantially nor
systematically while working, the benefits \(B(\cdot)\) are being
perceived correctly.

\paragraph{Behavior}

I assume that the person stops working once she perceives working as
suboptimal, which then determines the total amount of work done. Since
the actual and perceived utility from stopping at time \(s\) is
\(B(s)\), and the perceived utility of continuing until time \(e > s\)
is \(B(e) - (\tilde{D}(e|s) - \tilde{D}(s|s))\), we have the following
behavior, which I describe as \emph{momentary} work decisions:

\begin{definition}[Momentary Work Decision]\label{def:single-task-decision}
  A projection-biased person works until $\tilde{e}^{*}$ given by 
  \begin{equation*}
    \tilde{e}^{*} = \inf \{s: B(e) - \tilde{D}(e|s) < B(s) - \tilde{D}(s|s) \text{,  } \forall e > s\}
  \end{equation*}
\end{definition}

\subsection{Discussion of the Model}

Projection bias leads to projecting marginal disutility under two
assumptions. First, at each instant the person makes a single binary
decision of whether or not to work. If the person could choose the level
of work intensity at each moment, then the instantaneous disutility
would depend on the level of intensity chosen at time \(s\), so that
\(D'(s)\) could no longer capture the instantaneous disutility. Second,
since projection bias is not about projecting indirect decision
utilities, but about projecting immediate hedonic utility, I assume that
working at time \(s\) incurs the \emph{instantaneous hedonic} disutility
equal to \(D'(s)\).

Now consider a projection-biased person who decides \emph{momentarily}
-- moment-by-moment -- whether or not to work right now, but who cannot
commit to how much she will work in total. If, as I assume, the person
is naive with respect to their own projection bias and hence does not
anticipate that she will deviate from her current plan,\footnote{The
  evidence from (Read and Van Leeuwen 1998) shows that, despite
  experience with fluctuations in hunger, adults still display
  projection bias over hunger. A nice experiment by Le Yaouanq and
  Schwardmann (2019) in the context of present bias shows however that
  participants do make less overoptimistic predictions about their own
  future behavior after making initial predictions and receiving
  feedback on it. This shows that we need richer models of learning
  about one's biases.} then she will only work as long as she perceives
it optimal.\footnote{In case the agent is indifferent between working
  and stopping, the model assumes that the agent works, since the person
  only stops if it is strictly suboptimal to continue working. In the
  settings that I study, this happens only for non-generic edge-cases.}
This determines total effort that period under the assumption that there
is no opportunity for resting within a period -- which I exclude by
assuming that all the work is being done in one block. If resting were
possible, the person might decide to resume work after a break. Of
course, both resting and intensity of work are important, since a
projection-biased person may misoptimize both, but I study them in
separate ongoing work.

Finally notice that when \(\alpha = 0\) the person has no projection
bias, and the actual work done \(\tilde{e}^{*}\) equals the optimal work
\(e^{*}\), if it is unique.\footnote{If \(s < e^{*}\), then
  \(B(e^{*}) - \tilde{D}(e^{*}|s) - (B(s) - \tilde{D}(s|s)) = B(e^{*}) - D(e^{*}) - (B(s) - D(s)) > 0\),
  since \(\tilde{D}(e|s) = D(e)\) and since \(e^{*}\) maximizes
  \(B(e) - D(e)\). Hence \(\tilde{e}^{*} \geq e^{*}\). But similarly,
  for \(s = e^{*}\), then
  \(B(s) - D(s) = B(e^{*}) - D(e^{*}) > B(e) - D(e)\) for every
  \(e > e^{*} = s\), hence \(\tilde{e^{*}} \leq s = e^{*}\). Thus
  \(\tilde{e}^{*} = e^{*}\).} Thus the setup nests the unbiased case.

\paragraph{Related Literature}

In terms of modeling projection bias, Loewenstein, O'Donoghue, and Rabin
(2003) define and formalize projection bias as the general tendency for
people to perceive their future tastes to be more similar to their
current tastes than they are, when tastes in any given moment depend on
some state \(s\). They focus on habit formation, durable goods, and loss
aversion. Gagnon-Bartsch and Bushong (2019) develop a model in which
people mislearn about their preferences for the good as they learn about
it from experience. People enjoy a good or activity more when it turns
out better than expected, and less when it turns out worse than
expected, but misattribute these feelings of elation or disappointment
to the good itself. Mislearning under projection bias instead may lead
people to neglect how much their state affected the enjoyment of a good,
thus misattributing their current state partially to the consumed good.
Haggag et al. (2019) develop a simple model of such state misattribution
and find evidence for it in two consumer decisions.

My simplest setup where the person decides when to stop working on a
single task is similar to stopping problems under time-inconsistent
preferences, such as Quah and Strulovici (2013), Hsiaw (2013), and Huang
and Nguyen-Huu (2018). In these models, the time inconsistency stems
from present bias or dynamically inconsistent changes in patience,
rather than from state changes that depend themselves on earlier
decisions. In the multi-tasking and multi-period settings that I study,
the agent no longer faces a single stopping decision, but several
decisions that jointly determine the final outcome. This setup is thus
more closely related to Harris and Laibson (2013) and Ahn, Iijima, and
Sarver (2020), both of which allow for repeated decisions to combine
over time to a final outcome such as final savings.

Having defined projection bias and the person's decision problem, the
next sections explore how the inconsistent plans of a projection-biased
person due to changing tastes affect their work decisions.

\hypertarget{single-task-choices}{%
\section{Single-Task Choices}\label{single-task-choices}}

Let us start with single-day decisions where people maximize their
utility in momentary work decisions over a single task as described in
section 2: people who have worked for \(s\) hours so far keep working if
they perceive it optimal \emph{at time \(s\)}. When the disutility
\(D(\cdot)\) is convex and the benefits \(B(\cdot)\) are linear or
concave, a projection-biased person works optimally -- despite (in fact,
because of) her changing plans. Nonetheless, such a person has
overoptimistic beliefs about how much she will work. I then consider
all-or-nothing tasks, where a person receives a known reward if she
completes a minimum amount of work. I show that people start overly
ambitious tasks, so that they either end up completing the task despite
it not having been worthwhile, or they quit the task without receiving
any benefit for their effort.

\hypertarget{optimal-behavior-and-optimistic-beliefs-with-convex-disutility-and-linear-benefits}{%
\subsection{Optimal Behavior and Optimistic beliefs with Convex
Disutility and Linear
Benefits}\label{optimal-behavior-and-optimistic-beliefs-with-convex-disutility-and-linear-benefits}}

Consider Anna, a projection-biased student with \(\alpha = 0.5\), who
has an exam tomorrow. The benefits of every additional hour of studying
are equal to \(3\), and studying becomes more unpleasant the longer she
studies. Specifically, Anna's daily disutility is quadratic in total
time studied, thus \(D(e) = \frac{e^2}{2}\) and \(D'(e) = e\). After
having studied for \(s\) hours, Anna plans to study until her currently
perceived marginal disutility is equal to her marginal benefits (which
are constant and equal to \(3\)). I denote the time at which she plans
to stop by \(\tilde{e}^{*}(s)\), the total hours she plans to work after
having worked for \(s\) hours. She perceives her marginal disutility
after studying for \(e\) hours to lie between her current marginal
disutility, \(D'(s)\), and her actual marginal disutility after \(e\)
hours of studying, \(D'(e)\):

\begin{equation*}
    \underbrace{\tilde{D}'(e | s)}_{\mathclap{\text{Perceived }D'}} = (1 - \alpha) \overbrace{D'(e)}^{\mathclap{\text{Actual }D'}} + \alpha \underbrace{D'(s)}_{\mathclap{\text{Current }D'}} = \frac{1}{2} (D'(e) + D'(s))
\end{equation*}

At the start of the day, Anna hasn't studied at all and \(s = 0\). So
she thinks that her marginal disutility after \(e\) hours of studying
will be \(\tilde{D}'(e|0) = \frac{1}{2} D'(e)\). She plans to work for
\(\tilde{e}^{*}(0)\) hours, with
\(\tilde{D}'(\tilde{e}^{*}(0)|0) = 3 \implies 1/2 \cdot (D'(\tilde{e}^{*}(0)) + 0) = \frac{1}{2}\tilde{e}^{*}(0) = 3 \implies \tilde{e}^{*}(0) = 6\).
Anna plans to study for \(6\) hours and thus starts studying. After
\(2\) hours of studying, the current marginal disutility is
\(D'(2) = 2\). Anna now plans to study for \(\tilde{e}^{*}(2)\) hours in
total, with \(\tilde{D}'(\tilde{e}^{*}(2)|2) = 3\) -- the first order
condition as she perceives it now. This leads to
\(1/2 \cdot (D'(\tilde{e}^{*}(2)) + D'(2)) = 3 \implies \tilde{e}^{*}(2) = 4\)
hours. Finally, once she has completed \(3\) hours of studying, the
current marginal disutility is \(D'(3) = 3\), so that
\(\tilde{e}^{*} = 3\) and Anna stops studying.

The same logic applies when the returns to effort are decreasing rather
than constant, which leads to Proposition \ref{prop:optimal-effort}. All
proofs can be found in the appendix.

\begin{proposition}
\label{prop:optimal-effort}
Let $D(.)$ be a strictly convex function with $D'(\cdot) \to \infty$, let $\alpha \in [0, 1)$, and $B(.)$ be both differentiable and linear or concave. Then a projection-biased person who makes momentary work decisions works optimally. Moreover, letting $e^{*}$ be the optimal amount of work and $\tilde{e}^{*}(s)$ be the perceived optimal amount of work after $s$ hours of work, then $\tilde{e}^{*}(s) > e^{*}$ $\forall s < e^{*}$. 
\end{proposition}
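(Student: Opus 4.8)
The engine of the proof is a single algebraic coincidence: at the current instant $e=s$ the projected marginal disutility $\tilde{D}'(e|s) = (1-\alpha)D'(e) + \alpha D'(s)$ collapses to the true $D'(s)$. So although the agent's \emph{plan} is distorted, the \emph{perceived} marginal value of one more instant of work, evaluated at the present, equals the \emph{true} marginal value $B'(s) - D'(s)$; together with concavity of the perceived objective this forces the momentary stopping rule to coincide with the optimal one. Concretely I would introduce the perceived continuation surplus
\[
  \Phi(e;s) := \bigl[B(e) - \tilde{D}(e|s)\bigr] - \bigl[B(s) - \tilde{D}(s|s)\bigr],
\]
which, using $\tilde{D}(e|s) = (1-\alpha)D(e) + \alpha D'(s)\,e$, expands to $B(e) - B(s) - (1-\alpha)\bigl(D(e)-D(s)\bigr) - \alpha D'(s)(e-s)$, so that $\Phi(s;s) = 0$, $\partial_e\Phi(e;s) = B'(e) - (1-\alpha)D'(e) - \alpha D'(s)$, and $\partial_e\Phi(e;s)\big|_{e=s} = B'(s) - D'(s)$.

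\emph{Step 1: reduce the stopping rule.} Since $B$ is concave or linear and $D$ is strictly convex with $\alpha<1$, the map $e \mapsto \partial_e\Phi(e;s)$ is strictly decreasing, i.e.\ $\Phi(\cdot;s)$ is strictly concave; and since $D'\to\infty$ while $B'$ is bounded above, $\Phi(e;s) \to -\infty$. Hence $\Phi(e;s) < 0$ for \emph{every} $e > s$ if and only if $\partial_e\Phi(e;s)\big|_{e=s} = B'(s) - D'(s) \le 0$. Substituting this equivalence into Definition~\ref{def:single-task-decision}, the agent stops exactly at the first $s$ with $B'(s) \le D'(s)$.

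\emph{Step 2: identify that time with $e^*$ and read off the ``moreover'' claim.} As the derivative of a differentiable concave (or linear) function, $B'$ is continuous and non-increasing; $D'$ is continuous and strictly increasing with $D'\to\infty$; so $s \mapsto B'(s) - D'(s)$ is continuous and strictly decreasing, hence either $\le 0$ already at $s=0$ (in which case $e^* = 0$ and the agent stops at $0 = e^*$) or it has a unique zero, which is precisely the maximizer $e^*$ of the strictly concave $B(e) - D(e)$, and there the agent stops (the tie at $s = e^*$ is broken toward stopping because strict concavity of $\Phi(\cdot;e^*)$ gives $\Phi(e;e^*) < 0$ for all $e > e^*$). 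Either way $\tilde{e}^* = e^*$. For the second statement, fix $s < e^*$; then $B'(s) - D'(s) > 0$, and $\tilde{e}^*(s) := \argmax_e\,[B(e) - \tilde{D}(e|s)]$ exists and is unique by the same strict-concavity-plus-coercivity argument. Its marginal condition at $e^*$ reads
\[
  \partial_e\bigl[B(e) - \tilde{D}(e|s)\bigr]\Big|_{e=e^*} = \bigl(B'(e^*) - D'(e^*)\bigr) + \alpha\bigl(D'(e^*) - D'(s)\bigr) = \alpha\bigl(D'(e^*) - D'(s)\bigr),
\]
which is strictly positive for $\alpha > 0$ since $D'$ is strictly increasing and $s < e^*$; strict concavity then places the maximizer to the right of $e^*$, so $\tilde{e}^*(s) > e^*$. (At $\alpha = 0$ this derivative vanishes and $\tilde{e}^*(s) = e^*$, the unbiased benchmark already noted.)

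I do not anticipate a deep obstacle; the only place that demands care is the $\inf$ in Definition~\ref{def:single-task-decision} — ruling out that $\{s : \Phi(e;s) < 0 \text{ for all } e > s\}$ is empty or has infimum strictly below $e^*$, which is exactly where coercivity ($D'\to\infty$) and the sign of $\partial_e\Phi|_{e=s}$ on $\{s < e^*\}$ enter — together with dispatching the degenerate corner $e^* = 0$ separately before running the generic argument.
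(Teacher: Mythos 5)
Your proof is correct and follows essentially the same route as the paper's: both hinge on the observation that $\tilde{D}'(s|s)=D'(s)$, so the perceived marginal surplus of continuing, evaluated at the present instant, equals the true $B'(s)-D'(s)$, and strict concavity plus coercivity of the perceived objective then turn this local sign into the global stopping rule $\tilde{e}^{*}=e^{*}$. You additionally spell out the ``moreover'' claim $\tilde{e}^{*}(s)>e^{*}$ via the first-order condition at $e^{*}$ (which the paper's proof leaves implicit) and correctly flag that the strict inequality requires $\alpha>0$.
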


Proposition \ref{prop:optimal-effort} relies on momentary work
decisions. If Anna had to make an irreversible (or hard-to-reverse)
choice, then she would choose to work too much. This is not likely in
the case of studying, but may be the case if Anna is grading exams for a
course or working on a common project with a friend. In such situations,
due to being overoptimistic, Anna will overcommit to working too much.

Proposition \ref{prop:optimal-effort} also highlights that Anna
constantly overestimates how much she will work. Why? By assumption, the
marginal disutility of effort increases, so that Anna -- who projects
her current marginal disutility -- underestimates how high marginal
disutility will be later that day when she stops, and therefore
overestimates how long she will study.

This strongly limits the scope of the result: while behavior is optimal
when the disutility is convex and benefits are linear or concave, the
beliefs over future work are overoptimistic. As long as overoptimistic
beliefs don't affect other decisions, everything is well. However, as
soon as some decisions rely on predictions of future effort, mistakes
will be made. I highlight this major caveat in Proposition
\ref{prop:multi-tasking-single-day} in the next section. Now, I consider
tasks with all-or-nothing benefits that are received only if the person
completes a minimum number of hours.

\hypertarget{all-or-nothing-tasks}{%
\subsection{\texorpdfstring{All-or-Nothing
Tasks\label{subsec:fixed-hours}}{All-or-Nothing Tasks}}\label{all-or-nothing-tasks}}

\begin{definition}[All-or-Nothing Task]
A single-period \emph{all-or-nothing} task $(E_{0}, B_{0})$ is a task that pays benefits $B_0$ if the person works at least $E_0$ hours by the end of the period, and pays $0$ otherwise: $B(e) = B_{0} \cdot \mathbbm{1}(e \geq E_{0})$.
\end{definition}

Each instant, the person chooses whether to start or continue the task.
She does so if she \emph{currently} thinks that completing the task is
better than quitting the task. Suppose that Alice, a projection-biased
high-school student with \(\alpha = 0.5\), has a deadline to finish a
college application tonight, which will take her \(6\) hours. With
quadratic disutility \(D(e) = e^{2}/2\), we have \(D(6) = 18\), and let
us suppose that \(B = 12\), so that Alice should not do the application.
Thus an unbiased person does not do start nor complete the task.

Does Alice start the application and, if so, does she finish it? She
starts if the perceived disutility at \(s = 0\), \(\tilde{D}(6 | 0)\),
is less than \(B\), where
\(\tilde{D}(6 | 0) = (1 - \alpha) D(6) + \alpha D'(0) \cdot 6 = 9 + \alpha D'(0) \cdot 6\).
Since \(D(e) = \frac{e^2}{2}\), then \(\tilde{D}(6 | 0) = 9 < 12 = B\)
and Alice starts the application. Now imagine what happened if Alice
worked for another two hours -- which, as we will see, does not happen.
Then the perceived disutility of completing the application would be
\(\tilde{D}(6|2) - \tilde{D}(2|2) = 13 > 12 = B\), so she would have
stopped working before reaching two hours of work. The reason is that
Alice perceives the final \(4\) hours of work as so much more unpleasant
after \(2\) hours of working than at the start of the day that she
perceives the task no longer worth completing -- despite having less
work left to do. Proposition \ref{prop:convex-single-day} states
formally when this happens.

\begin{proposition}
\label{prop:convex-single-day}[All-or-Nothing]
Let $D(.)$ be a strictly convex function with $D'(\cdot) \to \infty$, let $\alpha \in [0, 1)$. Let $\tilde{e}^{*}(E_{0}, B_{0})$ be the actual effort exerted by a person making momentary work decisions for a single-period all-or-nothing task. Then there exists a unique $E_H \geq 0$ such that the following statements hold:
\begin{enumerate}
    \item $\forall E_0$, if $B \in (\tilde{D}(E_{0}|0), D(E))$ then $\tilde{e}^{*}(E_{0}, B_{0}) > 0$, yet $B_{0} - D(E_{0}) < 0$
    \item $\forall E_{0} < E_H$ if $\tilde{e}^{*}(E_{0}, B_{0}) > 0$ then $\tilde{e}^{*}(E_{0}, B_{0}) = E_0$.
    \item $\forall E_{0} > E_H$, $\exists B_{0}(E)$ s.t. $0 < \tilde{e}^{*}(E_0, B_0) < E_0$.
\end{enumerate}
\end{proposition}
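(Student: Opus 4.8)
The plan is to reduce the whole proposition to the behavior of one function, the \emph{perceived remaining cost} at time $s$,
\[
 c(s) \;:=\; \tilde D(E_0 \mid s) - \tilde D(s \mid s) \;=\; (1-\alpha)\bigl(D(E_0)-D(s)\bigr) + \alpha D'(s)(E_0-s), \qquad s\in[0,E_0].
\]
The first step is to translate the momentary work decision for an all-or-nothing task into a statement about $c$. Since $B(e)=0$ for $e<E_0$ and $\tilde D(\cdot\mid s)$ is strictly increasing (because $D'$ is increasing and $\alpha<1$), at any time $s<E_0$ the supremum of $B(e)-\tilde D(e\mid s)$ over $e>s$ is attained (or approached) either by stopping or by reaching $E_0$; hence continuing is strictly suboptimal iff $c(s)>B_0$. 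Consequently $\tilde e^{*}(E_0,B_0)=0$ if $c(0)>B_0$; $\tilde e^{*}=E_0$ (she completes) if $c(s)\le B_0$ for all $s\in[0,E_0]$; and otherwise $\tilde e^{*}=\inf\{s:c(s)>B_0\}\in(0,E_0)$, using the tie-breaking convention and continuity of $c$ (which follows from $D\in C^1$). Two endpoint facts I will use repeatedly: $c(0)=\tilde D(E_0\mid 0)$ and $c(E_0)=0$.

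Statement~1 is then immediate: $B_0>c(0)=\tilde D(E_0\mid 0)$ forces $\tilde e^{*}>0$ by the reduction, and $B_0<D(E_0)$ is literally $B_0-D(E_0)<0$. (One should note the interval is nonempty for $\alpha\in(0,1)$: strict convexity with $D(0)=0$ gives $D'(0)E_0<D(E_0)$, hence $\tilde D(E_0\mid 0)=(1-\alpha)D(E_0)+\alpha D'(0)E_0<D(E_0)$.)

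For Statements~2 and~3, note that once she starts (i.e.\ $c(0)\le B_0$) she completes the task iff $\max_{[0,E_0]}c \le B_0$, and she starts-but-abandons for \emph{some} $B_0$ iff $\max_{[0,E_0]}c > c(0)$. Call $E_0$ \emph{persistent} if $\max_{[0,E_0]}c=c(0)$ and \emph{abandonable} otherwise; then Statement~2 holds for $E_0$ exactly when $E_0$ is persistent, and Statement~3 holds exactly when it is abandonable, so the two are complementary. The crux is that abandonability is monotone in the task length. Writing $g(s;E_0):=c(s)-c(0)$, one computes $\partial_{E_0}g(s;E_0)=\tilde D'(E_0\mid s)-\tilde D'(E_0\mid 0)=\alpha\bigl(D'(s)-D'(0)\bigr)\ge 0$, strictly for $s>0$. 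Since $g(\cdot;E_0)$ is continuous with $g(0;E_0)=0$, if $\sup_{[0,E_0]}g(\cdot;E_0)>0$ then it is attained at some $s^{*}\in(0,E_0]$, and for any $E_0'>E_0$ we get $g(s^{*};E_0')>g(s^{*};E_0)>0$ with $s^{*}\le E_0'$; thus abandonable task lengths form an up-set. Moreover, fixing any $s>0$ and letting $E_0\to\infty$ gives $g(s;E_0)\to+\infty$, so all sufficiently long tasks are abandonable. Hence $E_H:=\inf\{E_0>0: E_0\text{ abandonable}\}\in[0,\infty)$ is well defined, unique (any two separators would contradict each other on an intermediate $E_0$), and separates persistent from abandonable lengths, which delivers Statements~2 and~3. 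For Statement~3 one takes $B_0\in\bigl(c(0),\max_{[0,E_0]}c\bigr)$ — nonempty precisely because $E_0$ is abandonable, and with $B_0>c(0)=\tilde D(E_0\mid 0)\ge 0$ — and checks via the reduction that $0<\tilde e^{*}<E_0$ (the maximizer $s^{*}$ lies strictly inside $(0,E_0)$ since $c(E_0)=0\le c(0)<B_0$).

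The argument really hinges on the single observation $\partial_{E_0}[c(s)-c(0)]=\alpha(D'(s)-D'(0))\ge 0$: lengthening the deadline raises the perceived cost of the remaining work by the same ``objective'' amount $(1-\alpha)D'(E_0)$ whether she evaluates it at time $0$ or at a tireder time $s$, but the \emph{projected} component $\alpha D'(s)$ is larger when she is more tired, so a longer task looks disproportionately worse once she has begun. Everything else — the reduction of behavior to comparing $c(s)$ against $B_0$, continuity, the endpoint identities, nonemptiness of the relevant intervals, and uniqueness of $E_H$ — is routine bookkeeping, so I do not expect a genuine obstacle beyond arriving at this framing. (Strict convexity, equivalently $D'$ strictly increasing, is what powers the monotonicity; I do not see the hypothesis $D'(\cdot)\to\infty$ being needed here beyond guaranteeing $D$ is finite-valued, and $\alpha<1$ is used only for nonemptiness of the interval in Statement~1.)
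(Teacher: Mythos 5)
Your proposal is correct and follows essentially the same route as the paper: your $c(s)$ is the paper's perceived remaining disutility $\tilde R(E|s)$, your ``abandonable'' set is the paper's $\mathcal{E}=\{E:\tilde R_{\max}(E)>\tilde R(E|0)\}$, and the key monotonicity $\partial_{E_0}\bigl(c(s)-c(0)\bigr)=\alpha\bigl(D'(s)-D'(0)\bigr)\ge 0$ is exactly the observation the paper uses to show $\mathcal{E}$ is an up-set with $E_H=\inf\mathcal{E}$. The only cosmetic difference is that you dispatch the indifference/tie-breaking issue via the model's convention rather than the paper's genericity discussion, which is if anything slightly cleaner.
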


The proposition states that, first, the person starts tasks that are not
worth doing; that if the task requires sufficiently low effort, every
task that is started is completed; and if the task requires sufficiently
high effort, there is a task-specific payment such that the person
starts the task yet fails to finish it, thus wasting effort for no
benefit. Note that the first two points imply that all worthwhile tasks
are started and finished. Moreover, when \(D'(0) = 0\), then
\(E_H = 0\), so that starting and stopping can happen for all tasks.
Proposition \ref{prop:convex-single-day} applies more generally to tasks
with sufficiently convex benefits, not just all-or-nothing tasks. Thus
it also applies to situations with few discrete outcomes, such as
promotions or grades on exams, where the probability of success is
S-shaped and hence convex for each outcome.

The result that people start but don't finish a project that they start
superficially resembles the result by O'Donoghue and Rabin (2008).
There, however, people procrastinate on a project with no deadline, and
therefore never finish it, expecting to do so eventually, whereas here
they start and stop because of the close deadline. Thus both the
predictions on planned behavior and the welfare implications of the
results differ: while commitment would benefit a naive procrastinator,
it would hurt a projector.

\hypertarget{multi-tasking-with-concave-benefits}{%
\section{Multi-Tasking with Concave
Benefits}\label{multi-tasking-with-concave-benefits}}

Let us revisit the situation with convex disutility and decreasing
returns to effort, but with a twist: the person now divides her time
between two tasks, each of which has decreasing returns to effort.

\subsection{Multi-Tasking Model}

\paragraph{Environment}

A projection-biased person works on \(T\) consecutive tasks in a single
period: she works on task 1 for a duration \(e_{1}\), then switches to
task 2 for a duration \(e_{2}\) and so on until working on task \(T\)
for a duration \(e_{T}\). The disutility depends on total effort, so
that we can write it as \(D(\sum_{t = 1}^{T} e_{t})\), while the
monetary benefits are task-specific, so that we can write them as
\(\sum_{t = 1}^{T} B_{t}(e_{t})\).

\paragraph{Behavior}

Consider a person who is working on the \(i\)th task and let \(E_{i-1}\)
be the total amount of effort exerted on the first \(i-1\) tasks, which
can no longer be changed. Let \(V_{i}(e_{i}, E_{i-1}|s)\) denote the
perceived value from the remaining tasks \(i\) through \(T\) when
planning to put total effort \(e_i\) on task \(i\), perceived when the
person's current tiredness is \(s\) -- when the person has exerted total
effort \(s\) during this period. Then this is given as follows:

\begin{definition}[Perceived Continuation Value]
    Let $E_{i-1}$ be the total effort exerted on past tasks $1$ through $i-1$. Then 
    \begin{equation*}
        V_{i}(e_i, E_{i-1}|s) = \max_{ (e_j)_{j > i} } \sum_{t = i}^T B(e_t) - (D(E_{i-1} + \sum_{t = i}^T e_t) - D(E_{i-1}))
    \end{equation*}
\end{definition}

The person stops working on task \(i\) and switches to task \(i + 1\)
once she perceives it as strictly suboptimal to continue working on task
\(i\): when \(V_{i}(e_{i}, E_{i-1}|s) < V_{i}(s - E_{i-1}, E_{i-1}|s)\)
for all \(e_{i} > s - E_{i-1}\), since if she stops right away, she will
have spent a time \(s - E_{i-1}\) on task \(i\). This leads to the
following generalization of Definition \ref{def:single-task-decision}:

\begin{definition}[Multiple Effort Decisions]\label{def:multi-task-decision}
  A projection-biased person who works on $T$ consecutive tasks in a single period exerts effort $\tilde{e}_{i}^{*}$ on task $i$ given by
  \begin{equation*}
    \tilde{e}_{i}^{*} = \inf \{s: \tilde{V}_i(e_i, E_{i-1}|s) < \tilde{V}_i(s - E_{i-1}, E_{i-1}|s) \text{,  } \forall e_i > s - E_{i-1}\}
  \end{equation*}
  where $E_{i-1} = \sum_{t = 1}^{i - 1} \tilde{e}_{t}^{*}$.
\end{definition}

\subsection{Multi-Tasking Results}

To illustrate, suppose that Elaine has two problem sets due the same
day, one in economics due at 3pm and one in mathematics due at 8pm.
Given these deadlines, she starts working on the economics problem set
first. For simplicity, assume that the benefits for each problem set are
the same and given by \(B(.)\), which has decreasing marginal returns.
After working on the first problem set for \(s\) hours, she plans to
spend \(\tilde{e}^{*}(s)\) hours on each assignment. She thus stops
working on the economics assignment when she thinks that she has done
half the work. Let's say that this happens after \(5\) hours, at which
point she expects to do another \(5\) hours on the mathematics
assignment. She is of course wrong, and overestimates how long she will
keep working. Thus she may stop working after only \(3\) hours on the
mathematics assignment. We know from Proposition
\ref{prop:optimal-effort} that this choice is optimal \emph{conditional}
on her having spent \(5\) hours on the economics assignment -- so the
mistake she makes is to spend too much time on the economics assignment,
because she overestimates at that time how much she will work on the
mathematics assignment.

\begin{proposition}
\label{prop:multi-tasking-single-day}
There are two tasks with strictly concave and continously differentiable benefits $B_1(.)$ and $B_2(.)$ that have to be completed one after the other in that order. Let $\tilde{e}_1^{*}$ and $\tilde{e}_2^{*}$ be the actual effort spent on the two tasks, and $e_1^{*}$ and $e_2^{*}$ be the optimal effort levels. Then $B'(\tilde{e}_2^{*}) > B'(e_2^{*}) = B'(e_1^{*}) > B'(\tilde{e}_1^{*})$ and $\tilde{e}_{2}^{*} + \tilde{e}_{1}^{*} > e_2^{*} + e_1^{*}$.
\end{proposition}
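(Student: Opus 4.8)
The plan is to reduce both actual and optimal behavior to first‑order conditions and then compare them, writing $\tilde{E}=\tilde{e}_1^{*}+\tilde{e}_2^{*}$ and $E^{*}=e_1^{*}+e_2^{*}$, and reading ``$B'$'' in the statement as $B_i'$ on task‑$i$ arguments. I maintain the assumptions inherited from this section (and from Proposition \ref{prop:optimal-effort}) that $D$ is strictly convex with $D'(\cdot)\to\infty$, that $\alpha\in(0,1)$, and that solutions are interior ($e_1^{*},e_2^{*}>0$, which holds once $B_i'(0)$ is large enough relative to $D'$). The optimum satisfies $B_1'(e_1^{*})=B_2'(e_2^{*})=D'(E^{*})$, which already gives the middle equality $B'(e_2^{*})=B'(e_1^{*})$. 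For actual behavior I would argue: (i) once the person switches to task $2$, her remaining problem is a single‑task problem with benefit $B_2(\cdot)$ and disutility $\hat{D}(e_2):=D(\tilde{e}_1^{*}+e_2)-D(\tilde{e}_1^{*})$, and ``tiredness after $\sigma$ hours on task $2$'' enters exactly as $\hat{D}'(\sigma)=D'(\tilde{e}_1^{*}+\sigma)$; so Proposition \ref{prop:optimal-effort} applies verbatim and yields $B_2'(\tilde{e}_2^{*})=\hat{D}'(\tilde{e}_2^{*})=D'(\tilde{E})$, together with $\hat{e}_2>\tilde{e}_2^{*}$ for the task‑$2$ effort $\hat{e}_2$ she plans at the instant of switching (the sense in which she overestimates task‑$2$ effort). (ii) At that instant her tiredness is $\tilde{e}_1^{*}$ and the plan $(\tilde{e}_1^{*},\hat{e}_2)$ maximizes the perceived objective $B_1(e_1)+B_2(e_2)-(1-\alpha)D(e_1+e_2)-\alpha D'(\tilde{e}_1^{*})(e_1+e_2)$, which is jointly strictly concave; since she worked continuously up to $\tilde{e}_1^{*}$ and stops exactly there, $\tilde{e}_1^{*}$ is the unconstrained maximizer in $e_1$, so the interior first‑order conditions give $B_1'(\tilde{e}_1^{*})=B_2'(\hat{e}_2)=(1-\alpha)D'(\tilde{e}_1^{*}+\hat{e}_2)+\alpha D'(\tilde{e}_1^{*})$.

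The core step is $\tilde{e}_1^{*}>e_1^{*}$, which I would prove by contradiction. Suppose $\tilde{e}_1^{*}\le e_1^{*}$. Strict concavity of $B_1$ gives $B_1'(\tilde{e}_1^{*})\ge B_1'(e_1^{*})=D'(E^{*})$; since $B_2'(\hat{e}_2)=B_1'(\tilde{e}_1^{*})\ge D'(E^{*})=B_2'(e_2^{*})$, strict concavity of $B_2$ gives $\hat{e}_2\le e_2^{*}$, hence $\tilde{e}_1^{*}+\hat{e}_2\le E^{*}$. Then using $\tilde{e}_1^{*}\le e_1^{*}\le E^{*}$ and monotonicity of $D'$, the switch condition forces $B_1'(\tilde{e}_1^{*})=(1-\alpha)D'(\tilde{e}_1^{*}+\hat{e}_2)+\alpha D'(\tilde{e}_1^{*})\le D'(E^{*})$. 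Thus every inequality is an equality; in particular $\alpha D'(\tilde{e}_1^{*})=\alpha D'(E^{*})$, so (as $\alpha>0$ and $D'$ is strictly increasing) $\tilde{e}_1^{*}=E^{*}$, contradicting $\tilde{e}_1^{*}\le e_1^{*}<E^{*}$ (the last inequality because $e_2^{*}>0$). Hence $\tilde{e}_1^{*}>e_1^{*}$, i.e.\ $B_1'(e_1^{*})>B_1'(\tilde{e}_1^{*})$.

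Given $\tilde{e}_1^{*}>e_1^{*}$, I would close as follows. If $\tilde{E}\le E^{*}$ then $\tilde{e}_2^{*}\le E^{*}-\tilde{e}_1^{*}<E^{*}-e_1^{*}=e_2^{*}$, so $B_2'(\tilde{e}_2^{*})>B_2'(e_2^{*})=D'(E^{*})\ge D'(\tilde{E})=B_2'(\tilde{e}_2^{*})$, a contradiction; therefore $\tilde{E}>E^{*}$, which is the second claimed inequality, and it yields $B_2'(\tilde{e}_2^{*})=D'(\tilde{E})>D'(E^{*})=B_2'(e_2^{*})$. Chaining with the optimum conditions and the previous paragraph gives $B'(\tilde{e}_2^{*})>B'(e_2^{*})=B'(e_1^{*})>B'(\tilde{e}_1^{*})$ and $\tilde{e}_2^{*}+\tilde{e}_1^{*}=\tilde{E}>E^{*}=e_2^{*}+e_1^{*}$.

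The main obstacle is step (i)–(ii): turning the momentary‑decision definitions (an infimum over a ``perceived‑suboptimality'' condition) into the clean interior first‑order conditions used above. This requires checking that the perceived continuation objective is strictly concave so the stopping/switching points coincide with interior critical points, that $D'(\cdot)\to\infty$ forces the person to stop each task in finite time, and that the interiority assumptions rule out corner cases ($\hat{e}_2>0$ in particular, which also gives $\tilde{e}_2^{*}>0$). Once these are secured, the comparative statics are routine, and the strictness in the conclusion comes precisely from $\alpha\in(0,1)$, strict convexity of $D$, and strict concavity of $B_1,B_2$.
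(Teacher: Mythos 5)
Your proposal is correct and follows essentially the same route as the paper: characterize the switch time by the perceived first-order condition $B_1'(\tilde{e}_1^{*})=B_2'(\tilde{e}_{2|1}^{*})=\tilde{D}'(\tilde{e}_1^{*}+\tilde{e}_{2|1}^{*}\,|\,\tilde{e}_1^{*})$, obtain $\tilde{e}_1^{*}>e_1^{*}$ by contradiction, and then treat the post-switch problem as a single-task problem (your reduction via the shifted disutility $\hat{D}(e_2)=D(\tilde{e}_1^{*}+e_2)-D(\tilde{e}_1^{*})$ is a slightly slicker way to get the paper's stopping condition $B_2'(\tilde{e}_2^{*})=D'(\tilde{e}_1^{*}+\tilde{e}_2^{*})$). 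The one step you flag but do not carry out --- that the momentary switching rule coincides with the interior critical point of the perceived objective --- is exactly what the paper verifies explicitly (she strictly prefers to continue for $s<t^{*}$ and to stop for $s>t^{*}$); your ``every inequality is an equality'' treatment of strictness is, if anything, a bit more careful than the paper's, which tacitly uses $\alpha>0$ and $\tilde{e}_{2|1}^{*}>0$ at that point.
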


The Proposition states that the person works too much in total, working
too much on the earlier and too little on the later task.

The same mistake occurs when the person works on a single task
consisting of two or more subtasks, as long as each subtask is best done
in one block. If the subtasks have a natural sequence, so that one
subtask makes the subsequent subtask easier, then Elaine will work too
much on the earlier stages than on the later stages. For instance,
suppose that Elaine plans to read both the lecture notes and to finish a
problem set for the same class today. If she believes that the problem
set will be easier after reading the lecture notes, then she reads the
lecture notes first and consequently spends too much time on them.

While I assume that the order of the tasks is fixed exogenously in
Proposition \ref{prop:multi-tasking-single-day}, what is necessary for
the result to hold is that each task is done in one go, for which
switching cost are a sufficient condition.

\hypertarget{multi-day-all-or-nothing-tasks}{%
\section{Multi-Day All-or-Nothing
Tasks}\label{multi-day-all-or-nothing-tasks}}

In single-day all-or-nothing tasks, a person with increasing marginal
disutility always underestimates the costs of finishing the task. In
multi-day all-or-nothing tasks, she underestimates these costs at the
beginning of each day, but if she works long enough, she overestimates
them once her current marginal disutility is higher than the average
marginal disutility from completing the task. In this section, I develop
a continuous-time model to study how these fluctuations affect effort
choices and welfare.\footnote{In a working paper -- see Kaufmann (2020)
  -- I prove that this continuous-time solution is the limit of the
  discrete-time solution as \(T \to \infty\), which I take as a given
  here.}

\hypertarget{model-for-multi-day-all-or-nothing-tasks}{%
\subsection{Model for Multi-Day All-or-Nothing
Tasks}\label{model-for-multi-day-all-or-nothing-tasks}}

Consider a continuous-time setup where every time \(x \in [0, 1)\)
represents a different period. In each (continuous-time) period \(x\),
the person stops working once she perceives it as optimal doing so,
which determines the flow effort \(e_{x}\) at time \(x\). She incurs a
flow disutility of effort equal to \(D(e_{x})\). The person receives a
benefit of \(B\) at time \(x = 1\) if she has completed total effort
equal to or exceeding \(E\) and she receives no benefit otherwise.

So when does the projection-biased person stop working at time \(x\)?
Let us fix a period \(x\) from the continuum of periods. Then by this
definition, the person has a fraction \(1 - x\) of periods left to
complete the task, and I denote the amount of work left to do as
\(E_{x}\), i.e.~\(E_{x} = E - \int_{0}^{x} e_{t} dt\). If the
projection-biased person plans on completing the task, then she plans
completing it efficiently, working the same amount each day, which will
incur a cost of \(D\left( \frac{E_{x}}{1 - x}\right) (1 - x)\) in total.
After having worked for \(s\) hours on period \(x\), she perceives the
cost of completing the task efficiently as
\(G(x, s, E) := \tilde{D} \left( \frac{E_{x}}{1 - x}|s \right) (1 - x)\).
If at the start of the period when \(s = 0\) the perceived costs exceed
the benefits, then the person doesn't work at all, so that
\(e_{x} = 0\). If instead even after having worked for
\(s = \frac{E_{x}}{1 - x}\) hours the person still considers it worth
doing, then she stops working that period when
\(e_{x} = \frac{E_{x}}{1 - x}\). Otherwise she stops once the
\(G(x,s,E) = B\), since this implies that
\(G(x, s+\varepsilon, E) > B\), so she would never work more than \(s\)
hours. This shows that the following definition is in line with
Definition \ref{def:single-task-decision}:

\begin{definition}[Continuous-Time All-or-Nothing Task]
  Consider a projection-biased person working on a multi-period all-or-nothing task $(E, B)$ that requires effort $E$ to complete and pays monetary benefits $B$ if the task is completed by time $x = 1$ and pays $0$ otherwise. Let $E_x$ for $x \in [0, 1)$ be the effort that remains to be done at time $x$ in order to complete the task. Then $E_{x}$ is given by the ODE
\begin{align*}
    & \dot{E}_x = -e_{x} \\
    & e_x  =
    \begin{cases}
        0 \text{, if } G(x, 0, E_{x}) > B \\
        \frac{E_{x}}{1 - x} \text{, if } G(x, \frac{E_x}{1 - x}, E_{x}) < B \\
        e_x^{*} \text{ otherwise, with } G(x, e_x^{*}, E_{x}) = B
    \end{cases}
\end{align*}
with initial condition $E_0 = E$. A task is completed if $E_1 = 0$.
\end{definition}

Notice that because we normalize the time until the deadline to \(1\),
the total benefit \(B\) and effort \(E\) approximate the average
``daily'' benefit and effort in the discrete-time setting. That is, the
limit as \(T \to \infty\) holds the average daily benefits and effort
levels constant, not the totals.

\hypertarget{results-for-multi-day-all-or-nothing-tasks}{%
\subsection{Results for Multi-Day All-or-Nothing
Tasks}\label{results-for-multi-day-all-or-nothing-tasks}}

I first give a discrete-time example, the intuition of which carries
over to the continuous-time case. Consider Beth, a student who is
working on an all-or-nothing task with a deadline in \(T\) days. She has
an economics exam in \(100\) days and knows that she will receive a B in
her final if she does nothing but attend the required lectures. Getting
an A on the final is worth \(1250\) more than receiving a grade B. If
she studies \(5\) hours a day on average, Beth is sure to receive an A,
if she studies less, she is sure to receive a B.

Suppose that the daily disutility is quadratic: \(D(e) = \frac{e^2}{2}\)
so that \(D'(e) = e\). First, note that Beth at every moment either
plans to complete the task efficiently, or to not do the task at all.
After all, at any given moment she plans to do what an unbiased person
would do whose actual disutility was given by \(\tilde{D}(.|s)\). On the
first day, Beth therefore studies as long as she perceives it worthwhile
to study \(5\) hours every day. The disutility of studying \(5\) hours
per day is \(100 \cdot D(5) = 1250\), so an unbiased student would be
indifferent between studying and not studying. But Beth is
projection-biased, with \(\alpha = 0.5\). At the start of the first day
she underestimates the disutility of the task and starts studying. After
\(2.5\) hours of studying, her marginal disutility is \(D'(2.5)\), and
she perceives the disutility of working \(5\) hours on every future day
correctly:
\(\tilde{D}(5 | 2.5) = (1 - \alpha) D(5) + \alpha D'(2.5) \cdot 5 = D(5)\).
She therefore perceives the remaining disutility of studying \(5\) hours
every day almost correctly and soon stops working.\footnote{She still
  slightly underestimates it because she underestimates the disutility
  of the \(2.5\) hours of work she has to complete on the first day.}
When she stops, she believes, mistakenly, that she won't resume it the
next day. Yet, come the next day, she is rested and starts studying
again, planning to get an A, only to stop once more when she grows
sufficiently tired.

Every day, Beth thus either doesn't study at all, studies inefficiently
given how much work still remains to be done, or studies efficiently. It
is not difficult to see that if Beth doesn't study at all on day \(t\),
than she won't study on day \(t + 1\) or any later day either, and
therefore not get an A. Similarly, if she studies efficiently on day
\(t\), then she will study efficiently on all future days and thus get
an A. For instance, if after \(50\) days, Beth had only completed \(50\)
hours of studying, she would have to study \(9\) hours per day on the
remaining days, and she wouldn't start studying any longer.
Alternatively, if after \(75\) days Beth had completed \(300\) hours of
studying, she would have to work \(8\) hours a day for the remaining
\(25\) days to receive the full benefits worth \(1250\). She would work
\(8\) hours a day, since she would perceive this as worthwhile even
after \(8\) hours of work:
\(\tilde{D}(8|8) = 16 + 32 = 48 < 50 = \frac{1250}{25}\).

Proposition \ref{prop:multi-period-task-continuous} formally states that
for any average daily effort required, each of these two outcomes --
wasting effort on a task that won't be completed and working
inefficiently for a while on a task that is completed -- happens for
some average daily benefit. To state the result, I now define the time
\(\tau_0(E, B)\), which is (roughly) the fraction of days on which the
person doesn't work at all; and \(\tau_{F}(E, B)\) which is (roughly)
the fraction of days on which the person works efficiently on the task.
Formally:

\begin{definition}
Let $\tau_0(E_0, B_0) := \inf \{ \tau \in [0,1]: G(x, 0, E_{x}(E_0, B_0)) < B_0 \text{ } \forall x < 1 - \tau \}$. Thus for any $x < 1 - \tau_{0}$, we have $e_{x} > 0$.

Let $\tau_F(E_0, B_0) := \inf \{ \tau \in [0, 1]: G(x, \frac{E_{x}(E_0, B_0)}{1 - x}, E_{x}(E_0, B_0)) > B_0 \text{ } \forall x < 1 - \tau \}$. Thus for any $x < 1 - \tau_{F}$, the person does not work fully on the task (either not at all or only partially).
\end{definition}

We can now formally state the proposition.

\begin{proposition}
\label{prop:multi-period-task-continuous}
The disutility of effort is strictly convex, with $D''(\cdot) > d$ for some $d > 0$ and $\lim_{e \to \infty} D'(e) = \infty$. Consider a multi-period all-or-nothing task $(E_0, B_0)$. Then there exist $B_H(E_0) > B_C(E_0) > B_L(E_0) > 0$ such that:
\begin{itemize}
    \item if $B_{0} > B_H$, then the task is completed efficiently, i.e. $\tau_F = 1$.
    \item if $B_H> B_{0} > B_C$, then $\tau_F(E_0, B_0) \in (0, 1)$ and the task is completed.
    \item if $B_C > B_0 > B_L$, then $\tau_0(E_0, B_0) \in (0, 1)$ and the task is not completed.
    \item if $B_L > B_0$, then no effort is spent on the task, i.e. $\tau_0 = 1$.
\end{itemize}
with $\tau_{0}$ continuous and decreasing in $B_0$, and $\tau_{F}$ continuous and increasing in $B_0$.

Moreover $\lim_{B_0 \to B_C^{-}} u_0(E_0, B_0) \leq - D(E_0)$.
\end{proposition}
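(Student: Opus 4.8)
The plan is to recast the continuous-time problem as a well-posed scalar ODE, read off its regime structure, obtain monotone comparative statics in $B_0$, and then get the welfare bound from Jensen's inequality; essentially all the work sits in the structural part, and the welfare claim falls out at the end.

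First I would change variables to the ``required flow'' $f_x := E_x/(1-x)$ and record that the flow effort is $e_x = \min\{f_x,\ \max\{0,\ e^{*}(x,E_x)\}\}$, where $e^{*}(x,E_x)$ is the unique root of $(1-\alpha)D(f_x)+\alpha D'(e^{*})f_x = B_0/(1-x)$; this root exists and depends smoothly (hence locally Lipschitz) on $(x,E_x,B_0)$ because $D''>d>0$. Substituting $\dot E_x = -e_x$ gives $\dot f_x = (f_x-e_x)/(1-x)$, which makes the three regimes transparent. If $e_x=0$ then $\dot f_x>0$ and a one-line computation yields $\frac{d}{dx}G(x,0,E_x) = (1-\alpha)\big(f_x D'(f_x)-D(f_x)\big)\ge 0$, so ``no work'' is absorbing and the task is not completed. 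If $e_x=f_x$ then $\dot f_x = 0$, so $f_x$ stays constant, $G(x,f_x,E_x)$ is decreasing, ``full work'' is absorbing, $E_x\to 0$, and the task is completed --- completed \emph{efficiently} exactly when full work is played from $x=0$. Otherwise the path is in the interior regime, governed by a Lipschitz right-hand side, so it has a unique $C^1$ solution on $[0,1)$ with a well-defined limit $E_1 := \lim_{x\to 1}E_x\in[0,E_0]$. Evaluating the regime conditions at $x=0$ (where $f_0 = E_0$) pins down $B_L = (1-\alpha)D(E_0)+\alpha D'(0)E_0$ and $B_H = (1-\alpha)D(E_0)+\alpha D'(E_0)E_0$, with $0<B_L<B_H$ by strict convexity: for $B_0\le B_L$ she is in ``no work'' from the start (so $\tau_0=1$), for $B_0\ge B_H$ she is in ``full work'' from the start (so $\tau_F=1$ and completion is efficient), and for $B_0\in(B_L,B_H)$ she starts interior and can leave only into one of the two absorbing regimes, or never.

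Next I would establish the comparative statics. Since $e_x$ is nondecreasing in $B_0$ at fixed $(x,E_x)$, the standard ODE comparison theorem (using the Lipschitz bound) gives $E_x(B_0)$ nonincreasing in $B_0$ for every $x$, hence $E_1(\cdot)$ nonincreasing, and the entry times into the absorbing regimes monotone in $B_0$; continuous dependence of solutions on $B_0$ makes $E_1$ and those entry times continuous. Because $E_1 = E_0$ for $B_0\le B_L$ and $E_1=0$ for $B_0\ge B_H$, I would set $B_C := \inf\{B_0 : E_1(B_0)=0\}$; a short monotonicity argument gives $B_C\in(B_L,B_H)$, and continuity gives $E_1(B_C)=0$. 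For $B_0>B_C$ the path reaches ``full work'' (so $\tau_F\in(0,1)$ on $(B_C,B_H)$, $\tau_F=1$ on $[B_H,\infty)$, and the task is completed); for $B_0\in(B_L,B_C)$ it reaches ``no work'' at some time $1-\tau_0\in(0,1)$ and the task is not completed; for $B_0\le B_L$ one has $\tau_0=1$. Assigning the remaining values $\tau_0=0$ for $B_0\ge B_C$ and $\tau_F=0$ for $B_0\le B_C$, the monotonicity ($\tau_0$ decreasing, $\tau_F$ increasing) and continuity of $\tau_0$ and $\tau_F$ are inherited from the monotone continuous dependence just described. The step I expect to be hardest is the behavior as $x\to1$ and at $B_0=B_C$: I need $\tau_0(B_0)\to 0$ (equivalently, the stopping time $\to 1$) and $E_{1-\tau_0(B_0)}\to 0$ as $B_0\uparrow B_C$ --- i.e. the qualitative outcome flips continuously rather than jumping --- even though the required flow $f_x$ blows up there. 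I would handle this by combining the monotone bounds with the stopping-point identity $E_{1-\tau_0} = B_0/\big((1-\alpha)D(f)/f+\alpha D'(0)\big)$ where $f = E_{1-\tau_0}/\tau_0\to\infty$, so that $D(f)/f\to\infty$ and hence $E_{1-\tau_0}\to 0$.

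Finally, for the welfare bound: for $B_0\in(B_L,B_C)$ the task is not completed, so $u_0(E_0,B_0) = -\int_0^1 D(e_x(B_0))\,dx$ (with $e_x(B_0)=0$ past the stopping time and $D(0)=0$), while the total effort is $\int_0^1 e_x(B_0)\,dx = E_0 - E_1(B_0)$. Since $D$ is convex and $[0,1]$ carries a probability measure, Jensen's inequality gives $\int_0^1 D(e_x(B_0))\,dx \ge D\big(E_0 - E_1(B_0)\big)$, so $u_0(E_0,B_0)\le -D\big(E_0 - E_1(B_0)\big)$. Letting $B_0\uparrow B_C$ and using $E_1(B_0)\to E_1(B_C)=0$ together with continuity of $D$ gives $\limsup_{B_0\to B_C^-} u_0(E_0,B_0)\le -D(E_0)$, the claimed inequality. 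This step is immediate once the structural results --- in particular the continuity of $E_1$ at $B_C$ --- are in hand.
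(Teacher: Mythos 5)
Your overall architecture is sound and in several places cleaner than the paper's: the change of variables to $f_x = E_x/(1-x)$ and the direct computation $\frac{d}{dx}G(x,0,E_x) = (1-\alpha)\bigl(f_xD'(f_x)-D(f_x)\bigr)\ge 0$ recover the paper's absorbing-regime lemma more transparently; your $B_L$ and $B_H$ are exactly the paper's; the Jensen step at the end makes explicit what the paper only gestures at ("the least disutility is $D(E_0-\varepsilon)$"); and your handling of $E_{1-\tau_0}\to 0$ via $D(f)/f\to\infty$ is the paper's Lemma on $D(e)/e\to\infty$ in different clothing. The genuinely different move is defining $B_C$ through the terminal value $E_1(B_0)$ rather than through the hitting times, which is a legitimate and arguably more natural bookkeeping device.

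But there is a real gap at the crux. Having noted that a trajectory starting interior "can leave only into one of the two absorbing regimes, or never," you then assert the classification --- full work is reached for $B_0>B_C$, no work for $B_0\in(B_L,B_C)$ --- from "a short monotonicity argument" plus continuity. Monotonicity of $E_1$ in $B_0$ only tells you that the task is completed above $B_C$ and not below; it does not exclude the "never" branch. For $B_0<B_C$ you can close this with ingredients you already have (if the path stayed interior forever with $E_x\ge E_1>0$, then $G(x,0,E_x)\ge(1-\alpha)E_1\cdot D\bigl(E_1/(1-x)\bigr)/\bigl(E_1/(1-x)\bigr)\to\infty$, contradicting interiority), but you do not assemble that argument. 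For $B_0>B_C$ the problem is harder: a trajectory could in principle stay interior for all $x<1$ with $E_x\to 0$ and $f_x\to\infty$, which would give a completed task with $\tau_F=0$, contradicting the second bullet of the proposition. Ruling out that this happens on a set larger than the single point $B_C$ is precisely the content of the paper's central argument, which compares the trajectories at $B_{C,0}=\inf\{B:\tau_0=0\}$ and $B_{C,F}=\limsup\{B:\tau_F=0\}$, shows $e_{x,F}>e_{x,0}$ and hence $E_{x,0}$ bounded away from zero, and derives a blow-up contradiction; nothing in your proposal plays that role. Relatedly, continuity of the hitting times $\tau_0,\tau_F$ is not automatic from continuous dependence of solutions on $B_0$ (hitting times can jump at tangential crossings); the paper proves it with a separate compactness argument, and you would need at least to invoke the strict monotonicity of $G$ along the frozen trajectories to rule out tangency.
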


The proposition states that for any an all-or-nothing task
\((E_0, B_0)\) there exist thresholds \(B_H > B_C > B_L > 0\) depending
on \(E_0\) such that:

\begin{enumerate}
  \item Beth never starts the task if $B_0 < B_{L}$;
  \item Beth spends some days working on the task but eventually gives up if $B_0 \in (B_{L}, B_{C})$;
  \item Beth spends early days working inefficiently on the task but eventually finishes the remainder efficiently if $B_0 \in (B_{C}, B_{H})$;
  \item Beth works on the task efficiently from the start and finishes it if $B_0 > B_{H}$
\end{enumerate}

Moreover, as \(B_0 \in (B_{L}, B_{C})\) increases, Beth works a larger
fraction of days on the task and wastes more time on a task she doesn't
finish. As \(B_0\) approaches \(B_{C}\) from below, Beth ends up doing
almost all the work, yet fails to complete the task, hence she incurs
almost the full cost of completing the task. When \(B_0\) increases on
\((B_{C}, B_{H})\), Beth works more earlier, which makes her better off,
as she completes the task more efficiently.

Notice that the large cost from repeatedly working and stopping can
occur for arbitrarily small biases, although the range of payments for
which inefficient work happens becomes smaller the less biased a person
is. Thus the likelihood of mistakes decreases, but the range of possible
costs does not.

\hypertarget{productivity-and-effort-allocation}{%
\section{\texorpdfstring{Productivity and Effort Allocation
\label{sec:when-to-work}}{Productivity and Effort Allocation }}\label{productivity-and-effort-allocation}}

In previous sections, I assumed that the benefits are not projected,
since the value of money does not fluctuate much with tiredness. In this
section, I relax this assumption indirectly by considering situations
where a total output has to be produced, but the productivity (output
per hour of work) is either increasing or decreasing across periods. In
such a situation, the benefit of working consists in having less work in
the future.

\hypertarget{productivity-and-time-discounting}{%
\subsection{Productivity and Time
Discounting}\label{productivity-and-time-discounting}}

Let us consider a setting where people have different productivities on
different days, and let us start with a warm-up example. Doris has to
complete an assignment by tomorrow night that requires \(E\) hours. Her
productivity \(p\) on day \(1\) is twice as high as on day 2, because a
friend has offered to give feedback at the end of day 1. Thus every hour
of work exerted on day \(1\) leads to \(p = 2\) hours worth of output,
so she has to choose \(e_{1}\) and \(e_{2}\) s.t.
\(2 \cdot e_{1} + e_{2} = E\). On the first day after having worked for
\(s\) hours, she plans to stop after completing \(e_1(s)\) hours today,
given by

\begin{align}
    & \frac{\tilde{D}'(e_1(s) | s)}{p} = \tilde{D}'(e_2(s)|s) \notag \\
    \iff & \frac{\tilde{D}'(e_1(s) | s)}{p} - \tilde{D}'(e_2(s)|s) = 0 \notag \\
    \iff & (1 - \alpha) \frac{1}{p} \cdot D'(e_1(s)) + \alpha \frac{1}{p} \cdot D'(s) - (1 - \alpha) D'(e_2(s)) + \alpha D'(s) = 0 \notag \\
    \iff & \frac{1}{p} \cdot D'(e_1(s)) - D'(e_2(s)) = -\frac{\alpha}{1 - \alpha} D'(s) (\frac{1}{p} - 1) \notag \\
\end{align}

which shows that \(e_1(s) > e_1^{*}\), since
\(\frac{1}{p} = \frac{1}{2} < 1\) and \(D'(s) > 0\): the LHS has to
increase as \(\alpha\) increases, which requires that either
\(e_{1}(s)\) increases or \(e_{2}(s)\) decreases. But if \(e_{2}(s)\)
decreases, then \(e_{1}(s)\) increases and similarly if \(e_1(s)\)
increases then \(e_2(s)\) decreases given the constraint on total
effort. She stops working when her current perceived plan is equal to
(or less than) what she has done, that is when \(e_1(s) = s\).
Substituting this into the above equations we get:

\begin{equation*}
    D'(\tilde{e}_1) \left( \frac{1}{p} + \frac{\alpha}{1 - \alpha} (\frac{1}{p} - 1) \right) = D'(\tilde{e}_2) \iff \frac{D'(\tilde{e}_1)}{p} \frac{1 - p \alpha}{1 - \alpha} = D'(\tilde{e}_2)
\end{equation*}

and so she acts as if her productivity was
\(\frac{1 - \alpha}{1 - p \alpha} p > p\), so that she works too much on
day 1. This result is a special case of the proposition
\ref{prop:minimize-productivity-effort}.

\begin{proposition}
\label{prop:minimize-productivity-effort}
    Let $D(.)$ be a strictly convex function with $D'(\cdot) \to \infty$ and $D'(0)$, and let $\alpha \in [0, 1)$. Consider a person who works momentary each of $T$ periods under the constraint $E = \sum_{t = 1}^T p_t \cdot e_t$, where $p_t$ is her (known, exogenously given) productivity on day $t$. Denote by $\tilde{E}_t^{*}$ the total amount of work done by the beginning of day $t$ and by $E_{t}^{*}$ the optimal total amount of work done at the beginning of day $t$.
\begin{itemize}
  \item If $p_t$ is strictly increasing, then $\tilde{E}_t^{*} \leq E_t^{*}$ $\forall t > 1$.
  \item If $p_t$ is strictly decreasing, then $\tilde{E}_t^{*} \geq E_t^{*}$ $\forall t > 1$.
\end{itemize}
    Let $\tilde{e}_{t+1|t}^{*}$ denote the amount of work the person plans, at the end of day $t$, to exert on day $t + 1$. Then $\tilde{e}_{t+1|t}^{*} \leq \tilde{e}_{t+1}^{*}$, that is she works less on day $t + 1$ than predicted at the end of day $t$. When productivity is increasing, this change of plan moves her further away from optimal effort that day; when productivity is decreasing, this change of plan moves her closer to optimal effort that day.
\end{proposition}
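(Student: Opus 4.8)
The plan is to characterize the momentary decision on each day as a first-order condition, compare it to the optimal (unbiased) first-order condition, and then push the comparison forward day by day.

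First I would set up the optimality conditions. The optimal allocation equates the marginal disutility per unit of output across all days: $D'(e_t^*)/p_t = D'(e_{t'}^*)/p_{t'} = \lambda$ for some multiplier $\lambda$, subject to $E = \sum_t p_t e_t^*$. For the biased agent who has worked $s$ hours on day $t$ (so her total so far is $\tilde E_t^* + s$), the momentary plan $(e_t(s), (e_j(s))_{j>t})$ solves the analogous program but with $\tilde D'(\cdot\,|\,\tilde E_t^* + s)$ in place of $D'$, giving $\tilde D'(e_t(s)\,|\,\cdot)/p_t = \tilde D'(e_j(s)\,|\,\cdot)/p_j$ for all remaining $j$, subject to the residual output constraint $E - \sum_{r<t} p_r \tilde e_r^* = p_t e_t(s) + \sum_{j>t} p_j e_j(s)$. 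She stops on day $t$ when $e_t(s) = s$, which pins down $\tilde e_t^*$.

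Next I would do the single-day comparative-statics step, generalizing the warm-up: expanding $\tilde D'$ as $(1-\alpha)D'(\cdot) + \alpha D'(s)$, the biased condition becomes $D'(e_t)/p_t - D'(e_j)/p_j = -\tfrac{\alpha}{1-\alpha} D'(s)\big(\tfrac1{p_t} - \tfrac1{p_j}\big)$. When $p_t < p_j$ (productivity rising), the right side is negative, forcing the agent, relative to the unbiased split, to shift effort toward the \emph{later} (more productive) day — so on day $t$ she does less than optimal given the residual task, hence carries more work forward: $\tilde E_{t+1}^* \le E_{t+1}^*$ will follow. When $p_t > p_j$ the inequality flips. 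The key technical point is that this holds not just against the true optimum but against the residual-optimum from whatever state she is in; I would then induct on $t$. Strict convexity with $D'' > 0$ gives monotonicity of $D'$ and uniqueness of each day's solution, and $D'(\cdot)\to\infty$ guarantees interior solutions so the FOCs are valid; $D'(0)$ finite (as hypothesized, reading "$D'(0)$" as "$D'(0)$ finite") is what makes $\alpha D'(s)$ vanish at $s=0$ degenerate into the correct limiting behavior and keeps the corrections bounded.

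For the change-of-plans statement: at the end of day $t$ the agent's current tiredness is $\tilde e_t^*$, strictly positive, so her planned day-$(t+1)$ effort $\tilde e_{t+1|t}^*$ solves the residual program with $\tilde D'(\cdot\,|\,\tilde e_t^*)$; but when day $t+1$ actually arrives she is rested, $s=0$, and re-solves with $\tilde D'(\cdot\,|\,0) = (1-\alpha)D'(\cdot) + \alpha D'(0)$. Since $D'(\tilde e_t^*) > D'(0)$, the day-$t$-evening perception inflates the marginal disutility of \emph{all} future days uniformly by $\alpha(D'(\tilde e_t^*) - D'(0))$ per hour; I would argue that this uniform upward shift, combined with the fact that on day $t+1$ itself she re-optimizes from rest, makes her actual day-$(t+1)$ effort exceed what she planned the night before — i.e. $\tilde e_{t+1|t}^* \le \tilde e_{t+1}^*$. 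Then I locate $\tilde e_{t+1}^*$ and $\tilde e_{t+1|t}^*$ relative to the optimum-given-residual $e_{t+1}^*(\text{residual})$: the first two bullets say the \emph{actual} path overshoots or undershoots the optimal cumulative effort, and I would check that the night-before plan lies on the far side, so that the realized downward revision in effort moves her toward optimum when productivity is falling (she was doing too much) and away when productivity is rising (she was already doing too little and now does even less).

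The main obstacle I anticipate is the induction bookkeeping: the residual output constraint changes each day in a way that itself depends on past biased choices, so "closer to / further from optimal" must be stated relative to a moving target, and I need the single-day comparative static to be robust to starting from an off-optimal-path state. Handling that cleanly — probably by a monotone-comparative-statics argument on the residual program, showing that a uniform multiplicative/additive distortion of $D'$ shifts the whole effort vector in one direction — is where the real work lies; the algebra in the warm-up two-day case is the template, and extending it to $T$ periods via the shared multiplier $\lambda$ should go through once that lemma is in place.
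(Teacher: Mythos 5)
Your setup and the first half of the argument track the paper's proof: the Lagrangian first-order conditions $D'(e_t)/p_t = \lambda$ versus the perceived conditions with $\tilde D'(\cdot|s)$, the observation that the projection term enters as $-\frac{\alpha}{1-\alpha}D'(s)\bigl(\tfrac{1}{p_t}-\tfrac{1}{p_j}\bigr)$ and tilts effort toward high-productivity days, and the stopping rule $e_t(s)=s$ all match. The induction for $\tilde E_t^{*} \lessgtr E_t^{*}$ is where you correctly anticipate the difficulty but do not supply the argument: the paper resolves the ``moving target'' problem by comparing the biased agent to a hypothetical agent who becomes unbiased at the start of day $t$ and then invoking convexity to rule out the cumulative-effort ranking ever reversing. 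That is a real missing step, but you have at least located it.

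The genuine error is in the change-of-plans part. You conclude that actual day-$(t+1)$ effort \emph{exceeds} the evening-of-day-$t$ plan, on the grounds that the tired evening self ``inflates the marginal disutility of all future days uniformly'' while the day-$(t+1)$ self re-optimizes from rest. Both the direction and the mechanism are wrong (and your final paragraph then silently switches to a ``downward revision,'' contradicting your own claim two sentences earlier). First, a uniform additive shift to marginal disutility would be absorbed into the multiplier and have \emph{no} effect under a fixed-output constraint; the bias matters precisely because the term enters the FOC as $\alpha D'(s)/p_j$, which is \emph{non}-uniform across days and pushes planned hours toward high-productivity days more strongly the larger $D'(s)$ is. Second, the realized $\tilde e_{t+1}^{*}$ is not the rested morning plan --- it is pinned down by the stopping condition, evaluated at tiredness $D'(\tilde e_{t+1}^{*})$ at the end of day $t+1$. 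The paper's Step 3 compares the two multipliers $\lambda_2$ (plan formed at tiredness $D'(\tilde e_1^{*})$) and $\mu_2$ (stopping condition at tiredness $D'(\tilde e_2^{*})$), both subject to producing the same residual output over days $2,\dots,T$, and shows by contradiction that $\lambda_2>\mu_2$, hence $\tilde e_{2|1}^{*}>\tilde e_2^{*}$: she does \emph{less} than planned, in both the increasing and the decreasing productivity case. (You were likely misled by the typo in the displayed inequality of the proposition, but the verbal statement, the Betsy example --- plans $3.03$, does $2.75$ --- and the proof all give actual $<$ planned.) Without the correct comparison of stopping-time tiredness across consecutive days, the last claim of the proposition, about moving toward or away from the optimum, cannot be established from your argument as written.
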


Here is an example of proposition
\ref{prop:minimize-productivity-effort} in action. Betsy has to complete
an assignment that would take her \(E = 18\) hours of work if each day
she was as productive as she is today. Fortunately for her, she has
lectures tomorrow and office hours the day after that: during lectures
and office hours she will learn shortcuts for completing the problems on
the assignment. Specifically, every hour of work done tomorrow is worth
\(1.5\) hours of work today, while every hour of work done in two days
is worth \(1.5\) hours of work tomorrow.

Obviously, she should work less today than tomorrow, since she will
become more efficient at solving questions. Suppose that her disutility
is quadratic. The optimal effort levels should satisfy the first order
conditions
\(D'(e_1^{*}) = \frac{1}{p} D'(e_2^{*}) = \frac{1}{p^2} D'(e_3^{*})\),
which leads to \(e_1^{*} \approx 2.2\), \(e_2^{*} \approx 3.33\), and
\(e_3^{*} \approx 4.95\). On day \(1\), Betsy instead solves her
perceived first order conditions, which we can derive as in the
\(2\)-day case to be
\(D'(\tilde{e}_1) = \frac{1 - \alpha}{1 - \alpha \frac{1}{p}} \frac{1}{p} D'(\tilde{e}_{2|1}) = \frac{1 - \alpha}{1 - \alpha \frac{1}{p^2}} \frac{1}{p^2} D'(\tilde{e}_{3|1})\),
where \(\tilde{e}_{i|1}\) indicates that it is the effort Betsy
perceives to be optimal at the end of day \(1\). These are given by
\(\tilde{e}_1 \approx 1.52\), \(\tilde{e}_{2|1} \approx 3.03\), and
\(\tilde{e}_{3|1} \approx 5.29\).

Yet, on day \(2\) she will not do what she thought she would do. She
solves her new perceived first order condition, which is now exactly as
in the \(2\)-day case, taking into account that she worked roughly
\(1.52\) hours on day \(1\):
\(D'(\tilde{e}_2) = \frac{1 - \alpha}{1 - \alpha \frac{1}{p}} \frac{1}{p} D'(\tilde{e}_3)\).
Solving this, we find that \(\tilde{e}_2 \approx 2.75\) and that
\(\tilde{e_3} \approx 5.50\). Betsy was already planning to work less
than she should, planning to do \(3.03\) instead of \(3.33\), yet she
ends up doing even less, namely \(2.75\). Thus, Betsy postpones too much
work, and at the end of day 1 she thinks that she will have done more by
the end of day \(2\) than will be the case. The reason is that the more
tired she is, the more Betsy wants to do effort on a productive day --
which leads to postponing more work in this setting. Betsy correctly
understands that doing \(1\) hour less of work requires her to do \(40\)
minutes more work tomorrow. Thus she saves \(20\) minutes, which she
perceives as more unpleasant the more unpleasant effort is right now.
Therefore she is willing to delay more work until tomorrow to take
advantage of her higher productivity. Since tomorrow she will work more,
she will be more tired at the end of the day when she decides to stop,
and therefore she will want to delay more at the end of day \(2\) than
at the end of day \(1\) and stops working earlier than anticipated.

Since a person with constant productivity and exponential discounting
with discount factor \(\delta\) solves an identical problem to a person
whose whose productivity increases by \(\frac{1}{\delta}\) each period,
the next corollary follows immediately.

\begin{corollary}
    Let $D(.)$ be a strictly convex function with $D'(\cdot) \to \infty$ and $D'(0)$, and let $\alpha \in [0, 1)$. Consider a person who works momentary each of $T$ periods under the constraint $E = \sum_{t = 1}^T e_t$ and discounts disutility exponentially with discount factor $\delta$. Then $\tilde{E}_t^{*} \leq E_t^{*}$ $\forall t > 1$, and $\tilde{e}_{t+1|t}^{*} \geq \tilde{e}_{t+1}^{*}$.
\end{corollary}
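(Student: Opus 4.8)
The plan is to derive the corollary from Proposition~\ref{prop:minimize-productivity-effort} via the reinterpretation announced just before it: discounting disutility at rate $\delta$ is, for the purposes of the cross-day marginal trade-off, the same as being increasingly productive. First I would spell out the discounting agent's momentary problem. On day $t$, after having worked $s$ hours, she solves $\min \sum_{j\geq t}\delta^{j-t}\,\tilde D(e_j\mid s)$ subject to $\sum_{j\geq t}e_j = E_t$ and $e_t\geq s$; being naive, she treats $\tilde D(\cdot\mid s)$ as permanent and, exactly as in Definition~\ref{def:single-task-decision} applied period by period, stops the first instant continuing is perceived strictly suboptimal. Her perceived cross-day first-order condition is $\delta^{j-t}\tilde D'(e_j\mid s)=\text{const}$, i.e.\ $\tilde D'(e_j\mid s)/\hat p_j=\text{const}$ with $\hat p_j:=\delta^{-(j-1)}$, and since $\delta\in(0,1)$ the sequence $\hat p_j$ is strictly increasing.

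Next I would establish that the path of realized effort $(\tilde e_t^{*})_t$, the running totals $(\tilde E_t^{*})_t$, and the within-day continuation plans $(\tilde e^{*}_{t+1\mid t})$ of this agent coincide with those generated by a person whose cross-day marginal-cost trade-offs are governed by the strictly increasing weights $\hat p_t$ — that is, the increasing-productivity case of Proposition~\ref{prop:minimize-productivity-effort}. The one point requiring care is that the resource constraint here is the unweighted $\sum_t e_t = E$ rather than the weighted $\sum_t p_t e_t=E$ of that proposition: the discount factor lands in the objective (equivalently the FOC weights), not the budget. I would dispatch this by checking that the proof of Proposition~\ref{prop:minimize-productivity-effort} uses the budget only to force a fixed total amount of work and to trade effort off across days, whereas the comparative-statics mechanism it relies on — as $s$ rises within a day, $\tilde D'(\cdot\mid s)$ shifts upward by a positive constant, which re-tilts the perceived-optimal allocation toward the days with larger $\hat p$, i.e.\ later days — is insensitive to how the budget is written; so the argument goes through verbatim with $p_t\mapsto\hat p_t$.

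Granting the reduction, the two conclusions drop out. Since the effective productivity $\hat p_t$ is increasing, the increasing-productivity branch gives $\tilde E_t^{*}\leq E_t^{*}$ for all $t>1$: each day the within-day re-tilting toward the future leaves her having done (weakly) less than optimal by the start of the next day. And the within-day statement of Proposition~\ref{prop:minimize-productivity-effort} — that as a day proceeds she keeps postponing work and so stops before reaching the level she had planned at the previous day's close — gives $\tilde e^{*}_{t+1\mid t}\geq\tilde e^{*}_{t+1}$.

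The main obstacle is exactly this reduction, i.e.\ justifying ``follows immediately.'' A literal change of variables turning the discounting problem into the weighted-budget problem of Proposition~\ref{prop:minimize-productivity-effort} exists only for power disutilities (e.g.\ $\hat e_t=\delta^{(t-1)/2}e_t$ when $D(e)=e^2/2$), and even then it interacts awkwardly with the projection-bias term $\alpha D'(s)\cdot e$ and with the identification of current tiredness $s$ as a running total; for general strictly convex $D$ one cannot simply relabel variables. The robust route is therefore to argue at the level of the first-order and stopping conditions — which are what drive Proposition~\ref{prop:minimize-productivity-effort}'s proof in any case — while keeping the effort variable $e_t$ pinned to the same meaning (hours on day $t$) in both problems, so that the inequalities transcribe without any sign reversal.
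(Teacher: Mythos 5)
Your proposal is correct and follows the same basic route as the paper: the paper's entire argument for this corollary is the single sentence preceding it, asserting that the discounting agent ``solves an identical problem'' to an agent whose productivity grows by $1/\delta$ each period, so that the increasing-productivity branch of Proposition~\ref{prop:minimize-productivity-effort} applies directly. Where you go beyond the paper is in flagging that this identification is not literally exact: with productivity the weights $p_t$ sit in the budget constraint $\sum_t p_t e_t = E$, whereas with discounting the weights $\delta^{t-1}$ sit in the objective and the budget stays unweighted, so the two problems share first-order conditions of the form $D'(e_t)/p_t = \lambda$ but have different feasible sets and hence different solutions. Your patch --- observing that the proof of Proposition~\ref{prop:minimize-productivity-effort} uses the budget only qualitatively (``planning strictly more effort on every day violates the output constraint''), which holds equally for the unweighted constraint, while all the substantive work is done at the level of the FOCs and the within-day shift of $\tilde D'(\cdot\mid s)$ --- is exactly the right way to make the paper's ``follows immediately'' rigorous, and your remark that a literal change of variables would only work for power disutilities is a fair criticism of taking ``identical problem'' at face value. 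One small point: you state the conclusion as $\tilde e^{*}_{t+1\mid t}\geq\tilde e^{*}_{t+1}$, which matches the corollary and the direction actually proved in Step~3 of the proposition's proof ($\lambda_2>\mu_2$, so planned effort exceeds realized effort); note that the proposition's own statement writes this inequality with the opposite sign, which is a typo there, not an error in your argument.
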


Loewenstein, O'Donoghue, and Rabin (2003) highlighted the potential for
projection bias to cause time inconsistent plans under habit formation,
that is in a case where the utility from consumption decreases the more
one has recently consumed. What my results highlight though is that
projection bias can cause time-inconsistent behavior much more
generally, in fact whenever there are incentives towards unequal effort
over time. Moreover, it shows that the departure from earlier plans can
improve choices when effort is decreasing.

It is important to note however that projection-bias leads to
time-inconsistency only if the states in which choices are made are
different, and that they are not driven by time-inconsistent preferences
but by a failure to predict how different future preferences will be
from current preferences. This differs from temptation models such as
Gul and Pesendorfer (2001) or models of present bias (Laibson (1997);
O'Donoghue and Rabin (1999)), both of which assume that the actual
preferences over future outcomes depend on the immediacy of the choices.
Therefore projection bias does not preferentially lead to present focus
(Ericson and Laibson (2019)), although it leads to overestimating how
much one will work, if the final decision is made when people are the
most tired. Thus committed high-effort choices may be the mistake,
rather than the failure to implement them. Nonetheless, through
magnifying pre-existing present bias or present focus, projection bias
is more likely to magnify such behaviors. Since this magnification is
larger the more tired people are at the time of making committed
choices, this may bias estimates of time preferences when comparing
across populations or across times, with more tired populations
appearing as more impatient -- although careful laboratory designs can
avoid such concerns by keeping (expected) tiredness constant across
choice elicitations (see Fedyk (2018), Augenblick and Rabin (2019), Le
Yaouanq and Schwardmann (2019)).

\hypertarget{discussion-and-conclusion}{%
\section{Discussion and Conclusion}\label{discussion-and-conclusion}}

Throughout this paper, I highlighted how projection bias causes will
turn changing tiredness into changing plans, which leads to inefficient
task management. I made two major assumptions on the instantaneous
disutility. First I assumed that a person either works or doesn't work,
ruling out intensity of effort. Second I assumed that the instantaneous
disutility depends only on total time a person has worked so far, ruling
out breaks and rest during a day. The major assumptions that I made
about projection bias were that people make momentary decisions and that
they are naive so that they never realize that they do not execute their
earlier plans.

Adding intensity and rest to the model will help to integrate projection
bias better into applied settings, which is part of ongoing work.
Conceptually, more limiting is the assumption that people learn
\emph{nothing} from their repeated fluctuations. Without such naiveté,
plans will not fluctuate as often nor as much. While I believe that
naiveté is more appropriate than is often assumed, it seems clear that
people sometimes do display a sense of meta-sophistication: they realize
that they repeatedly fall short of their own expectations, that they
behave inconsistently, yet without a clearly articulated cause for this
behavior. The more challenging question, and one that is relevant for
for all types of misperceptions, is thus what people learn or do when we
neither assume that people \emph{must} learn nor that they \emph{cannot}
learn.

\pagebreak

\appendix

\section{Proofs}

\subsection{Proofs for Section 3}

Proof of proposition \ref{prop:optimal-effort}.

\begin{proof}
Remember that the biased person stops at $\tilde{e}^{*} = \inf \{s: B(e) - \tilde{D}(e|s) < B(s) - \tilde{D}(s|s) \text{, } \forall e > s\}$. It is therefore enough to show that when $s < e^{*}$, the projection-biased person perceives it as better to continue working a little more, and that if $s > e^{*}$, she perceives it as strictly optimal to stop right away, since then $\tilde{e}^{*} = e^{*}$.

Notice then that for $s < e^{*}$, we have $\tilde{D}'(s|s) = D'(s) < D'(e^{*}) = B'(e^{*}) \leq B'(s)$, so that the current perceived marginal disutility is strictly lower than the current marginal benefit. Thus even the biased person perceives it as strictly optimal to work (at least) a little more. Hence $\tilde{e}^{*} \geq e^{*}$. But similarly, when $s > e^{*}$, we have that $\tilde{D}'(s|s) = D'(s) > D'(e^{*}) = B'(e^{*}) \geq B'(s)$, and moreover, even the projection-biased person realizes that the marginal disutility will only increase and the marginal benefit decrease. Hence it is strictly better to stop right now rather than work more, so that $\tilde{e}^{*} \leq e^{*}$. Together, these imply that $\tilde{e}^{*} = e^{*}$.
\end{proof}

Proof of proposition \ref{prop:multi-tasking-single-day}:

\begin{proof}
The person -- by assumption -- first works on the first task, and then on the second task. After having worked on the first task for a time $s$, she plans on working $\tilde{e}_{1}^{*}(s)$ on task 1 and $\tilde{e}_{2}^{*}(s)$ on task 2 given by the following first order conditions:

\begin{equation*}
    \tilde{D}'(\tilde{e}_1^{*}(s) + \tilde{e}_2^{*}(s) | s) = B_1^{'}(\tilde{e}_1^{*}(s)) = B_2^{'}(\tilde{e}_2^{*})
\end{equation*}

Then she switches from working on task 1 to working on task 2 at time $\tilde{e}_1^{*}$, at which time she has to perceive her current marginal benefit from task 1 to be equal to her \emph{perceived} future marginal disutility at the end of the period:\footnote{The argument is similar to that in proposition \ref{prop:optimal-effort}: for any $s$ before the time where the perceived first order condition holds, the person considers working strictly more as strictly better. For any $s$ after that time, they perceive it as strictly worse. Hence they stop when the first order condition, as perceived in that moment, holds.}:
\begin{equation*}
    \tilde{D}'(\tilde{e}_1^{*} + \tilde{e}_{2|1}^{*} | \tilde{e}_1^{*}) = B_1^{'}(\tilde{e}_1^{*}) = B_2^{'}(\tilde{e}_{2|1}^{*})
\end{equation*}
where $\tilde{e}_{2|1}^{*}$ is the amount she plans to work on task $2$ at the time when she switches. Note that when $B_1^{'}(\tilde{e}_1^{*}) = B_2^{'}(\tilde{e}_{2|1}^{*})$ then the following are equivalent: $\tilde{e}_1^{*} > e_1^{*}$; $\tilde{e}_{2|1}^{*} > e_2^{*}$; and $\tilde{e}_1^{*} + \tilde{e}_{2|1}^{*} > e_1^{*} + e_2^{*}$, since the $B_i^{*}$ are strictly concave and disutilities are strictly convex. That is, if she plans to work more on the first task than is optimal, then she also plans to work more on the second task than is optimal and vice versa; and hence both of these imply, and are implied by, her planning to work more in total than is optimal.

We can now show that $\tilde{e}_1^{*} > e_1^{*}$. Suppose not. Then she plans to work less in total than is optimal and we get:
\begin{align*}
    \tilde{D}'(\tilde{e}_1^{*} + \tilde{e}_{2|1}^{*} | \tilde{e}_1^{*}) & < \tilde{D}'(\tilde{e}_1^{*} + \tilde{e}_{2|1}^{*} | \tilde{e}_1^{*} + \tilde{e}_{2|1}^{*}) \\
    & = D'(\tilde{e}_1^{*} + \tilde{e}_{2|1}^{*}) \\
    & \leq D'(e_1^{*} + e_2^{*}) \text{, since total optimal effort is larger then total planned effort} \\
    & = B'(e_1^{*}) \\
    & \leq B'(\tilde{e}_1^{*}) \text{, since } \tilde{e}_{1}^{*} \leq e_{1}^{*}
\end{align*}
which shows that it does not satisfy the first order condition. Thus $\tilde{e}_1^{*} > e_1^{*}$.

Once she switches, she keeps working on the second task. While she wants to reduce $e_1$, she can no longer do so, and thus takes $\tilde{e}_1^{*}$ as a given. Thus she now simply solves the first order condition
\begin{equation*}
    \tilde{D}'(\tilde{e}_1^{*} + \tilde{e}_2^{*}(s) | s) = B_2^{'}(\tilde{e}_2^{*}(s))
\end{equation*}
and as before, she stops once this holds her current $s$ equal to final total effort, $s = \tilde{e}_1^{*} + \tilde{e}_2^{*}$:
\begin{align*}
    & \tilde{D}'(\tilde{e}_1^{*} + \tilde{e}_2^{*} | \tilde{e}_1^{*} + \tilde{e}_2^{*}) = B_2^{'}(\tilde{e}_2^{*}) \\
    & \iff D'(\tilde{e}_1^{*} + \tilde{e}_2^{*}) = B'(\tilde{e}_2^{*}) \\
    & \iff D'(\tilde{e}_1^{*} + \tilde{e}_2^{*}) - B'(\tilde{e}_2^{*}) = 0 \\
    & \iff D'(e_1^{*} + x + \tilde{e}_2^{*}) - B'(\tilde{e}_2^{*}) = 0 \text{ for } x = \tilde{e}_{1}^{*} - e_{1}^{*} > 0
\end{align*}
Since $B'(\tilde{e}_{2}^{*}) = D'(e_1^{*} + e_2^{*} + x) > D'(e_1^{*} + e_2^{*}) = B'(e_2^{*})$, we have that $\tilde{e}_2^{*} < e_2^{*}$. Therefore, $D'(\tilde{e}_1^{*} + \tilde{e}_2^{*}) =  B'(\tilde{e}_2^{*}) > B'(e_2^{*}) = D'(e_1^{*} + e_2^{*})$, so that $\tilde{e}_1^{*} + \tilde{e}_2^{*} > e_1^{*} + e_2^{*}$ and we are done.
\end{proof}

Here is the proof of proposition \ref{prop:convex-single-day}.

\begin{proof}
Let $\tilde{R}(E|s) := \tilde{D}(E|s) - \tilde{D}(s|s)$, the perceived remaining disutility of completing the task after $s$ hours of work have already been completed. Before proving the main result, we have to deal with a technicality: the perceived optimal effort $\tilde{e}^{*}(s)$ at time $s$ is not necessarily unique as assumed in Section \ref{sec:setup}, since the person is indifferent between completing the task and stopping right away when $\tilde{R}(E|s) = B$. Thus in principle, the final outcome may be ambiguous, depending which way the indifference is broken. However, generically indifference will not matter. Let $s_{0}$ be the minimum $s$ such that $\tilde{R}(E|s_{0}) = B$. Such a minimum exists, since $\tilde{R}(E|\cdot)$ is continuous and hence the infimum is a minimum. But then, since $\tilde{R}(E|s)$ is increasing towards $B$, this means that generically the derivative of $\tilde{R}(E|\cdot)$ with respect to $s$ is strictly positive at $s_{0}$, so that $\tilde{R}(E|s_{0} + \varepsilon) > B$, and hence the person never works past $s_{0}$. Hence, the indifference at $s_{0}$ does not affect outcomes and the person stops at $s_{0}$. In what follows, I therefore assume that the person works as long as $\tilde{R}(E|s) < B$ and never works when $\tilde{R}(E|s) > B$, ignoring what happens at the non-generic points where $\tilde{R}(E|s) = B$.

The first part of the proposition claims that for all $E > 0$, if $B \in (\tilde{D}(E|0), D(E))$, then the person starts working on the task even though the task is not worth doing. Notice that $\tilde{D}(E|0) = (1 - \alpha) D(E) + \alpha D'(0) \cdot E < D(E)$, since $D(E) = \int_0^E D'(s) ds > \int_0^E D'(0) ds = D'(0) \cdot E$, as $D(\cdot)$ is strictly convex and $E > 0$. Thus for $B \in (\tilde{D}(E|0), D(E))$, since $\tilde{R}(E|0) = \tilde{D}(E|0)$ and since $\tilde{R}(E|s)$ is continuous in $s$, we have that $\tilde{R}(E|\varepsilon) < B$ for some sufficiently small $\varepsilon > 0$, so that the person will work at least for a time $\varepsilon$. This proves the first part of the proposition.

Let $\tilde{R}_{max} := \max_{s} \tilde{R}(E|s)$ the worst perceived remaining disutility for the biased person. Of course, for an unbiased person, the worst remaining disutility is always at the start when the most work remains to be done, but this won't necessarily hold for projection-biased people. The more she works, the more tired she gets, which makes her perceive the remaining work as worse than she perceived it earlier. Notice that if $\tilde{R}_{max}(E) < B$, then the remaining task is always perceived worth doing and therefore is completed. If $\tilde{R}_{max}(E) > B$, then the task is definitely not completed, since at some point the person perceives it not worth doing. Finally, if $\tilde{R}(E|0) < B$ and $\tilde{R}_{max}(E) > B$, then the person starts the task, but does not complete it.

Let $\mathcal{E} := \{E \geq 0: \tilde{R}_{max}(E) > \tilde{R}(E|0) \}$, the set of tasks for which the worst perception of the task happens after exerting some effort. I will show that $\mathcal{E} = (E_H, \infty)$ for some finite $E_H > 0$, which proves that if $E > E_H$, then we can pick $B$ in the non-empty interval $(\tilde{R}(E|0), \tilde{R}_{max}(E))$ and the person starts the task but fails to complete it. Moreover, I will show that if $E < E_H$, then $\tilde{R}(E|0) > \tilde{R}(E|s)$ $\forall s \in (0, E]$, which means that, for such tasks, if the person starts the task, she also completes it. These results then establish both the second and third part of the proposition.

First, let us show that $\mathcal{E}$ is not the empty set. Pick some $s > 0$ such that $D'(s) > 0$. Notice that $\tilde{R}(E|s) - \tilde{R}(E|0) = \tilde{D}(E|s) - \tilde{D}(s|s) - \tilde{D}(E|0) = (1 - \alpha) ( D(E) - D(s) - D(E) ) + \alpha E (D'(s) - D'(0)) + \alpha s D'(s) = - D(s) (1 - \alpha)  + \alpha s D'(s) + \alpha E (D'(s) - D'(0))$. Since $D'(s) - D'(0) > 0$, this expression becomes positive for sufficiently large $E$, say for $E > \bar{E}$, so that $\tilde{R}(E|s) - \tilde{R}(E|0) > 0$ for all $E > \bar{E}$. Thus $\mathcal{E}$ is not empty.

Further, from $\tilde{R}(E|s) - \tilde{R}(E|0) = - D(s) (1 - \alpha) + \alpha s D'(s) + \alpha E (D'(s) - D'(0))$ we immediately see that this expression is strictly increasing in $E$. Hence if $\tilde{R}_{max}(E) -  \tilde{R}(E|0) > 0$ and $E' > E$, then there is some $s$ such that $\tilde{R}(E|s) - \tilde{R}(E|0) > 0$ by definition of $\tilde{R}_{max}(E)$. Thus $\tilde{R}(E'|s) - \tilde{R}(E'|0) > 0$ and thus $\tilde{R}_{max}(E') - \tilde{R}(E'|0) > 0$. Therefore if $E \in \mathcal{E}$, then $E' \in \mathcal{E}$. Let $E_H = \inf \mathcal{E}$. Then if $E > E_H$, by definition of $E_H$, there is some $E' \in (E_H, E)$ s.t. $E' \in \mathcal{E}$. Therefore all $E > E_{H}$ are in $\mathcal{E}$. 

Moreover, $E_H \notin \mathcal{E}$, since either $E_H = 0$ (in which case it is obvious) or $E_H > 0$. If $E_H > 0$ and $E_H \in \mathcal{E}$, then $\tilde{R}(E_H|s) > \tilde{R}(E_H|0)$ for some $s > 0$, and thus $\tilde{R}(E_H - \varepsilon|s) > \tilde{R}(E_H - \varepsilon|0)$ for sufficiently small $\varepsilon$, which implies that $E_{H} - \varepsilon \in \mathcal{E}$. This contradicts the definition of $E_H$ as $\inf \mathcal{E}$. 

Finally, note that when $E < E_H$, we must have that $0 > \tilde{R}(E|s) - \tilde{R}(E|0)$ $\forall s > 0$. If not, then $\tilde{R}(E|s) - \tilde{R}(E|0) = 0$ for some $s$ and we know that the LHS strictly increases in $E$, which would imply that $E_H \in \mathcal{E}$. And thus we are done.
\end{proof}

\subsection{Proofs for Section 4}

Proof of proposition \ref{prop:multi-tasking-single-day}:

\begin{proof}
The person -- by assumption -- first works on the first task, and then on the second task. Consider the time $t^{*}$ that satisfies the following:
\begin{equation*}
    \tilde{D}'(t^{*} + \tilde{e}_{2|1}^{*} | t^{*}) = B_1^{'}(t^{*}) = B_2^{'}(\tilde{e}_{2|1}^{*})
\end{equation*}
where $\tilde{e}_{2|1}^{*}$ is the amount she plans to work on task $2$ at time $t^{*}$. Then I claim that $t^{*} = \tilde{e}_{1}^{*} = \inf \{s: \tilde{V}_1(e_1|s) < \tilde{V}_1(s|s) \text{,  } \forall e_1 > s\}$ -- that $t^{*}$ is the switching time. First, when $s < t^{*}$, then since $\tilde{D}(\cdot|s)$ is strictly increasing in $s$, we have that $\tilde{D}'(t^{*} + \tilde{e}_{2|1}^{*}|s ) < \tilde{D}'(t^{*} + \tilde{e}_{2|1}^{*} | t^{*}) = B_1^{'}(t^{*}) = B_{2}^{'}(\tilde{e}_{2|1}^{*})$. Hence she must plan to work more in total, and hence more on each task, than she does at $s = t^{*}$. Therefore she must plan to work more on both tasks (since she wants to equalize marginal benefits), therefore there is some $e > t^{*} > s$ such that $\tilde{V}_{1}(e|s) > \tilde{V}_{1}(s|s)$. Similarly, when $s > t^{*}$, we have that $\tilde{D}'(s + \tilde{e}_{2|1}^{*}|s ) > \tilde{D}'(t^{*} + \tilde{e}_{2|1}^{*} | t^{*}) = B_1^{'}(t^{*}) = B_{2}^{'}(\tilde{e}_{2{2|1}^{*}}) > B_1^{'}(s)$. But this means that the person plans less than $s + \tilde{e}_{2|1}^{*}$ total work, hence she has to plan to work less on at least one task. But she cannot work less on the first (she already spent time $s$), so it must be the second task, in which case the marginal benefit from the second task is strictly larger. Therefore, fixing the total amount of work, she should spend all the remaining time on the second task, and hence stops right now, i.e. $\tilde{V}_{1}(e|s) < \tilde{V}(s|s)$ for all $e > s$. This shows that $\tilde{e}_{1}^{*}$ satisfies:
\begin{equation*}
    \tilde{D}'(\tilde{e}_1^{*} + \tilde{e}_{2|1}^{*} | \tilde{e}_1^{*}) = B_1^{'}(\tilde{e}_1^{*}) = B_2^{'}(\tilde{e}_{2|1}^{*})
\end{equation*}
and $\tilde{e}_{2|1}^{*}$ is the amount she plans to work on task $2$ at the time of switching. 

Note that when $B_1^{'}(\tilde{e}_1^{*}) = B_2^{'}(\tilde{e}_{2|1}^{*})$ then the following are equivalent: $\tilde{e}_1^{*} > e_1^{*}$; $\tilde{e}_{2|1}^{*} > e_2^{*}$; and $\tilde{e}_1^{*} + \tilde{e}_{2|1}^{*} > e_1^{*} + e_2^{*}$, since the $B_i^{*}$ are strictly concave and disutilities are strictly convex. That is, if she plans to work more on the first task than is optimal, then she also plans to work more on the second task than is optimal and vice versa; and hence both of these imply, and are implied by, her planning to work more in total than is optimal.

We can now show that $\tilde{e}_1^{*} > e_1^{*}$. Suppose not. Then she plans to work less in total than is optimal and we get:
\begin{align*}
    \tilde{D}'(\tilde{e}_1^{*} + \tilde{e}_{2|1}^{*} | \tilde{e}_1^{*}) & < \tilde{D}'(\tilde{e}_1^{*} + \tilde{e}_{2|1}^{*} | \tilde{e}_1^{*} + \tilde{e}_{2|1}^{*}) \\
    & = D'(\tilde{e}_1^{*} + \tilde{e}_{2|1}^{*}) \\
    & \leq D'(e_1^{*} + e_2^{*}) \text{, since total optimal effort is larger then total planned effort} \\
    & = B'(e_1^{*}) \\
    & \leq B'(\tilde{e}_1^{*}) \text{, since } \tilde{e}_{1}^{*} \leq e_{1}^{*}
\end{align*}
which shows that it does not satisfy the first order condition. Thus $\tilde{e}_1^{*} > e_1^{*}$.

Once she switches, she keeps working on the second task. While she wants to reduce $e_1$, she can no longer do so, and thus takes $\tilde{e}_1^{*}$ as a given. Thus she now simply solves the first order condition
\begin{equation*}
    \tilde{D}'(\tilde{e}_1^{*} + \tilde{e}_2^{*}(s) | s) = B_2^{'}(\tilde{e}_2^{*}(s))
\end{equation*}
and as before, she stops once this holds her current $s$ equal to final total effort, $s = \tilde{e}_1^{*} + \tilde{e}_2^{*}$:
\begin{align*}
    & \tilde{D}'(\tilde{e}_1^{*} + \tilde{e}_2^{*} | \tilde{e}_1^{*} + \tilde{e}_2^{*}) = B_2^{'}(\tilde{e}_2^{*}) \\
    & \iff D'(\tilde{e}_1^{*} + \tilde{e}_2^{*}) = B'(\tilde{e}_2^{*}) \\
    & \iff D'(\tilde{e}_1^{*} + \tilde{e}_2^{*}) - B'(\tilde{e}_2^{*}) = 0 \\
    & \iff D'(e_1^{*} + x + \tilde{e}_2^{*}) - B'(\tilde{e}_2^{*}) = 0 \text{ for } x = \tilde{e}_{1}^{*} - e_{1}^{*} > 0
\end{align*}
Since $B'(\tilde{e}_{2}^{*}) = D'(e_1^{*} + e_2^{*} + x) > D'(e_1^{*} + e_2^{*}) = B'(e_2^{*})$, we have that $\tilde{e}_2^{*} < e_2^{*}$. Therefore, $D'(\tilde{e}_1^{*} + \tilde{e}_2^{*}) =  B'(\tilde{e}_2^{*}) > B'(e_2^{*}) = D'(e_1^{*} + e_2^{*})$, so that $\tilde{e}_1^{*} + \tilde{e}_2^{*} > e_1^{*} + e_2^{*}$ and we are done.
\end{proof}

\subsection{Proofs for Section 5}

Let me state again the initial value problem (henceforth IVP)
determining the behavior of a projection-biased person working on a task
requiring total effort \(E\) for a total benefit \(B\): \begin{align}
    & \dot{E}_{x} = - e_x \label{eq:IVP} \\
    & e_x  =
    \begin{cases}
        0 \text{, if } G(x, 0, E_{x}) > B \notag \\
        \frac{E_{x}}{1 - x} \text{, if } G(x, \frac{E_x}{1 - x}, E_{x}) < B \\
        e_x^{*} \text{ otherwise, with } G(x, e_x^{*}, E_{x}) = B 
    \end{cases}
\end{align}

where \(G(x, s, E) = (1 - x) \cdot \tilde{D}(\frac{E}{1 - x})\). When
\(G(x, e_{x}^{*}, E_{x}) = B\) holds, we have
\(e_{x}^{*} = f(x, E, B) := (D')^{-1}\left(\frac{B - (1 - \alpha) (1 - x) D(E)}{\alpha E }\right)\),
with \(f(\cdot)\) existing by strict convexity of \(D(\cdot)\).

As a reminder, we have the following definitions of \(\tau_{0}\) and
\(\tau_{F}\): \begin{align*}
  \tau_0(E_0, B_0) := \inf \{ \tau \in [0,1]: G(x, 0, E_{x}(E_0, B_0)) < B_0 \text{ } \forall x < 1 - \tau \} \\
  \tau_F(E_0, B_0) := \inf \{ \tau \in [0, 1]: G(x, \frac{E_{x}(E_0, B_0)}{1 - x}, E_{x}(E_0, B_0)) > B_0 \text{ } \forall x < 1 - \tau \}
\end{align*} Intuitively (but not quite), \(\tau_{0}\) is how close to
the deadline the person stops working and doesn't resume again; and
\(\tau_{F}\) is how close to the deadline the person starts working on
the task efficiently.

With this, let us first prove that \(E_x(E_0, B_0)\) for \(x < 1\) is
Lipschitz continuous in a neighborhood of \((E_0, B_0, x)\) and that it
is increasing in \(E_0\). I will use the following theorem (from
\url{https://www.math.washington.edu/~burke/crs/555/555_notes/continuity.pdf})
to prove continuity.

\begin{theorem}
\label{theorem:continuous-ode}
Consider the initial value problem 
\begin{equation*}
    x' = f(t, x, \mu), \text{ } x(t_0) = y
\end{equation*}

where $x'$ is the derivative of $x(t)$ with respect to time. If $f$ is continuous in $t$, $x$, $\mu$ and Lipschitz in $x$ with Lipschitz constant independent of $t$ and $\mu$, then $x(t, \mu, y)$ is continuous in $(t, \mu, y)$ jointly.
\end{theorem}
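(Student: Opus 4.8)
The plan is to prove this along the classical route for continuous dependence of ODE solutions on data and parameters, in three stages: a contraction-mapping argument for existence, uniqueness, and local control of the solution; a Gr\"onwall estimate that quantifies the effect of perturbing $y$ and $\mu$; and finally joint continuity in $(t,\mu,y)$, obtained by combining the uniform-in-$t$ continuity in $(\mu,y)$ with the Lipschitz-in-$t$ regularity of each solution. \textbf{Step 1 (well-posedness and local bounds).} Fix a reference triple $(t_0,y,\mu)$ and a compact interval $I=[t_0,t_1]$. I would recast the IVP as the integral equation $x(t)=y+\int_{t_0}^{t} f(\sigma,x(\sigma),\mu)\,d\sigma$ and show the associated operator is a contraction on a closed ball of $C(I)$ under the weighted (Bielecki) norm $\|x\|=\sup_{t\in I}e^{-2L|t-t_0|}|x(t)|$, where $L$ is the Lipschitz constant of $f$ in $x$. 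The Lipschitz hypothesis produces contraction factor $\tfrac12$ directly, and continuity of $f$ in $t$ makes the integrand well behaved; Banach's fixed point theorem then yields a unique solution $x(\cdot,\mu,y)$ on $I$, as long as its graph stays in the region where $f$ is defined. Since $L$ does not depend on $t$ or $\mu$, the same argument applies verbatim to nearby data.

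\textbf{Step 2 (the Gr\"onwall estimate --- the heart of the proof).} Let $u(t)=x(t,\mu,y)$, take $(\mu',y')$ near $(\mu,y)$, and let $v(t)=x(t,\mu',y')$. Subtracting the integral equations and inserting $\pm f(\sigma,u(\sigma),\mu')$,
\[
|v(t)-u(t)|\le |y'-y|+\int_{t_0}^{t}\!\big(\,L\,|v(\sigma)-u(\sigma)| + |f(\sigma,u(\sigma),\mu')-f(\sigma,u(\sigma),\mu)|\,\big)\,d\sigma .
\]
The last integrand is bounded by a modulus of continuity $\omega(|\mu'-\mu|)$ that is valid uniformly on the compact set $\{(\sigma,u(\sigma)):\sigma\in I\}$ times a small closed ball around $\mu$, by uniform continuity of $f$ on compacts. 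Hence
\[
|v(t)-u(t)|\le \big(|y'-y|+(t_1-t_0)\,\omega(|\mu'-\mu|)\big)+L\int_{t_0}^{t}|v(\sigma)-u(\sigma)|\,d\sigma,
\]
and Gr\"onwall's inequality gives $\sup_{t\in I}|v(t)-u(t)|\le \big(|y'-y|+(t_1-t_0)\,\omega(|\mu'-\mu|)\big)e^{L(t_1-t_0)}\to 0$ as $(\mu',y')\to(\mu,y)$, i.e.\ continuity in $(\mu,y)$ uniformly over $t\in I$.

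\textbf{Step 3 (joint continuity).} From the integral equation and the local boundedness of $f$ near the (compact) graph of $u$, each solution satisfies $|x(t',\mu,y)-x(t,\mu,y)|\le M|t'-t|$ on $I$ with $M$ locally uniform in $(\mu,y)$; combining this equicontinuity in $t$ with the estimate of Step 2 via the triangle inequality yields joint continuity of $(t,\mu,y)\mapsto x(t,\mu,y)$. Everything here besides Step 2 is routine: Banach's fixed point theorem, uniform continuity on compact sets, and Gr\"onwall.

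The hard part will be the domain bookkeeping threaded through Steps 1--2: one must guarantee that the perturbed solution $v$ does not escape the region where $f$ (and the uniform modulus $\omega$) is available before reaching $t_1$. The standard remedy is a tube/continuation argument --- the reference graph is compact, so a closed $\rho$-tube around it is compact and sits inside the domain; running the Step 2 estimate only up to the first exit time of the tube shows that $v$ cannot in fact leave the tube once $|y'-y|$ and $|\mu'-\mu|$ are small enough, which closes the loop and legitimizes the a priori use of the Lipschitz constant and modulus of continuity.
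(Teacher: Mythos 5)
The paper does not prove this theorem at all --- it is imported as a black-box citation from external lecture notes --- so there is no in-paper argument to compare yours against. Your proof is the standard textbook route (Banach fixed point with a Bielecki norm for well-posedness, a Gr\"onwall estimate for continuity in $(\mu,y)$ uniformly in $t$, and Lipschitz-in-$t$ equicontinuity to upgrade to joint continuity) and it is correct; the tube/continuation argument you flag at the end correctly handles the one genuine subtlety, namely ensuring the perturbed solution stays in the region where the Lipschitz constant and the modulus of continuity are valid.
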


Then we get continuity as follows:

\begin{lemma}
\label{lemma:continuous-time-continuous-in-parameters}

Suppose that $D''(x) > d$ for some $d > 0$. The solution $E_x(E, B)$ to the continuous-time problem restricted to $x \in [0, 1- \varepsilon]$ with $\varepsilon > 0$ exists and is Lipschitz continuous in $x$, $E$, and $B$, on $[0, 1 - \varepsilon] \times [\underline{E}, \bar{E}] \times [0, \infty]$, for some $\bar{E} > \underline{E} > 0$.
\end{lemma}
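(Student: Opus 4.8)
The plan is to recast the initial-value problem \eqref{eq:IVP} in the standard form $\dot E_x = \Phi(x, E_x, B)$, with $\Phi(x, E, B) := -e_x$ an \emph{explicit, jointly continuous} right-hand side, and then to apply three standard facts about ODEs: the Picard--Lindel\"of theorem (existence and uniqueness of a solution on $x \in [0, 1 - \varepsilon]$), Theorem~\ref{theorem:continuous-ode} (continuity of the solution in $(x, E_0, B)$), and Gronwall's inequality (to upgrade continuity to Lipschitz continuity). First I would make $\Phi$ explicit. Writing $G(x, s, E) = (1 - \alpha)(1 - x)\, D\!\left(\tfrac{E}{1 - x}\right) + \alpha\, D'(s)\, E$, the map $s \mapsto G(x, s, E)$ is continuous and, since $D'' > d > 0$, strictly increasing, hence a homeomorphism of $[0, \infty)$ onto $[\,G(x, 0, E), \infty)$ whose inverse evaluated at $B$ is precisely $f(x, E, B) = (D')^{-1}\!\bigl(\psi(x, E, B)\bigr)$ with $\psi(x, E, B) := \bigl(B - (1 - \alpha)(1 - x)\, D(\tfrac{E}{1-x})\bigr) / (\alpha E)$. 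Therefore
\[
  e_x \;=\; \min\!\left\{ \tfrac{E}{1 - x},\; f(x, E, B) \right\} \quad\text{when } B \ge G(x, 0, E), \qquad e_x = 0 \quad\text{otherwise,}
\]
and the branches match along the two interfaces (at $B = G(x, 0, E)$ one has $f = (D')^{-1}(D'(0)) = 0$; at $B = G(x, \tfrac{E}{1-x}, E)$ one has $f = \tfrac{E}{1-x}$), so $\Phi$ is jointly continuous on $[0, 1 - \varepsilon] \times (0, \bar E\,] \times [0, \infty)$ and extends continuously to $E = 0$ by $\Phi(x, 0, B) = 0$ since $0 \le e_x \le \tfrac{E}{1-x}$.

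The quantitative heart of the argument is a Lipschitz bound for $\Phi$ in $E$ that is \emph{uniform} in $(x, B)$. Here I would use that $(D')^{-1}$ is globally $\tfrac1d$-Lipschitz (because $D'(b) - D'(a) \ge d\,(b - a)$ for $b \ge a$), while $\psi$ is $C^1$ with $E$-derivative bounded on $[0, 1 - \varepsilon] \times [\underline E, \bar E] \times [0, \bar B]$ — the denominators $1 - x \ge \varepsilon$ and $E \ge \underline E$ being bounded away from $0$, and the interior branch $e_x = f$ being active only when $B = G(x, f, E) \le G(x, \tfrac{E}{1-x}, E) \le \bar B$ for an explicit finite $\bar B = \bar B(\bar E, \varepsilon, \alpha, D)$. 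Since taking $\min$ with the function $\tfrac{E}{1-x}$ (which is $\tfrac1\varepsilon$-Lipschitz in $E$) and with the constant $0$ preserves Lipschitz continuity with the same constant, $\Phi$ is Lipschitz in $E$ with a constant $L$ depending only on $d, \alpha, \varepsilon, \underline E, \bar E$ and $D$. Continuity together with this uniform bound lets Picard--Lindel\"of furnish a unique solution for every $(E_0, B_0)$; since $\dot E_x = -e_x \le 0$, the solution is nonincreasing and stays in $[0, E_0] \subseteq [0, \bar E]$, so it never escapes the domain and exists on all of $[0, 1 - \varepsilon]$. Theorem~\ref{theorem:continuous-ode} then yields joint continuity of $(x, E_0, B_0) \mapsto E_x(E_0, B_0)$.

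Upgrading continuity to Lipschitz continuity in each argument is then routine. In $x$: $\lvert \dot E_x \rvert = e_x \le \tfrac{E_x}{1-x} \le \tfrac{\bar E}{\varepsilon}$. In $E_0$: apply Gronwall's inequality to $t \mapsto \lvert E_t(E_0, B_0) - E_t(E_0', B_0) \rvert$, using the uniform constant $L$, to get Lipschitz constant $e^{L(1-\varepsilon)}$. In $B_0$: combine (i) $\Phi$ is Lipschitz in $B$ on $[0, 1 - \varepsilon] \times [\underline E, \bar E] \times [0, \bar B]$ with constant $\tfrac{1}{d\,\alpha\,\underline E}$ (again via the $\tfrac1d$-Lipschitz $(D')^{-1}$ and $\partial_B \psi = \tfrac{1}{\alpha E}$), so Gronwall gives Lipschitz dependence of the solution on $B_0$ over $[0, \bar B]$, with (ii) the observation that for $B_0 \ge \bar B$ the second branch is always active, the ODE collapses to $\dot E_x = -\tfrac{E_x}{1-x}$, and its solution $E_x = E_0(1-x)$ does not depend on $B_0$ at all; hence the solution map is constant in $B_0$ on $[\bar B, \infty]$ and Lipschitz on $[0, \bar B]$, so Lipschitz on $[0, \infty]$.

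The step I expect to be the real obstacle is pinning down the uniformity of these Lipschitz constants across the three regimes of \eqref{eq:IVP} and as the data approach the degenerate boundary values: on the interior branch $\psi$ and its partials blow up as $E \downarrow 0$, and $f$ blows up as $B \uparrow \infty$. The plan deals with this by noting that each such degeneracy sits inside a regime where the \emph{active} branch of $\Phi$ is either the constant $0$ or $\tfrac{E}{1-x}$ — both insensitive to $B$ and Lipschitz in $E$ and $x$ with benign constants — so the problematic interior formula is never in force there; keeping the running state bounded below by $\underline E$ over the window in question keeps the denominators $\alpha E$ and $1 - x$ bounded away from $0$, and the finite cutoff $\bar B$ bounds the range of $B$ that can possibly matter. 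One must still check that the trajectories of interest do not leave $[\underline E, \bar E]$ over the time window used (or else localize the statement to the portion of the trajectory before it would), but this needs only the monotonicity and boundedness of $E_x$ established above, not any of the later results; choosing $\underline E$ small and $\bar E$ large, as the statement permits, then makes the domain as large as the applications in Proposition~\ref{prop:multi-period-task-continuous} require.
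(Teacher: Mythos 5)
Your proposal is correct and follows essentially the same route as the paper: confine the trajectory to $[\underline{E},\bar{E}]$, check that the three branches of $e_x$ agree at their interfaces and are each Lipschitz (using $D''>d$ to make $(D')^{-1}$ Lipschitz and $1-x\geq\varepsilon$, $E\geq\underline{E}$ to control the denominators), and invoke Theorem~\ref{theorem:continuous-ode}. Your explicit Gronwall step is a welcome refinement, since the cited theorem as stated only delivers continuity of the solution map while the lemma asserts Lipschitz continuity, but it is a tightening of the same argument rather than a genuinely different approach.
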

\begin{proof}

It is clear that $E_x \leq E_0$, so we can pick $\bar{E} > E_0$. We then show that, starting with $E_0 \in [\underline{E}, \bar{E}]$, we will not fall below $\underline{E}$ before time $x$. Given that the maximum instantaneous effort is given by $\frac{E_x}{1 - x}$ it is not hard to see that at most a fraction $x$ of the total effort will be completed by time $x$ (the efficient amount, conditional on trying to complete the task).\footnote{This statement is proved in Lemma \ref{lemma:increasing-in-B}} Thus if $E_0 \geq \frac{1}{\varepsilon} \underline{E}$, then $E_x$ will be larger than $\underline{E}$ for all $x \leq 1 - \varepsilon$. 

Given theorem \ref{theorem:continuous-ode} and our IVP (\ref{eq:IVP}), we only need to show that $e_x(x, E, B)$ is continuous in $t$, $E$, and $B$, and Lipschitz continuous in $E$ independent of $t$ and $B$. Notice that $G$ and $f$ are continuous functions, given that $x$ is bounded away from $1$ and $E$ is bounded away from $0$.

First, notice that when $G(x, 0, E) = B$, by definition of $G$ and $f$ we have that $f(x, E, B) = (D')^{-1}(D'(0)) = 0$, and similarly when $G(x, \frac{E}{1 - x}, E) = B$, we have that $f(x, E, B) = \frac{E}{1 - x}$. Thus $e(x, E, B)$ restricted to $A := \{(x, E, B): G(x, 0, E) \geq B\}$ is the constant $0$ function, $e(x, E, B)$ restricted to $B := \{(x, E, B): G(x, \frac{E}{1 - x}, E) \leq B\}$ is equal to $\frac{E}{1 - x}$, and $e(x, E, B)$ restricted to $C := \{(x, E, B): G(x, 0, E) \leq B \text{ and } G(x, \frac{E}{1 - x}, E) \geq B \}$ is equal to $f(x, E, B)$.

If we can show that $e(x, E, B)$ restricted to $\mathcal{A}$, $\mathcal{B}$, and $\mathcal{C}$ is Lipschitz in all parameters (which is stronger than what we need), then $e(x, E, B)$ is Lipschitz continuous in all parameters over the union of $\mathcal{A}$, $\mathcal{B}$, and $\mathcal{C}$. The reason is that all three regions are closed, and thus contain their limit points: Suppose we have two points $\vect{x} = (x, E, B)$ and $\vect{x'} = (x', E', B')$ and we want to show that $|e(x, E, B) - e(x', E', B')| < K(|x - x'| + |E - E'| + |B - B'|)$ for some $K$. First, if both points are in the same region, then this immediately holds, by the assumption that the function is Lipschitz in that region. Now suppose that the two points are in regions $\mathcal{A}$ and $\mathcal{C}$. These two regions share a common border. Thus there exists some point $\vect{x''} = (x'', E'', B'') = \kappa \cdot \vect{x} + (1 - \kappa) \cdot \vect{x'}$ on the line connecting the two points that belongs to both regions (this is the part that requires both $\mathcal{A}$ and $\mathcal{C}$ to be closed), so that $|e(x, E, B) - e(x', E', B')| = |e(x, E, B) - e(x'', E'', B'') + e(x'', E'', B'') - e(x', E', B')| \leq |e(x, E, B) - e(x'', E'', B'')| + |e(x'', E'', B'') - e(x', E', B')| < K (|x - x''| + |E - E''| + |B - B''| + |x'' - x'| + |E'' - E'| + |B'' - B'|) = K(|x - x'| + |E - E'| + |B - B'|)$, where $|x - x''| + |x'' - x'| = |x - x'|$ because the point $\vec{x''}$ lies between the two points (is a convex combination of) $\vect{x}$ and $\vect{x'}$. Thus the function is Lipschitz continuous over the union of $\mathcal{A}$ and $\mathcal{C}$, and by an exactly identical argument over the union of the three regions.

Restricting ourselves to $E_0 \in [\underline{E}, \bar{E}]$, it is clear that $e(x, E, B)$ is Lipschitz on $A$, where it is constant. It is equally clear that $e(x, E, B)$ is Lipschitz continuous on $B$ since (by assumption) we are only considering $x \leq 1 - \varepsilon$, that is $1 - x \geq \varepsilon$. 

Finally, $e(x, E, B)$ is Lipschitz continuous on $C$ if $f(\cdot)$ is. But $f(\cdot)$ is the inverse function of $D'(\cdot)$, so as long as the derivative of $D'(\cdot)$ is strictly bounded away from $0$ everywhere, $f(\cdot)$ is Lipschitz. This holds since we assume $D''(x) > d$ for some $d > 0$. Thus we have shown that $e(x, E, B)$ is Lipschitz continuous when $x \leq 1 - \varepsilon$, $E \in [\underline{E}, \bar{E}]$ and $B \geq 0$, for any $\varepsilon > 0$, $\underline{E} > 0$, $\bar{E} > 0$.
\end{proof}

\begin{lemma}
\label{lemma:absorbing}
If $G(x, 0, E_x) > B$, then $G(x', 0, E_{x'}) > B$ for all $x' \geq x$. Similarly, if $G(x, \frac{E_x}{1 - x}, E_x) < B$, then $G(x', \frac{E_{x'}}{1 - x'}, E_{x'}) < B$ for all $x' \geq x$.
\end{lemma}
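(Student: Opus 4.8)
The plan is to prove both halves by the same device: each asserts that a regime of the ODE (\ref{eq:IVP}) is absorbing — ``not working this period'' in the first half, ``working at the full efficient rate $E_x/(1-x)$'' in the second — and in each regime the relevant quantity is pinned down tightly enough that the inequality defining the regime can only become \emph{slacker} as $x$ increases. I would run each argument by contradiction via a first-exit time: assume the regime's inequality fails at some later $x'$; pass to the least $x_1\in(x,x']$ at which equality holds (it exists by continuity, and the strict inequality holds on $[x,x_1)$); observe that on $[x,x_1)$ the ODE sits in a single branch, so one may compute there; and patch at $x_1$ using continuity of $x\mapsto E_x$ (Lemma \ref{lemma:continuous-time-continuous-in-parameters}). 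This first-exit formulation is precisely what lets me avoid worrying that $E_x$ is only continuous, not $C^1$, across branch boundaries.

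For the first half, set $h(y):=G(y,0,E_y)=(1-\alpha)(1-y)D\!\big(\tfrac{E_y}{1-y}\big)+\alpha D'(0)E_y$ (the perceived cost of efficient completion at the \emph{start} of period $y$), which is continuous on $[0,1)$. On $[x,x_1)$ one has $G(y,0,E_y)=h(y)>B$, so the ODE is in its first branch, $e_y=0$, and $E_y$ is frozen at its value $C:=E_x$; by continuity $E_{x_1}=C$ as well. Hence on all of $[x,x_1]$, $h(y)=(1-\alpha)(1-y)D\!\big(\tfrac{C}{1-y}\big)+\alpha D'(0)C$, and I would show the $y$-dependent piece $(1-y)D\!\big(\tfrac{C}{1-y}\big)$ is non-decreasing in $y$. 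Substituting $u=1-y$, this is $\tfrac{d}{du}\big(uD(\tfrac{C}{u})\big)=D(\tfrac{C}{u})-\tfrac{C}{u}D'(\tfrac{C}{u})\le 0$, i.e. $D(z)\le zD'(z)$ for $z\ge 0$ — exactly convexity of $D$ together with $D(0)=0$. So $h(x_1)\ge h(x)>B$, contradicting $h(x_1)=B$.

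For the second half, set $g(y):=G(y,\tfrac{E_y}{1-y},E_y)=(1-\alpha)(1-y)D\!\big(\tfrac{E_y}{1-y}\big)+\alpha D'\!\big(\tfrac{E_y}{1-y}\big)E_y$ (the same cost perceived after a full efficient day's work), again continuous. On $[x,x_1)$ one has $g(y)<B$, which puts the ODE in its second branch (the first branch is excluded because $G$ is increasing in its middle argument, so $G(y,0,E_y)\le g(y)<B$), hence $\dot E_y=-\tfrac{E_y}{1-y}$; a one-line differentiation gives $\tfrac{d}{dy}\tfrac{E_y}{1-y}=0$, so the efficient rate $r:=\tfrac{E_x}{1-x}$ is constant along this branch and, by continuity, at $x_1$. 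Then $g(y)=(1-y)\big((1-\alpha)D(r)+\alpha D'(r)r\big)$ on $[x,x_1]$, which is non-increasing in $y$ since the bracket is a non-negative constant; so $g(x_1)\le g(x)<B$, again a contradiction.

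The first-exit bookkeeping and the second monotonicity computation are routine. The one step I expect to need a moment's care is the first monotonicity, $(1-y)D(\tfrac{C}{1-y})$ non-decreasing in $y$, and its reduction to $D(z)\le zD'(z)$; it is short, but it is where all the structure of $D$ actually enters, and it is worth recording that neither $D'(0)=0$ nor \emph{strict} convexity is needed for it.
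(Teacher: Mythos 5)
Your proposal is correct and follows essentially the same route as the paper: a first-crossing-time contradiction using continuity of $E_x$, the fact that $E_x$ is frozen (resp.\ that the efficient rate $E_x/(1-x)$ is constant) along the relevant branch, and the monotonicity of $G$ coming from convexity of $D$ with $D(0)=0$. Your explicit treatment of the second half --- observing that $\dot E_y=-E_y/(1-y)$ keeps $E_y/(1-y)$ constant so that $G$ scales as $(1-y)$ times a non-negative constant --- correctly fills in what the paper dismisses as ``a similar argument.''
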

\begin{proof}
  Suppose not. Then there exists $1 > x' > x$ such that $G(x', 0, E_{x'}) \leq B$. Note that by lemma~\ref{lemma:continuous-time-continuous-in-parameters}, $E_x$ is continuous on $[0, x' + \varepsilon]$ for sufficiently small $\varepsilon$, and because $G$ is continuous in all its arguments, we know that $G(x + \varepsilon_1, 0, E_{x + \varepsilon_1}) > B$ for sufficiently small $\varepsilon_1$. Thus for $x^{*} := \inf \{x' > x: G(x', 0, E_{x'}) \leq B \}$, we have $x^{*} > x$. Moreover, $G(y, 0, E_y) > B$ for all $x \leq y < x^{*}$ and therefore $e_y = 0$. Thus $E_{x^{*}} = E_x - \int_x^{x^{*}} e_y dy = E_x$. Hence we have that $G(x^{*}, 0, E_{x^{*}}) = G(x^{*}, 0, E_x) > G(x, 0, E_x) > B$, since $G$ is strictly increasing in $x$.\footnote{$G$ is strictly increasing in $x$ because $(1 - x) \cdot D(E/(1 - x)) = E \cdot 1/X \cdot D(X)$ where $X  = (1 - x)/E$. But $D(X)/X$ is the average disutility per unit of effort, which strictly increases for a strictly convex function $D(\cdot)$.} But then by continuity of $E_x$ and $G$, we have that $G(x^{*} + \varepsilon_2, 0, E_{x^{*} + \varepsilon_2}) > B$ for sufficiently small $\varepsilon_2$, which contradicts the definition of $x^{*}$.

A similar argument works for the second part of the lemma.
\end{proof}

\begin{lemma}
\label{lemma:increasing-in-E}
For a fixed $x < 1$ and $B_0 > 0$, $E_x(E_0, B_0)$ is strictly increasing in $E_0$.
\end{lemma}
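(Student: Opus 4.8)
The plan is to deduce monotone dependence on the initial condition from uniqueness of the solution to the IVP~(\ref{eq:IVP}) --- which in turn follows from the Lipschitz property established in Lemma~\ref{lemma:continuous-time-continuous-in-parameters} --- together with the elementary fact that two solutions of the same ODE with ordered initial data can never meet.

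Fix $x < 1$ and $B_0 > 0$, let $E_0 < E_0'$, and set $\varepsilon = 1 - x$. First I would pin down a common region on which both trajectories live and on which the right-hand side is well behaved: choosing $\underline{E} \le \varepsilon E_0$ and $\bar{E} \ge E_0'$, Lemma~\ref{lemma:continuous-time-continuous-in-parameters} (together with the ``at most a fraction $x$ of the total effort is done by time $x$'' bound used in its proof, which gives $E_y \ge (1-y) E_0 \ge \varepsilon E_0 \ge \underline{E}$ for $y \le x$) guarantees that both $y \mapsto E_y(E_0, B_0)$ and $y \mapsto E_y(E_0', B_0)$ exist on $[0, x]$, stay inside $[\underline{E}, \bar{E}]$, and solve $\dot{E}_y = -e(y, E_y, B_0)$ with $e(\cdot)$ continuous and Lipschitz in its second argument, with a constant uniform over $y \in [0, x]$. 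By Picard--Lindel\"of, the IVP through any point of $[0, x] \times [\underline{E}, \bar{E}]$ then has a unique solution, both forward and backward in $y$.

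Next I would run the standard non-crossing argument. Consider $S := \{ y \in [0, x] : E_y(E_0, B_0) \ge E_y(E_0', B_0) \}$. Since $E_0 < E_0'$ we have $0 \notin S$, and $S$ is closed, being defined by a non-strict inequality between continuous functions of $y$. If $S = \emptyset$, then $E_y(E_0, B_0) < E_y(E_0', B_0)$ throughout $[0, x]$, in particular at $y = x$, which is the claim. Otherwise put $x^{*} := \inf S$; then $x^{*} > 0$, $x^{*} \in S$, and $E_y(E_0, B_0) < E_y(E_0', B_0)$ for all $y < x^{*}$, so continuity forces $E_{x^{*}}(E_0, B_0) = E_{x^{*}}(E_0', B_0)$. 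But then the two curves solve the same IVP with the same datum at $x^{*}$, so by backward uniqueness they coincide on all of $[0, x^{*}]$, and evaluating at $y = 0$ yields $E_0 = E_0'$, a contradiction. Hence $S = \emptyset$ and $E_x(E_0, B_0) < E_x(E_0', B_0)$, i.e.\ $E_x(\cdot, B_0)$ is strictly increasing.

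The only substantive work is the first step: checking that the hypotheses of the uniqueness theorem genuinely hold along both trajectories --- that they stay bounded away from $E = 0$ and from $x = 1$, so that $e$ is honestly Lipschitz there. This is exactly the content already extracted in Lemma~\ref{lemma:continuous-time-continuous-in-parameters}, so once that bookkeeping is done the crossing argument is routine. I would avoid a direct comparison-principle route here, since the vector field $-e(x, E, B_0)$ is not obviously monotone in $E$ on the interior region where $e_x = f(x, E, B_0)$, whereas the uniqueness argument needs nothing beyond the Lipschitz bound.
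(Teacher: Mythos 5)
Your argument is correct, but it takes a genuinely different route from the paper's. The paper works directly with the difference $\Delta_y := E_y(E_0', B_0) - E_y(E_0, B_0)$ and, by a case analysis on which branch of the definition of $e_y$ is active (using that $G$ is increasing in its second and third arguments), establishes the one-sided bound $\tfrac{d\Delta_y}{dy} = -e_y' + e_y \geq -\tfrac{\Delta_y}{1-y}$ whenever $\Delta_y > 0$; integrating this Gronwall-type inequality shows $\Delta$ can never reach zero. Notably, on the interior branch where $G(y, e_y, E_y) = B$, monotonicity of $G$ in $E$ forces $e_y' < e_y$, so the quasi-monotonicity you worried about does hold there -- the only branch needing the $1/(1-y)$ constant is the full-effort branch. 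Your route instead invokes the two-sided Lipschitz bound of Lemma~\ref{lemma:continuous-time-continuous-in-parameters} to get forward and backward uniqueness via Picard--Lindel\"of and then runs the scalar non-crossing argument; the domain bookkeeping via $E_y \geq (1-y)E_0$ is exactly what is needed to keep both trajectories where the Lipschitz constant is uniform. What each buys: yours is shorter once the Lipschitz lemma is in hand and cleanly isolates uniqueness as the only input; the paper's needs only the one-sided bound and continuity of $y \mapsto E_y$, so it is more elementary and would survive weaker regularity than $D'' > d$. One cosmetic point: take $\underline{E}$ strictly below $\varepsilon E_0$ (and note the standing assumption $E_0 > 0$) so the meeting point $(x^{*}, E_{x^{*}})$ sits in the interior of the region on which backward uniqueness is invoked.
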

\begin{proof}
Let $\Delta_x = E_x(E_0^{'}, B_0) - E_x(E_0, B_0)$ for some $E_0^{'} > E_0 > 0$. We need to show that $\Delta_x > 0$ for all $x < 1$.

Notice that $\Delta_0 = E_0^{'} - E_0 > 0$ and that $\frac{d \Delta}{dx} = - e_x^{'} + e_x$. Since $E_x$ is continuous, we have that $\Delta_x > 0$ for all $x < \varepsilon$ at least. Suppose that the claim is false, so that $x^{*} := \inf \{x: \Delta_{x^{*}} \leq 0 \}$ exists. Then $x^{*} \geq \varepsilon > 0$ and for all $x < x^{*}$ we have $\Delta_{x} > 0$. Thus $E_x^{'} > E_x$.

We then have check that for all possible cases of values for $e_{x}$, we can limit how large $e_{x'}$ is. If $G(x, 0, E_x) > B$, then $G(x, 0, E_x^{'})$, and hence $e_x = e_x^{'} = 0$. If $G(x, e_x, E_x) = B$, then $G(x, e_x, E_x^{'}) > B$, since $G$ is increasing in $E$, and therefore $e_x^{'} < e_x$ because $G$ is increasing in its second argument. Finally, if $G(x, \frac{E_x}{1 - x}, E_x) < B$, then $e_x = \frac{E_x}{1 - x}$ and $e_x^{'} \leq \frac{E_x^{'}}{1 - x}$.

Thus we see that in all cases $\frac{d \Delta_x}{dx}  = - e_{x}^{'} + e_{x} \geq \frac{- E_{x}^{'} + E_{x}}{1 - x} = - \frac{\Delta_x}{1 - x}$, and therefore for $x < x^{*}$ $\Delta_{x^{*}} = \Delta_x + \int_x^{x^{*}} \frac{d \Delta_x}{dx} dx \geq \Delta_x - \int_x^{x^{*}} \frac{\Delta_y}{1 - y} dy$. Let $\delta < \frac{1}{2} (1 - x^{*})$. We know by the definition of $x^{*}$ that $\Delta_x > 0$ for $x < x^{*}$. Pick $x \in [x^{*} - \delta, x^{*})$ that achieves the maximum of $\Delta_x$ in this interval, which exists since $\Delta_x$ is continuous. Then we have that $\Delta_{x^{*}} \geq \Delta_x - \int_x^{x^{*}} \frac{\Delta_y}{1 - y} dy \geq \Delta_x - \Delta_x \frac{\delta}{1 - x^{*}} > \frac{1}{2} \Delta_x > 0$ . Thus $\Delta_{x^{*}} > 0$ and therefore (by continuity) $\Delta_{x^{*} + \varepsilon} > 0$ for some small $\varepsilon > 0$, which contradicts the definition of $x^{*}$. Thus the claim is proved.
\end{proof}

\begin{lemma}
\label{lemma:increasing-in-B}
$E_x(E_0, B_0) \geq E_{x'}(E_{0}, B_{0}) \frac{1 - x}{1 - x'}$ for $1 > x > x' \geq 0$, and $E_{x}(E_{0}, B_{0})$ is decreasing in $B_{0}$.
\end{lemma}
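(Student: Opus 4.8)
The plan is to prove the two assertions separately.

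For the first inequality, the key observation is that the person never works faster than the ``efficient'' rate, i.e. $0 \le e_x \le \frac{E_x}{1-x}$ pointwise. This is immediate in the first two branches of the IVP, and in the third branch it follows from $G(x, e_x^{*}, E_x) = B \le G(x, \tfrac{E_x}{1-x}, E_x)$ (since the third branch is invoked only when the second does not apply) together with $G$ being strictly increasing in its second argument, which forces $e_x^{*} \le \frac{E_x}{1-x}$. Because $e_x(x, E_x, B)$ is continuous in $x$ along the solution (it is the composition of the jointly continuous map $e_x(\cdot)$ from Lemma~\ref{lemma:continuous-time-continuous-in-parameters} with the continuous path $E_x$), the solution $E_x$ is $C^1$ on each interval $[0, 1-\varepsilon]$, and hence so is $\phi(x) := \frac{E_x}{1-x}$, with
\[
  \phi'(x) = \frac{\dot E_x\,(1-x) + E_x}{(1-x)^2} = \frac{E_x - (1-x)\,e_x}{(1-x)^2} \ge 0 .
\]
So $\phi$ is nondecreasing, which is precisely $\frac{E_x}{1-x} \ge \frac{E_{x'}}{1-x'}$ for $x > x'$; taking $x' = 0$ gives $E_x \ge E_0(1-x)$, i.e. at most a fraction $x$ of the total effort is completed by time $x$ — the statement used in the proof of Lemma~\ref{lemma:continuous-time-continuous-in-parameters}.

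For monotonicity in $B_0$, I would first show that, for fixed $x < 1$ and $E > 0$, the momentary effort $e_x(x, E, B)$ is nondecreasing in $B$. This is a short case check on the three branches: on $\{B < G(x,0,E)\}$ it is constantly $0$; on the middle region it equals $(D')^{-1}\!\bigl(\tfrac{B - (1-\alpha)(1-x)D(E/(1-x))}{\alpha E}\bigr)$, which increases in $B$ since $D'$ is increasing and $\alpha E > 0$; on $\{B > G(x, \tfrac{E}{1-x}, E)\}$ it is constantly $\frac{E}{1-x}$; and the pieces agree at the two thresholds. Consequently the right-hand side of the IVP, $(x, E) \mapsto -e_x(x, E, B)$, is nonincreasing in $B$. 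By Lemma~\ref{lemma:continuous-time-continuous-in-parameters} this right-hand side is Lipschitz in $E$ on $[0, 1-\varepsilon] \times [\underline{E}, \bar E]$, and by the first part of the present lemma the trajectory satisfies $E_x(E_0, B_0) \ge E_0(1-x) \ge E_0 \varepsilon$, so for $B_0' > B_0$ both solutions $x \mapsto E_x(E_0, B_0)$ and $x \mapsto E_x(E_0, B_0')$ stay in such a strip. Since they share the initial value $E_0$ and their right-hand sides are ordered ($-e_x(x,E,B_0') \le -e_x(x,E,B_0)$), the standard comparison theorem for ordinary differential equations — equivalently, the Grönwall estimate used in Lemma~\ref{lemma:increasing-in-E}, applied to $\Delta_x = E_x(E_0,B_0) - E_x(E_0,B_0')$ — yields $E_x(E_0, B_0') \le E_x(E_0, B_0)$ on $[0, 1-\varepsilon]$; letting $\varepsilon \downarrow 0$ gives the claim for all $x < 1$.

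The main obstacle is that one cannot simply rerun the argument of Lemma~\ref{lemma:increasing-in-E} verbatim: there $\Delta$ is strictly positive at $x = 0$, whereas here $\Delta_0 = 0$, so the ``pick the point near $x^{*}$ maximizing $\Delta$'' trick has nothing to start from. What makes the comparison go through instead is monotonicity of the vector field in the parameter $B$ together with its Lipschitz dependence on $E$ — and the one delicate point is that both properties degrade as $E_x \to 0$ or $x \to 1$, which is exactly why part (i) of the lemma must be proved first, to keep the trajectory bounded away from $E = 0$ on every $[0, 1-\varepsilon]$.
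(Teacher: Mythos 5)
Your proof is correct. For the first inequality you and the paper use the same key input, the pointwise bound $0 \le e_x \le \frac{E_x}{1-x}$, but package it differently: the paper integrates the logarithmic differential inequality $\frac{d}{dx}\log E_x \ge -\frac{1}{1-x}$ (its displayed derivation contains sign and limit-of-integration typos, and implicitly needs $E_x>0$ to take logs), whereas you simply differentiate $\phi(x)=\frac{E_x}{1-x}$ and observe $\phi'\ge 0$; the two are equivalent, and yours is marginally cleaner and avoids the division by $E_x$. The substantive difference is in the second claim: the paper omits the proof entirely, asserting it is ``similar to the proof of Lemma~\ref{lemma:increasing-in-E}.'' You correctly point out that a verbatim rerun of that argument fails here because $\Delta_0 = 0$ rather than $\Delta_0 > 0$, so the ``pick a point near $x^{*}$ where $\Delta$ is maximal'' step has no foothold. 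Your substitute -- establish that $e(x,E,B)$ is nondecreasing in $B$ branch by branch, then invoke the one-sided comparison theorem for scalar ODEs with ordered right-hand sides sharing an initial condition, using the Lipschitz-in-$E$ property from Lemma~\ref{lemma:continuous-time-continuous-in-parameters} on the strip that part (i) keeps the trajectories inside -- is a complete and correct argument, and it buys something the paper's pointer does not: an explicit reason the comparison goes through despite the zero initial gap. The only cosmetic points are that the middle-branch formula requires $\alpha>0$ (the $\alpha=0$ case degenerates but is trivial), and that your observation that part (i) with $x'=0$ recovers the bound $E_x \ge E_0(1-x)$ cited in the footnote of Lemma~\ref{lemma:continuous-time-continuous-in-parameters} usefully makes explicit a dependency the paper leaves implicit.
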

\begin{proof}

  \begin{align*}
    \dot{E}_{x} \geq \frac{E_{x}}{1 - x} & \implies \frac{\dot{E}_{x}}{E_{x}} \geq - \frac{1}{1 - x} \\
                                         & \implies \frac{d}{dx} \log(E_{x}) \geq - \frac{1}{1 - x} \\
                                         & \implies \log(E_{x'}) - \log(E_{x}) \geq - \int_{x}^{x'} \frac{1}{1 - y} dy \\
                                         & \implies \log(\frac{E_{x'}}{E_{x}}) \geq \int_{x}^{x'} \frac{d}{dy} \log(1 - y) dy \\
                                         & \implies \log(\frac{E_{x'}}{E_{x}}) \geq \log \frac{1 - x}{1 - x'} \\
                                         & \implies \frac{E_{x'}}{E_{x}} \geq \frac{1 - x}{1 - x'}
  \end{align*}
  The proof that $E_x(E_0, B_0)$ is decreasing in $B_0$ is similar to the proof of lemma \ref{lemma:increasing-in-E}, and thus I omit it.
\end{proof}

Now let us prove that \(\tau_0(E_0, B_0)\) and \(\tau_F(E_0, B_0)\) are
continuous in \(E_0\).

\begin{lemma}
\label{lemma:tau-and-utility-continuous}
Suppose $D'(E) \to \infty$ as $E \to \infty$ and $D''(\cdot) > d$ for some $d > 0$. Then $\tau_{0}(E,B)$ is increasing in $E$ and decreasing in $B$, and if $\tau_{0}(E,B) \in (0,1)$ it is continuous in $(E, B)$ in a neighborhood of $(E_0,B_{0})$. Similarly, $\tau_F(E, B)$ is decreasing in $E$ and increasing in $B$, and if $\tau_{F}(E,B) \in (0, 1)$ it is continuous in $(E,B)$ in a neighborhood of $(E_0,B_{0})$.
\end{lemma}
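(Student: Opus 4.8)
The plan is to transfer everything to the two scalar functions
\begin{align*}
  g(x,E,B) &:= G\bigl(x,0,E_x(E,B)\bigr)-B, \\
  g_F(x,E,B) &:= G\Bigl(x,\tfrac{E_x(E,B)}{1-x},E_x(E,B)\Bigr)-B,
\end{align*}
so that $1-\tau_0(E,B)=\inf\{x:g(x,E,B)\ge 0\}$ and $1-\tau_F(E,B)=\sup\{z:g_F(\cdot,E,B)>0\text{ on }[0,z)\}$. I first record three elementary facts about $G$: for fixed $x<1$ the maps $E\mapsto G(x,0,E)$ and $E\mapsto G\bigl(x,\tfrac{E}{1-x},E\bigr)=(1-x)\,\tilde D\bigl(\tfrac{E}{1-x}\mid\tfrac{E}{1-x}\bigr)$ are increasing (immediate from $D'\ge 0$, $D''>0$), and $G$ is strictly increasing in $x$ --- the last was already used in the proof of Lemma~\ref{lemma:absorbing}.

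\emph{Monotonicity.} Feeding Lemma~\ref{lemma:increasing-in-E} ($E_x$ increasing in $E$) and Lemma~\ref{lemma:increasing-in-B} ($E_x$ decreasing in $B$) into the two facts above shows that, pointwise in $x$, both $g$ and $g_F$ are increasing in $E$ and strictly decreasing in $B$; for $B$ the explicit $-B$ term and the channel through $E_x$ reinforce each other. Hence the super-level set $\{x:g(x,E,B)\ge 0\}$ expands as $E$ grows and contracts as $B$ grows, so its infimum moves the opposite way --- which is exactly "$\tau_0$ increasing in $E$, decreasing in $B$"; running the same set comparison for $\{x:g_F(x,E,B)>0\}$ (together with the inclusion-monotone map $A\mapsto\sup\{z:[0,z)\subseteq A\}$) gives "$\tau_F$ decreasing in $E$, increasing in $B$."

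\emph{Continuity.} Fix $(E_0,B_0)$ with $T_0:=1-\tau_0(E_0,B_0)\in(0,1)$; the case $\tau_F(E_0,B_0)\in(0,1)$ is essentially the same, so I describe $\tau_0$. By definition $g(\cdot,E_0,B_0)<0$ on $[0,T_0)$, and by continuity $g(T_0,E_0,B_0)=0$. For $x>T_0$ the piecewise definition of $e_x$ forces $e_{T_0}=0$ (because $G(T_0,0,E_{T_0})=B_0$ already solves $G(T_0,e^{*},E_{T_0})=B_0$), and then Lemma~\ref{lemma:absorbing} keeps $e_x=0$ for every $x>T_0$; thus $E_x$ is frozen at $E_{T_0}$ and $g(x,E_0,B_0)=G(x,0,E_{T_0})-B_0$ is \emph{strictly} positive on $(T_0,1)$ because $G$ is strictly increasing in $x$. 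Now choose small $\eta>0$ with $T_0+\eta<1$ and $\varepsilon\in(0,1-T_0-\eta)$; on $[0,1-\varepsilon]$ the solution $E_x(E,B)$, hence $g$, is jointly continuous by Lemma~\ref{lemma:continuous-time-continuous-in-parameters}. On the compact $[0,T_0-\eta]$ we have $g(\cdot,E_0,B_0)\le -c<0$ for some $c>0$, so uniform continuity yields a neighborhood $\mathcal{N}$ of $(E_0,B_0)$ on which $g(\cdot,E,B)<0$ throughout $[0,T_0-\eta]$, forcing $1-\tau_0(E,B)\ge T_0-\eta$; and since $g(T_0+\eta,E_0,B_0)>0$, shrinking $\mathcal{N}$ gives $g(T_0+\eta,E,B)>0$, forcing $1-\tau_0(E,B)\le T_0+\eta$. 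So $|\tau_0(E,B)-\tau_0(E_0,B_0)|\le\eta$ on $\mathcal{N}$; since $\tau_0\in(0,1)$ is an open condition, the argument repeats at nearby points, giving continuity on a neighborhood. For $\tau_F$ the only modification is the freezing fact: from the moment the person works fully one has $\dot E_x=-E_x/(1-x)$, so $E_x/(1-x)$ is constant past $T_F:=1-\tau_F$, and then $g_F(x,E_0,B_0)=(1-x)\,\tilde D(c_0\mid c_0)-B_0$ with $c_0:=E_{T_F}/(1-T_F)>0$ is strictly \emph{decreasing}, hence strictly negative, for $x>T_F$, while $g_F(\cdot,E_0,B_0)>0$ on $[0,T_F)$ by definition; the same compactness-and-continuity sandwich pins $1-\tau_F(E,B)$ to within $\eta$ of $T_F$.

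The crux, I expect, is precisely this clean-sign-change step. The sandwich needs $g$ (resp.\ $g_F$) to be \emph{strictly} of one sign on a compact interval stopping just short of the transition \emph{and} strictly of the other sign immediately past it, and the post-transition strictness is not free: it rests on $E_x$ becoming constant after the person gives up, and on $E_x/(1-x)$ becoming constant after she commits to working fully, each of which uses the exact case structure of the ODE together with Lemma~\ref{lemma:absorbing}. A secondary check is that, under the hypotheses $\tau_0\in(0,1)$ (resp.\ $\tau_F\in(0,1)$) --- which by Proposition~\ref{prop:multi-period-task-continuous} mean the task is abandoned (resp.\ completed) --- we indeed have $E_{T_0}>0$ (resp.\ $E_{T_F}>0$), so no degenerate "indifference on an interval" configuration intervenes at the transition.
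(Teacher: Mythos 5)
Your proof is correct, and the monotonicity half is essentially the paper's argument (compare the super-/sub-level sets of $G(x,0,E_x)-B$ using Lemmas~\ref{lemma:increasing-in-E} and \ref{lemma:increasing-in-B} and the monotonicity of $G$ in its arguments). The continuity half, however, takes a genuinely different and in one respect tighter route. The paper argues by contradiction: it extracts sequences $(E_i,B_i)\to(E_0,B_0)$ realizing a hypothetical jump of size $\delta$ and rules out each direction using uniform convergence of $E_x(E,B)$ on compacts (Lemma~\ref{lemma:continuous-time-continuous-in-parameters}). You instead prove directly that $g(x):=G(x,0,E_x(E_0,B_0))-B_0$ is strictly negative on $[0,T_0)$ and \emph{strictly} positive on $(T_0,1)$ — because $e_x=0$ freezes $E_x$ past $T_0$ while $G$ strictly increases in $x$ (and, for $\tau_F$, because $E_x/(1-x)$ is constant once the person works fully, making $g_F$ strictly decreasing) — and then sandwich $1-\tau_0(E,B)$ between $T_0-\eta$ and $T_0+\eta$ by uniform continuity on compacts. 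What your approach buys is precisely the step the paper's ``Contradiction Case 1'' glosses over: passing to the limit in the strict inequalities $G<B_i$ only yields $G\le B_0$, which does not by itself contradict the infimum definition of $\tau_0$; one needs the post-transition \emph{strict} inequality that you establish via the freezing argument and the absorbing property (Lemma~\ref{lemma:absorbing}). Your closing remarks correctly identify this as the crux, and your verification that $e_{T_0}=0$ (resp.\ $e_{T_F}=E_{T_F}/(1-T_F)$) even at the indifference point $G=B$ is the right way to avoid circularity there. One small caveat: as in the paper, the argument for the strict sign change should be run as a ``largest interval on which the case persists'' continuation argument rather than a bare citation of Lemma~\ref{lemma:absorbing}, since that lemma's hypothesis is a strict inequality while at $T_0$ one only has equality; your sketch contains the ingredients to do this but does not spell it out.
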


\begin{proof}
The proofs are essentially identical for $\tau_0$ and $\tau_F$, so I prove the first. We know

\begin{equation*}
    \tau_0(E_0, B_0) = \inf \{ \tau: 1 - \tau \in [0, 1] \text{ and } G(x, 0, E_{x}(E_0, B_0)) < B_0 \text{ } \forall x < 1 - \tau \} = \inf \Gamma_0
\end{equation*}

Notice that $1 \in \Gamma_0$, thus $\tau_0$ always exists. Then take $x > 1 - \tau_0$. Suppose $E_0^{'} > E_0$ and let $\tau_0 := \tau_0(E_0, B_0)$ and $\tau_0^{'} := \tau_0(E_0^{'}, B_0)$ and similarly for $\Gamma_0$ and $\Gamma_0^{'}$. Note that if $G(x, 0, E_x(E_0, B_0)) > B_{0}$ then, by lemma \ref{lemma:increasing-in-E}, $E_x^{'} \geq E_x$, and thus (since $G$ is increasing in its third argument) $G(x, 0, E_x(E_0, B_0)) > B_{0}$. Therefore for $\tau \notin \Gamma_0$, then there exists some $x < 1 - \tau$ with $G(x, 0, E_x) \geq B_{0}$ and therefore $G(x, 0, E_x^{'}) > B_{0}$ so that $\tau \notin \Gamma_0^{'}$. Hence $\Gamma_0^{'} \subset \Gamma_0$ and thus $\tau_0^{'} \geq \tau_0$.

Notice that as we increase $B_{0}$ to $B_{0}^{'}$, every $\tau$ in $\Gamma_{0}$ is necessarily also in $\Gamma_{0}^{'}$: if $G(x, 0, E_{x}(E_{0},B_{0})) < B_{0}$, then $G(x, 0, E_{x}(E_{0}, B_{0}^{'})) < B_{0}^{'}$, since $E_{x}$ weakly decreases in $B_{0}$, and thus $G(\cdot)$ decreases, while the RHS increases. Thus $\Gamma_{0} \subset \Gamma_{0}^{'}$, hence $\tau_{0}^{'} \leq \tau_{0}$.

Let us now show continuity for $\tau_{0} \in (0, 1)$. Fix $(E_{0}, B_{0})$, then $\tau_{0}$ is s.t. for $x < 1 - \tau_{0}$ we have $G(x, 0, E_{x}(E_{0}, B_{0})) < B_{0}$ and for every $\varepsilon > 0$ there is some $x \in (1 - \tau_{0}, 1 - \tau_{0} + \varepsilon)$ with $G(x, 0, E_{x}(E_{0}, B_{0})) \geq B_{0}$. Suppose by contradiction that $\tau_{0}$ is not continuous. Then there is some $\delta$ s.t. for every $\varepsilon_{i}$ we have $(E_{i}, B_{i})$ within $\varepsilon_{i}$ distance from $(E_{0}, B_{0})$ with either some $x_{i} \leq 1 - \tau_{0} - \delta$ and $G(x_{i}, 0, E_{x_{i}}(E, B)) \geq B$ so that $\tau_{0}^{'} \geq \tau_{0} + \delta$, or we have for all $x < 1 - \tau_{0} + \delta$ we have $G(x, 0, E_{x}(E, B)) < B$, so that $\tau_{0}^{'} \leq \tau_{0} - \delta$.

\textbf{Contradiction Case 1:} For $\varepsilon_{i} \to 0$, there is a sequence $(E_{i}, B_{i})$ s.t. $G(x, 0, E_{x}(E_{i}, B_{i})) < B_{i}$ for all $x < 1 - \tau_{0} + \delta$.

Since $E_{x}(E,B)$ is (Lipschitz) continuous in $(E, B)$ and $G(\cdot)$ in its arguments in the range observed, this converges uniformly for all $x < 1 - \tau_{0} + \delta$. Applying this to the closed range $x < 1 - \tau_{0} + 1/2 \delta$, we find that $G(x, 0, E_{x}(E_{0}, B_{0})) < B_{0}$ for all $x < 1 - \tau_{0} + 1/2 \delta$, which contradicts the value of $\tau_{0}$.

\textbf{Contradiction Case 2:} For $\varepsilon_{i} \to 0$, there is a sequence $(E_{i}, B_{i})$ and some $x_{i} \leq 1 - \tau_{0} - \delta$ s.t. $G(x, 0, E_{x}(E_{i}, B_{i})) \geq B_{i}$.

Given that the ranges are all finite, $(E_{i}, B_{i}, x_{i})$ converges to $(E_{0}, B_{0}, x)$, with $x \leq 1 - \tau_{0} - \delta$, with $G(x, 0, E_{x}(E_{0}, B_{0})) \geq B_{0}$. But this directly contradicts the definition of $\tau_{0}$, since this implied that $G(x, 0, E_{x}(E_{0}, B_{0})) < B$ for all $x < 1 - \tau_{0}$.

Hence $\tau_{0}(E, B)$ is continuous in $(E, B)$
\end{proof}

\begin{lemma}
\label{lemma:disutility-to-infinity}
Let $D$ be convex with $D'(e) \to \infty$ as $e \to \infty$. Then $\forall K > 0$, $\exists E$ s.t. $D(e) > K \cdot e$ $\forall e > E$. That is, $D(e)/e \to \infty$ as $e \to \infty$.
\end{lemma}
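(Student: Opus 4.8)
The plan is to deduce the ``superlinear growth'' statement $D(e)/e \to \infty$ directly from the hypothesis $D'(e) \to \infty$, using that $D$ is (continuously) differentiable; the point is that once the slope $D'$ stays above a large constant, the function must eventually lie above any prescribed line through the origin. So the proof reduces to making that crossover estimate quantitative.

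Concretely, fix an arbitrary $K > 0$. First I would use $D'(e) \to \infty$ to pick a threshold $E_0$ with $D'(t) \geq 2K$ for every $t \geq E_0$. Next, for any $e > E_0$ I would write $D(e) = D(E_0) + \int_{E_0}^{e} D'(t)\,dt$ and bound the integrand from below by $2K$, obtaining $D(e) \geq D(E_0) + 2K(e - E_0) = 2Ke + \bigl(D(E_0) - 2KE_0\bigr)$. Rearranging gives $D(e) - Ke \geq Ke + D(E_0) - 2KE_0$, and the right-hand side is strictly positive precisely when $e > 2E_0 - D(E_0)/K$. Therefore, setting $E := \max\{\, E_0,\ 2E_0 - D(E_0)/K \,\}$, one gets $D(e) > Ke$, equivalently $D(e)/e > K$, for all $e > E$. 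Since $K > 0$ was arbitrary, this is exactly the claim that $D(e)/e \to \infty$.

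I do not expect a real obstacle here: this is a routine one-variable estimate, and no earlier result in the excerpt is needed. The only two points that deserve a moment's care are (i) choosing the safety margin $2K$ rather than $K$, so that the linear lower bound $2Ke + (D(E_0)-2KE_0)$ genuinely overtakes $Ke$; and (ii) taking $E$ to be the maximum of $E_0$ and the crossover point $2E_0 - D(E_0)/K$, since a priori $D(E_0)$ need not be nonnegative — though under the paper's standing assumptions $D(0)=0$ and $D'(\cdot)\geq 0$ we in fact have $D(E_0)\geq 0$, and one could simplify to $E := 2E_0$. If one preferred to avoid the integral, convexity yields the alternative bound $D(e) \geq D(E_0) + D'(E_0)(e-E_0)$, but that exploits only the slope at $E_0$; the clean way to use that $D'$ is \emph{eventually} large is the integral estimate above, which is legitimate because $D$ is continuously differentiable.
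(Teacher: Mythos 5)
Your proof is correct and follows essentially the same route as the paper's: both pick a threshold beyond which $D'$ exceeds $2K$ and integrate that lower bound to show $D(e)$ eventually dominates $Ke$. The only cosmetic difference is that the paper obtains the threshold via convexity from a single point $D'(E/2) > 2K$ and absorbs the constant by noting the integration interval has length at least $e/2$, whereas you invoke the limit directly and track the affine constant explicitly; both are equally valid.
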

\begin{proof}
Since $D'(e) \to \infty$, pick $E$ s.t. $D'(\frac{E}{2}) > 2 \cdot K$. Then for $e > E$

\begin{align*}
    D(e) = \int_0^{e} D'(s) ds \geq \int_{E/2}^e D'(s) ds \geq \int_{E/2}^E 2 \cdot K ds \geq \frac{e}{2} 2 \cdot K = e \cdot K
\end{align*}
\end{proof}

Proof of proposition \ref{prop:multi-period-task-continuous}:

\begin{proof}
Let $B_L = (1 - \alpha) D(E_0) + \alpha D'(0) E_0$. Then $G(0, 0, E_0) = B_L$ and therefore if $B < B_L$ we have $G(0, 0, E_0) > B_L$ and hence by lemma \ref{lemma:absorbing} we know that $G(x, 0, E_x) > B_L$ for all $x \geq 0$. Hence $e_x = 0$ and $\tau_0(E_0, B_0) = 1$. Similarly, if $B_H = (1 - \alpha) D(E_0) + \alpha D'(E_0) E_0$, then $G(0, E_0, E_0) = B_H$. Hence if $B > B_H$ we have $G(0, E_0, E_0) < B$ and again by lemma \ref{lemma:absorbing} this holds for all $x \geq 0$ and thus $\tau_F = 1$ and $e_x = E_0$ (this last part in effect requires solving the same differential equation as we did in lemma \ref{lemma:increasing-in-B}, which I omit).

Moreover, note that if $B < B_H$ then we have that $G(0, E_0, E_0) > B$ and thus (by continuity of $E_x$ and $G$) we have that $G(x, \frac{E_x}{1 - x}, E_x) > B$ for all sufficiently small $x$. Therefore, $\tau_F < 1$. Similarly, if $B > B_L$ we have that $\tau_0 < 1$.

It is clear by lemma~\ref{lemma:absorbing} that if $\tau_0 > 0$ then $\tau_F = 0$ and if $\tau_F > 0$ then $\tau_0 = 0$: if $\tau_{0} > 0$, then there is some $x \in [1 - \tau_{0}, 1 - \tau_{0} + \varepsilon)$ such that $e_{x} = 0$, hence by lemma~\ref{lemma:absorbing} we have $e_{x'} = 0$ for all $x' > x$, so the person never works efficiently on the task. The other direction is similar. Let $B_{C,0} = \inf \{B: \tau_0(E_0, B) = 0 \}$ -- roughly the smallest $B$ for which there is some work done at all times $x$. Then because $\tau_0$ is decreasing in $B_0$ by lemma~\ref{lemma:tau-and-utility-continuous}, we know that if $B < B_{C, 0}$ then $\tau_0(E_0, B) > 0$, since if $\tau_0(E_0, B) = 0$, then $\tau_0(E_0, B') = 0$ for all $B' \geq B$, contradicting the definition of $B_{C, 0}$. Similarly we can define $B_{C, F} = \limsup \{B: \tau_F(E_0, B) = 0\}$ and show that if $B > B_{C, F}$ then $\tau_F > 0$.


To finish the proof, we need to show that $B_{C, F} = B_{C, 0}$. Notice that if $B_0 \in [B_{C, 0}, B_{C, F}]$ we have that $\tau_0 = 0$ and $\tau_F = 0$. Therefore $G(x, e_x, E_x(E_0, B_0)) = B$ for all $x < 1$. Suppose that $B_{C, 0} < B_{C, F}$.\footnote{We cannot have $B_{C,0} > B_{C, F}$ since then $\tau_{F} > 0$ and $\tau_{0} > 0$ for all $x \in (B_{C,F}, B_{C,0})$, but both cannot happen jointly.} Let $e_{x,0}$ be the effort for $B_{C,0}$ and $e_{x,F}$ the effort for $B_{C,F}$. Then, since $G(x, e_{x, 0}, E_{x, 0}) = B_{C, 0} < B_{C, F} = G(x, e_{x, F}, E_{x, F})$ for all $x$, we must have that $e_{x, F} > e_{x, 0}$ or $E_{x, F} > E_{x, 0}$ for every $x$. Since $E_{0, C} = E_{0, F}$, by continuity of $E_x$ in $x$ and $G$ in $E$, we can pick $\varepsilon > 0$ such that for all $x < \varepsilon$, $G(x, e_{x,0}, E_{x, F})$ is arbitrarily close to $G(x, e_{x, 0}, E_{x, 0}) = B_{C, 0}$. Therefore the inequality holds only if $e_{x, F} > e_{x, 0}$ for $x < \varepsilon$, and thus $E_{x, F} < E_{x, 0}$. We can then show that $e_{x, F} > e_{x, 0}$ for all $x$. Suppose not, then we must have that $E_{x, F} > E_{x, 0}$ for some $x$ and therefore there exists a smallest $x^{*} > \varepsilon$ such that $E_{x^{*}, F} = E_{x^{*}, 0}$. But $e_{x, F} > e_{x, 0}$ for all $x < x^{*}$, therefore $E_{x^{*}, F} < E_{x^{*}, 0}$, which is a contradiction.

Thus we have shown that $E_{x, F} < E_{x, 0}$ and that $e_{x, F} > e_{x, 0}$ for all $x > 0$. Let $\delta = E_{\frac{1}{2}, 0} - E_{\frac{1}{2}, F} > 0$, then $E_{x, 0} - E_{x, F} \geq \delta$ for $x > \frac{1}{2}$ and therefore $E_{x, 0} \geq \delta > 0$ for all $x$. Therefore $D(\frac{E_{x,0}}{1 - x})(1 - x) \geq D(\frac{\delta}{1 - x}) \frac{1 - x}{\delta} \delta \to \infty$ as $x \to 1$ by lemma \ref{lemma:disutility-to-infinity}. But this means that $G(x, 0, E_{x, 0}) \to \infty$ and therefore that $G(x, 0, E_{x, 0}) > B$ as $x \to 1$, so that $\tau_0 > 0$. Therefore, we cannot have that $B_{C,0} < B_{C, F}$, so that $B_{C,0} = B_{C,F} = B_{C}$, and we are done.
\end{proof}

Now let us show that the utility is continuous and decreasing on
\((B_L, B_C)\) and continuous and increasing on \((B_C, B_H)\).

\begin{lemma}
\label{lemma:utility-bounds}
The utility $u_0(E_0, B_0) := - \int_0^1 D(e_x) dx$ is continuous and decreasing on $(B_L(E_0), B_C(E_0))$ and the utility $u_F(E_0, B_0) := B - \int_0^1 D(e_x) dx$ is continuous and increasing on $(B_C(E_0), B_H(E_0))$.
\end{lemma}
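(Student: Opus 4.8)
The plan is to write each utility as a single integral over a compact subinterval of $[0,1)$ on which the effort path is fully understood, get continuity from the continuity results already in hand, and get monotonicity from a comparison of effort paths together with convexity of $D$. For continuity, on $(B_L,B_C)$ Proposition~\ref{prop:multi-period-task-continuous} gives $\tau_0\in(0,1)$, and by Lemma~\ref{lemma:absorbing} $e_x=0$ for a.e.\ $x>1-\tau_0$, so $u_0(E_0,B_0)=-\int_0^{1-\tau_0(E_0,B_0)}D(e_x)\,dx$; the integrand $e_x(E,B)=e(x,E_x(E,B),B)$ is continuous and bounded on $[0,1-\varepsilon]$ by Lemma~\ref{lemma:continuous-time-continuous-in-parameters}, and $\tau_0$ is continuous near $(E_0,B_0)$ by Lemma~\ref{lemma:tau-and-utility-continuous}, so dominated convergence gives continuity of $u_0$. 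On $(B_C,B_H)$, $\tau_F\in(0,1)$, and for $x>1-\tau_F$ the person works efficiently at the constant rate $\bar e:=E_{1-\tau_F}/\tau_F$ (solve $\dot E_x=-E_x/(1-x)$ as in Lemma~\ref{lemma:increasing-in-B}), so $u_F=B_0-\int_0^{1-\tau_F}D(e_x)\,dx-\tau_F\,D(\bar e)$; each piece is continuous in $(E_0,B_0)$ by the same lemmas.

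For monotonicity of $u_0$, fix $B_0<B_0'$ in $(B_L,B_C)$. On this range the path is ``inefficient, then zero'' (no efficient phase, since $\tau_F=0$). Wherever the $B_0$-person works she is in the inefficient regime, with effort $f(x,E_x,B_0)$, where $f$ is increasing in $B$ and decreasing in $E$; since $E_x(E_0,B_0')\le E_x(E_0,B_0)$ by Lemma~\ref{lemma:increasing-in-B}, this gives $e_x'\ge e_x$ at every such $x$, and where the $B_0$-person is idle trivially $e_x=0\le e_x'$ (the $B_0'$-person cannot be idle while the $B_0$-person works, since then $G(x,0,E_x')\le G(x,0,E_x)\le B_0<B_0'$). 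Hence $e_x'\ge e_x$ a.e., so $\int D(e_x')\,dx\ge\int D(e_x)\,dx$ and $u_0(E_0,B_0')\le u_0(E_0,B_0)$; strictness follows near $x=0$, where both paths start at $E_0$ and $f$ is strictly increasing in $B$.

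For monotonicity of $u_F$, fix $B_0<B_0'$ in $(B_C,B_H)$ and write $x_F=1-\tau_F(E_0,B_0)\ge x_F'=1-\tau_F(E_0,B_0')$ (Lemma~\ref{lemma:tau-and-utility-continuous}). On $[0,x_F']$ both people are in the inefficient regime, so as above $e_x'\ge e_x$; put $\Delta=\int_0^{x_F'}(e_x'-e_x)\,dx=E_{x_F'}-E_{x_F'}'\ge0$. On $[x_F',1)$ the $B_0'$-person works at the constant efficient rate $\bar e':=E_{x_F'}'/(1-x_F')$, and because $E_x'/(1-x)$ is increasing in $x$ (Lemma~\ref{lemma:increasing-in-B}), in the inefficient regime $e_x'\le E_x'/(1-x)\le\bar e'$ on $[0,x_F')$. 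Two convexity bounds then finish it. First, the tangent inequality for $D$ together with $e_x'\le\bar e'$ gives $\int_0^{x_F'}\big(D(e_x')-D(e_x)\big)\,dx\le D'(\bar e')\,\Delta$. Second, since the task is completed on $(B_C,B_H)$ the $B_0$-person does exactly $E_{x_F'}=E_{x_F'}'+\Delta$ units over $[x_F',1)$, so Jensen and the tangent inequality at $\bar e'$ give
\[
  \int_{x_F'}^{1}D(e_x)\,dx\;\ge\;(1-x_F')\,D\!\Big(\tfrac{E_{x_F'}}{1-x_F'}\Big)\;\ge\;(1-x_F')\,D(\bar e')+D'(\bar e')\,\Delta\;=\;\int_{x_F'}^{1}D(e_x')\,dx+D'(\bar e')\,\Delta .
\]
Adding the two, $\int_0^1D(e_x')\,dx\le\int_0^1D(e_x)\,dx$, hence $u_F(E_0,B_0')=B_0'-\int_0^1D(e_x')\,dx>B_0-\int_0^1D(e_x)\,dx=u_F(E_0,B_0)$, strictness coming directly from $B_0'>B_0$.

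The continuity and the $u_0$ step are essentially bookkeeping on top of Lemmas~\ref{lemma:continuous-time-continuous-in-parameters}--\ref{lemma:tau-and-utility-continuous}. The delicate point, and what I expect to be the main obstacle, is the $u_F$ comparison: one must verify that effort in the inefficient regime never exceeds the \emph{current} efficient rate (so that $e_x'\le\bar e'$), match the extra work $\Delta$ done early by the higher-benefit person against the savings she reaps late, and arrange for both convexity inequalities to point the same way --- which is exactly where one uses that the task is \emph{completed} throughout $(B_C,B_H)$, so total effort is pinned at $E_0$ and the final stretch is run efficiently.
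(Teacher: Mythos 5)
Your proof is correct, and for the most part it parallels the paper's: continuity is obtained the same way (Lipschitz continuity of $E_x$ and $e_x$ from Lemma~\ref{lemma:continuous-time-continuous-in-parameters} plus continuity of $\tau_0,\tau_F$ from Lemma~\ref{lemma:tau-and-utility-continuous}), and the monotonicity of $u_0$ uses the identical pointwise comparison $e_x' \geq e_x$ via $E_x' \leq E_x$ and monotonicity of $G$ in its second and third arguments. Where you genuinely diverge is the monotonicity of $u_F$. The paper treats the effort path as a distribution over $[0,1)$, argues that the lower-$B$ path is a mean-preserving spread of the higher-$B$ path, and invokes strict convexity of $D$ --- but it concedes in a footnote that a fully rigorous version would still require showing that effort in the partial regime never exceeds the terminal efficient rate. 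Your argument supplies exactly that missing ingredient ($e_x' \leq E_x'/(1-x) \leq \bar e'$, using the monotonicity of $E_x/(1-x)$ in $x$ from Lemma~\ref{lemma:increasing-in-B}) and then replaces the distributional machinery with two explicit convexity estimates: a tangent-line bound at $\bar e'$ controlling the extra early disutility $D'(\bar e')\Delta$, and a Jensen-plus-tangent bound showing the late savings are at least $D'(\bar e')\Delta$, with the two cancelling exactly because completion pins total effort at $E_0$. The paper's route is shorter and more intuitive (effort is ``smoothed'' toward the efficient rate as $B$ rises); yours is more elementary and closes the rigor gap the paper flags, at the cost of some bookkeeping. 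Both hinge on the same two facts --- convexity of $D$ and conservation of total effort on $(B_C,B_H)$ --- so this is a different execution of the same underlying idea rather than a different idea.
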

\begin{proof}
Notice that when $B \in (B_L, B_C)$ then we know that $\tau_0 \in (0, 1)$ and the task is not completed, hence the definition of the utility as $u_0$ is correct. Moreover $u_0 = \int_0^{1 - \tau_0} D(e_x) dx$. We can show that $\tau_0$ and $E_x$ are continuous and decreasing in $B_0$. Picking $B_0 < B_0^{'}$, we therefore have that $\tau_0^{'} < \tau_0$ and that for $x \leq 1 - \tau_0$ we have $G(x, e_x, E_x) = B_0 < B_0^{'} = G(x, e_x^{'}, E_x^{'})$. Since $E_x^{'} \leq E_x$ we therefore have that $e_x^{'} > e_x$ and therefore $u_0^{'} > \int_0^{1 - \tau_0} D(e_x^{'}) dx > \int_0^{1 - \tau_0} D(e_x) dx = u_0$. Moreover, if $B_0^{'}$ is close to $B_0$ then $E_x$ is close to $E_x^{'}$ by Lipschitz continuity and therefore $e_x^{'}$ and $e_x$ are close together, since $e_x$ is Lipschitz continuous in all the parameters as well (I haven't shown this in detail, but this is where I use the condition $D'(0) > 0$). Therefore the $u_0^{'}$ and $u_0$ are close.

Now suppose $B_0$, $B_0^{'} \in (B_C, B_H)$ then $\tau_F \in (0, 1)$. Let $B_0 < B_0^{'}$. We can show in a similar way as before that $\tau_F^{'} > \tau_F$ and that $e_x^{'} > e_x$ for $x \leq 1 - \tau_0^{'}$. Then notice that $\int_0^1 e_x = E_0 = \int_0^1 e_x^{'}$. Let $F(e) := \mathbb{P}(e_{x} \leq e) = \int_0^1 \mathbbm{1}(e_x \leq e) dx$ and $G(e) := \mathbb{P}(e_{x}^{'} \leq e) = \int_0^1 \mathbbm{1}(e_x^{'} \leq e) dx$, where the interpretation as probabilities is to help intuition, although it can be made formal by drawing $x$ uniformly from $[0,1)$. We want to show that $\int_{0}^{1} D(e) dF(e) > \int_{0}^{1} D(e) dG(e)$. By strict convexity of $D(\cdot)$, this holds if $F(\cdot)$ is a mean-preserving spread of $G(\cdot)$. Let $\bar{e}_{G} = e_{1 - \tau_F^{'}}$, be the effort the person exerts under $G(\cdot)$ once they work fully, with $B_{0}^{'}$. Then if $e < \bar{e}$, $G(e) < F(e)$, since $e_x^{'} > e_x$ for all $x \leq 1 - \tau_0^{'}$), which are the only $e$ that can below $\bar{e}$ under $G(\cdot)$, while $F(\cdot)$ can get contributions from $e > \bar{e}_{G}$. However $G(\bar{e}) = 1$ for $e > \bar{e}_{G}$, while $F(\bar{e}) \geq 1$, we have that $F(e) \geq G(e)$ for $e \geq \bar{e}$. Therefore $F$ is a mean-preserving spread of $G$, moving effort from above $\bar{e}$ to below, and thus the disutility for $e_x$ is higher than for $e_x^{'}$.\footnote{A rigorous proof of this would require to show that $e_{x}^{'} < \bar{e}_{G}$, so that the final effort exerted is the highest effort ever exerted.} Continuity follows again by noting that, until time $1 - \tau_0$, $e_x$ is Lipschitz continuous in all parameters, and thereafter it is constant. Therefore the utility is Lipschitz continuous.
\end{proof}

I will need the following lemma to prove the second part of the
proposition.

\begin{lemma}
\label{lemma:little-left-to-do-to}
Let $D$ be convex and such that $D'(e) \to \infty$ as $e \to \infty$. Fix $B$ and $\varepsilon > 0$. Let $e_{\varepsilon}$ be s.t. 
\begin{equation}
    D(e_{\varepsilon}) \cdot \varepsilon = B
    \label{eq:indifferent}
\end{equation}
Then $e_{\varepsilon} \cdot \varepsilon \to 0$ as $\varepsilon \to 0$.
\end{lemma}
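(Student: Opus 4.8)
The plan is to reduce the statement to Lemma~\ref{lemma:disutility-to-infinity}, which already establishes that $D(e)/e \to \infty$ as $e \to \infty$. First I would check that $e_{\varepsilon}$ is well-defined for all sufficiently small $\varepsilon$: since $D$ is convex with $D(0) = 0$ and $D'(e) \to \infty$, it is continuous, eventually strictly increasing, and satisfies $D(e) \to \infty$; hence for $\varepsilon$ small enough the target value $B/\varepsilon$ lies in the range of $D$, so equation~\eqref{eq:indifferent} has a solution (taking, say, the smallest one if it is not unique).

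Next I would argue that $e_{\varepsilon} \to \infty$ as $\varepsilon \to 0$. By~\eqref{eq:indifferent} we have $D(e_{\varepsilon}) = B/\varepsilon \to \infty$, and since $D$ is continuous it is bounded on every bounded interval; therefore $e_{\varepsilon}$ cannot remain bounded, so $e_{\varepsilon} \to \infty$.

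Finally, I would conclude by rewriting the quantity of interest as $e_{\varepsilon} \cdot \varepsilon = e_{\varepsilon} \cdot \frac{B}{D(e_{\varepsilon})} = B \cdot \frac{e_{\varepsilon}}{D(e_{\varepsilon})}$. Lemma~\ref{lemma:disutility-to-infinity} gives $D(e)/e \to \infty$, equivalently $e/D(e) \to 0$, as $e \to \infty$; combined with $e_{\varepsilon} \to \infty$ this yields $\frac{e_{\varepsilon}}{D(e_{\varepsilon})} \to 0$, hence $e_{\varepsilon} \cdot \varepsilon \to 0$. There is essentially no obstacle here; the only point requiring a line of care is the well-definedness and divergence of $e_{\varepsilon}$, which follows from continuity of $D$ together with $D(e) \to \infty$.
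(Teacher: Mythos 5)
Your proposal is correct and follows essentially the same route as the paper: both establish $e_{\varepsilon}\to\infty$ by noting that boundedness would contradict $D(e_{\varepsilon})\cdot\varepsilon = B$, and both then rewrite $e_{\varepsilon}\cdot\varepsilon = B/\bigl(D(e_{\varepsilon})/e_{\varepsilon}\bigr)$ and invoke Lemma~\ref{lemma:disutility-to-infinity}. The only difference is your added (and welcome) remark on the well-definedness of $e_{\varepsilon}$, which the paper leaves implicit.
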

\begin{proof}
First note that as $\varepsilon$ goes to $0$, $e_{\varepsilon}$ goes to $\infty$, since if it was bounded, then $D(e_{\varepsilon}) \cdot \varepsilon$ would go to $0$. By lemma \ref{lemma:disutility-to-infinity}, we know that $\frac{D(e_{\varepsilon})}{e_{\varepsilon}} \to \infty$. Dividing both sides of equation \ref{eq:indifferent} by $\varepsilon \cdot e_{\varepsilon}$ yields

\begin{align*}
    D(e_{\varepsilon})/e_{\varepsilon} = \frac{B}{e_{\varepsilon} \cdot \varepsilon} \iff  e_{\varepsilon} \cdot \varepsilon = \frac{B}{D(e_{\varepsilon})/e_{\varepsilon}} \to 0
\end{align*}

which proves the claim.
\end{proof}

Here is the proof of the second part of proposition
\ref{prop:multi-period-task-continuous}.

\begin{proof}
Since $\tau_0$ is continuous and decreasing on $(B_L(E_0), B_H(E_0))$, and since $\tau_0$ can be $0$ and $1$, we know that for every $\tau \in (0, 1)$ there is some $B_0 \in (B_L(E_0), B_H(E_0))$ such that $\tau_0(E_0, B_0) = \tau$. Notice that at time $\tau_0$ we have that $G(1 - \tau_0, 0, \frac{E_{1 - \tau_0}}{\tau_0}) = B_0$. Therefore $ \tilde{D}(\frac{E_{1 - \tau_0}}{\tau_0}|0) \tau_0 = B_0$. As $\tau_0 \to 0$, by Lemma~\ref{lemma:little-left-to-do-to} we must therefore have that $E_{1 - \tau_0} \to 0$. This means that almost all the work gets done before time $1 - \tau_0$ for which the least disutility is $D(E_0 - \varepsilon) > D(E_0) - \delta$ for sufficiently small $\varepsilon$ (i.e. $\tau_0$ sufficiently close to $1$). Therefore the disutility is at least $D(E_0) - \delta$ for arbitrary $\delta$. Hence the result holds.
\end{proof}

\subsection{Proofs for Section 5}

Proof of proposition \ref{prop:minimize-productivity-effort}.

\begin{proof}
  The agent solves the following maximization problem:
  \begin{equation*}
    \max_{\vect{e}} B - \sum_{t = 1}^{T} D(e_t) \text{, s.t. } \sum_{t = 1}^T p_t \cdot e_t = E
  \end{equation*}
  where $B$ is a fixed benefit for completing $E$ total work, and $p_t$ is the productivity in period $t$, that is the amount of effective work done for each unit of effort exerted. This means that optimal effort is determined by the following first order conditions:
  \begin{equation*}
    \frac{D'(e_t^{*})}{p_t} \geq \lambda
  \end{equation*}
  In period 1, after having worked for a time $s$, the optimal perceived effort levels are instead given by the following perceived first order conditions:
  \begin{align*}
    \frac{\tilde{D}'(\tilde{e}_t^{*}(s)|s)}{p_t} \geq \lambda(s) & \iff \frac{(1 - \alpha) \cdot D'(\tilde{e}_t^{*}(s))}{p_t} + \frac{\alpha \cdot D'(s)}{p_t} \geq \lambda (s) \\
    & \iff \frac{D'(\tilde{e}_t^{*}(s))}{p_t} \geq \frac{\lambda(s)}{1 - \alpha} - \frac{\alpha}{1 - \alpha} \frac{D'(s)}{p_t} \\
    & \iff \frac{D'(\tilde{e}_{t|1}^{*})}{p_t} \geq \frac{\lambda}{1 - \alpha} - \frac{\alpha}{1 - \alpha} \frac{D'(\tilde{e}_1^{*})}{p_t} = \lambda_t
  \end{align*}
  
  where $\tilde{e}_{t|1}^{*}$ is the amount of work the person plans to do on day $t$ at the end of day 1, and $\tilde{e}_{1}^{*}$ is the amount of work done on day 1. Note that when the productivities are strictly increasing, then so are the $\lambda_i$, and when the productivities are strictly decreasing, then so are the $\lambda_{i}$. When the productivities are strictly increasing, then $\lambda_{1} < \lambda^{*}$: suppose not, so that $\lambda_{1} \geq \lambda^{*}$. Then since the $\lambda_{i} > \lambda_{1} \geq \lambda^{*}$ for all $i > 1$, this implies that whenever the unbiased agent exerts strictly positive effort, then the biased agent plans strictly more effort on all those days, and hence strictly more total effort -- which violates the output constraint. So as long as the person exerts effort on more than a single day (the final day), this cannot be -- hence $\lambda_{1} < \lambda^{*}$, and the person works less than optimal on the first day (strictly less if $e_{1}^{*} > 0$ and $T \geq 2$).
  
  Similarly, when productivities are strictly decreasing, we have that $\lambda_{1} > \lambda^{*}$. If not, then we have that $\lambda^{*} \geq \lambda_{1} > \lambda_{2} > ... > \lambda_{T}$. Thus the person plans to do strictly less work on every future day on which it is optimal to exert strictly positive effort -- but this means that they plan on producing strictly less output than is required, which is not possible. So as long as it is optimal to exert some strictly positive amount of work on some future day, $\lambda_{1} > \lambda^{*}$ and thus the person works more on the first day (strictly more if it is optimal to work on at least 2 days).
  
  Let us now prove the following two separate statements for increasing and decreasing productivity respectively.
  
  \textbf{Case 1: Increasing productivity} Suppose that productivity is strictly increasing, so that $p_1 < p_2 < ... < p_T$. Then $\lambda(s)$ is strictly increasing in $D'(s)$ and hence in $s$.

  We will show the following in turn:

  \begin{enumerate}
    \item The agent works strictly less on the first day than they should
    \item At the start of every day after the first, the biased agent has completed strictly less work in total than the unbiased agent
    \item On every day that is not the first or the last, the agent ends up working strictly less than they expected to work on this day at the end of the day before
  \end{enumerate}

  \textbf{Step 1:} We proved this above.
  
  \textbf{Step 2:} Let $E_{t} = \sum_{i = 1}^{t} p_{i} \cdot e_{i}^{*}$ be the total work completed at the end of day $t$ by the unbiased agent and $\tilde{E}_{t} = \sum_{i = 1}^{t} p_{i} \cdot \tilde{e}_{i}^{*}$ be the total work completed at the end of day $t$ by the biased agent.

  Then we want to show that $\tilde{E}_{t} < E_{t}$ for all $t$ from $1$ to $T - 1$ -- since of course at the end of the last day, the person has completed the same amount of work. We know from step 1 that $\tilde{E}_{1} = \tilde{e}_{1}^{*} < e_{1}^{*} = E_{1}$, so the result holds for $t = 1$. We will prove it by induction. Suppose that the result holds for all $\tau < t$. If $t = T$, then we are done. If $t < T$, then we will prove that the result also holds for $\tau = t$ and hence for all $\tau \leq t$. Thus by induction it holds for all $\tau < T$, proving our claim.

  Why does the result hold for $\tau = t$? Suppose by contradiction that it does not hold. Then $\tilde{E}_{t - 1} < E_{t - 1}$ and $\tilde{E}_{t} \geq E_{t}$, hence $\tilde{e}_{t}^{*} > e_{t}^{*}$. Consider how much the biased agent would work if suddenly on day $t$ they would become unbiased: that is, they can not change the fact that they worked suboptimally in the past, but moving forward they will work optimally. Let us denote this person's variables with a hat rather than a tilde, i.e. $\hat{e}_{t}^{*}$ is the amount of effort they will exert on day $t$. Since day $t$ is the first day of the rest of their life, by step 1 we know that they would work more now than they did as a biased agent: $\hat{e}_{t}^{*} > \tilde{e}_{t}^{*} \geq e_{t}^{*}$. But this means that they started day $t$ having to complete strictly more work on the remaining days than an agent who worked optimally from the start, yet on day $t + 1$ they have strictly less work remaining than such an agent, despite both choosing optimally and having convex disutility of effort. This is a contradiction.

  Hence step 2 holds.

  \textbf{Step 3:} The FOCs for planned work on day 1 are given by:
  \begin{equation*}
    \frac{D'(\tilde{e}_{t|1}^{*})}{p_t} \geq \frac{\lambda}{1 - \alpha} - \frac{\alpha}{1 - \alpha} \frac{D'(\tilde{e}_1^{*})}{p_t} = \lambda_t
  \end{equation*}
  whereas the FOCs for planned work on day 2 are given by:
  \begin{equation*}
    \frac{D'(\tilde{e}_{t|2}^{*})}{p_t} \geq \frac{\mu}{1 - \alpha} - \frac{\alpha}{1 - \alpha} \frac{D'(\tilde{e}_2^{*})}{p_t} = \mu_t
  \end{equation*}
  Our claim is that $\lambda_{2} > \mu_{2}$, since then $\tilde{e}_{2|1}^{*} \geq \tilde{e}_{2}^{*}$, with strict inequality if $\tilde{e}_{2|1}^{*} > 0$. But given that actual effort on day $2$ is certainly higher than actual effort on day 1 (note that as long as the person has worked less than they worked on day 1, they are planning more work on day $2$ than they planned), final effort on day 2 is higher. Note that both at the end of day 1 and at the beginning of day 2, the person plans to produce identical total output over the remaining $T - 1$ days. Now suppose that we have $\mu_{2} \geq \lambda_{2}$. We have:
  \begin{align}
    \lambda_t & = \frac{\lambda}{1 - \alpha} - \frac{\alpha}{1 - \alpha} \frac{D'(\tilde{e}_1^{*})}{p_t} \nonumber \\
    & = \lambda_2 - \frac{\alpha \cdot D'(\tilde{e}_1^{*})}{1 - \alpha} (\frac{1}{p_t} - \frac{1}{p_2}) \nonumber \\
    & \leq \mu_2 -  \frac{\alpha \cdot D'(\tilde{e}_1^{*})}{1 - \alpha} (\frac{1}{p_t} - \frac{1}{p_2}) \label{eq:productivity-overestimate}\\
    & < \mu_2 -  \frac{\alpha \cdot D'(\tilde{e}_2^{*})}{1 - \alpha} (\frac{1}{p_t} - \frac{1}{p_2}) \nonumber \\
    & = \mu_t \nonumber
  \end{align}
  where the first inequality holds because we assumed that $\mu_{2} \geq \lambda_{2}$, and the second holds because $1/p_t - 1/p_2 < 0$ and $\tilde{e}_{2}^{*} \geq \tilde{e}_1^{*}$. But if $\mu_t > \lambda_t$ for all $t > 2$ and $\mu_2 \geq \lambda_2$, then the person plans to do more work on every single day and hence produce more output in total on day 2 than they planned at the end of day 1. As long as they plan to do some positive amount of work on some day (which they do on the last day), they plan strictly more work, hence this is a contradiction, unless the second day is the last day. Thus $\mu_2 < \lambda_2$ and the person works strictly less on the second day than they planned if they planned on doing a non-zero amount of work.
  
  \textbf{Case 2} Let us now assume that the productivity strictly \emph{decreases} over time. We will show the following in turn:
  \begin{enumerate}
    \item The agent works strictly more on the first day than they should
    \item At the start of every day after the first, the biased agent has completed strictly more work in total than the unbiased agent
    \item On every day that is not the first or the last, the agent ends up working strictly less than they expected to work on this day at the end of the day before
  \end{enumerate}
  \textbf{Step 1} We showed this before.

  \textbf{Step 2} This is essentially identical to the argument in the previous step 2.

  \textbf{Step 3} The argument mirrors step 3 from the increasing productivity case, but somewhat surprising maybe, the person does still overestimate the amount of work they will do. Thus we again want to show that $\lambda_{2} > \mu_{2}$. In fact, the derivation in the equations in \ref{eq:productivity-overestimate} still hold: the first inequality holds if we assume (by contradiction) that $\mu_{2} \geq \lambda_{2}$; the second inequality holds because we now have that $1/p_t - 1/p_2 > 0$, but we also have that $\tilde{e}_{2}^{*} < \tilde{e}_{1}^{*}$ (as long as there was any effort on day 1) -- after all, productivity is higher on day 1, so the person works more on day 1 than on day 2. Thus the result follows.
\end{proof}

\hypertarget{bibliography}{%
\section*{References}\label{bibliography}}
\addcontentsline{toc}{section}{References}

\hypertarget{refs}{}
\leavevmode\hypertarget{ref-aclandLevy2015naivete}{}%
Acland, Dan, and Matthew R. Levy. 2015. ``Naiveté, Projection Bias, and
Habit Formation in Gym Attendance.'' \emph{Management Science} 61 (1):
146--160.

\leavevmode\hypertarget{ref-ahnIijimaSarver2020recursiveTemptation}{}%
Ahn, David S, Ryota Iijima, and Todd Sarver. 2020. ``Naivete About
Temptation and Self-Control: Foundations for Recursive Naive
Quasi-Hyperbolic Discounting.'' \emph{Journal of Economic Theory} 189.
Elsevier: 105087.

\leavevmode\hypertarget{ref-arielyLoewenstein2006heat}{}%
Ariely, Dan, and George Loewenstein. 2006. ``The Heat of the Moment: The
Effect of Sexual Arousal on Sexual Decision Making.'' \emph{Journal of
Behavioral Decision Making} 19 (2): 87--98.

\leavevmode\hypertarget{ref-augenblickRabin2019unpleasant}{}%
Augenblick, Ned, and Matthew Rabin. 2019. ``An Experiment on Time
Preference and Misprediction in Unpleasant Tasks.'' \emph{Review of
Economic Studies} 86 (3). Oxford University Press: 941--975.

\leavevmode\hypertarget{ref-badger2007heroin}{}%
Badger, Gary J., Warren K. Bickel, Louis A. Giordano, Eric A. Jacobs,
George Loewenstein, and Lisa Marsch. 2007. ``Altered States: The Impact
of Immediate Craving on the Valuation of Current and Future Opioids.''
\emph{Journal of Health Economics} 26 (5): 865--876.

\leavevmode\hypertarget{ref-bray2016multitasking}{}%
Bray, Robert L, Decio Coviello, Andrea Ichino, and Nicola Persico. 2016.
``Multitasking, Multiarmed Bandits, and the Italian Judiciary.''
\emph{Manufacturing \& Service Operations Management} 18 (4). Informs:
545--558.

\leavevmode\hypertarget{ref-buchheimKolaska2017weather}{}%
Buchheim, Lukas, and Thomas Kolaska. 2017. ``Weather and the Psychology
of Purchasing Outdoor Movie Tickets.'' \emph{Management Science} 63
(11). Informs: 3718--3738.

\leavevmode\hypertarget{ref-buehler1994exploring}{}%
Buehler, Roger, Dale Griffin, and Michael Ross. 1994. ``Exploring the"
Planning Fallacy": Why People Underestimate Their Task Completion
Times.'' \emph{Journal of Personality and Social Psychology} 67 (3).
American Psychological Association: 366.

\leavevmode\hypertarget{ref-bushongGagnonBartsch2020experimentInterpersonalProjectionBias}{}%
Bushong, Benjamin, and Tristan Gagnon-Bartsch. 2020. ``An Experiment on
Interpersonal Projection Bias.''

\leavevmode\hypertarget{ref-bussePopePopeSilvaRisso2015convertibles}{}%
Busse, Meghan R, Devin G Pope, Jaren C Pope, and Jorge Silva-Risso.
2015. ``The Psychological Effect of Weather on Car Purchases.''
\emph{The Quarterly Journal of Economics} 130 (1). Oxford University
Press: 371--414.

\leavevmode\hypertarget{ref-chaloupkaLevyFrank2020smokingCessation}{}%
Chaloupka, IV, Frank J, Matthew R Levy, and Justin S White. 2019.
\emph{Estimating Biases in Smoking Cessation: Evidence from a Field
Experiment}. Working Paper 26522. Working Paper Series. National Bureau
of Economic Research.
doi:\href{https://doi.org/10.3386/w26522}{10.3386/w26522}.

\leavevmode\hypertarget{ref-conlinOdonoghueVogelsang2007catalogOrders}{}%
Conlin, Michael, Ted O'Donoghue, and Timothy J. Vogelsang. 2007.
``Projection Bias in Catalog Orders.'' \emph{American Economic Review}
97 (4): 1217--1249.

\leavevmode\hypertarget{ref-covielloIchinoPersico2014timeAllocationAndTaskJuggling}{}%
Coviello, Decio, Andrea Ichino, and Nicola Persico. 2014. ``Time
Allocation and Task Juggling.'' \emph{American Economic Review} 104 (2):
609--623.
doi:\href{https://doi.org/10.1257/aer.104.2.609}{10.1257/aer.104.2.609}.

\leavevmode\hypertarget{ref-covielloIchinoPersico2015inefficiencyOfWorkerTimeUse}{}%
Coviello, Decio, Andrea Ichino, and Nicola Persico. 2015. ``The
Inefficiency of Worker Time Use.'' \emph{Journal of the European
Economic Association} 13 (5): 906--947.
doi:\href{https://doi.org/10.1111/jeea.12129}{10.1111/jeea.12129}.

\leavevmode\hypertarget{ref-ericson2019intertemporal}{}%
Ericson, Keith Marzilli, and David Laibson. 2019. ``Intertemporal
Choice.'' In \emph{Handbook of Behavioral Economics: Applications and
Foundations 1}, 2:1--67. Elsevier.

\leavevmode\hypertarget{ref-fedyk2018asymmetric}{}%
Fedyk, Anastassia. 2018. ``Asymmetric Naivete: Beliefs About
Self-Control.'' \emph{Available at SSRN 2727499}.

\leavevmode\hypertarget{ref-gagnon2019learning}{}%
Gagnon-Bartsch, Tristan, and Benjamin Bushong. 2019. \emph{Learning with
Misattribution of Reference Dependence}. Technical report, Michigan
State University.

\leavevmode\hypertarget{ref-gilbert1998immuneNeglect}{}%
Gilbert, Daniel T., Elizabeth C. Pinel, Timothy D. Wilson, Stephen J.
Blumberg, and Thalia P. Wheatley. 1998. ``Immune Neglect: A Source of
Durability Bias in Affective Forecasting.'' \emph{Journal of Personality
and Social Psychology} 75 (3): 617.

\leavevmode\hypertarget{ref-gul2001temptation}{}%
Gul, Faruk, and Wolfgang Pesendorfer. 2001. ``Temptation and
Self-Control.'' \emph{Econometrica} 69 (6). Wiley Online Library:
1403--1435.

\leavevmode\hypertarget{ref-haggag2019attribution}{}%
Haggag, Kareem, Devin G Pope, Kinsey B Bryant-Lees, and Maarten W Bos.
2019. ``Attribution Bias in Consumer Choice.'' \emph{The Review of
Economic Studies} 86 (5). Oxford University Press: 2136--2183.

\leavevmode\hypertarget{ref-harrisLaibson2013instantaneousGratification}{}%
Harris, Christopher, and David Laibson. 2013. ``Instantaneous
Gratification.'' \emph{The Quarterly Journal of Economics} 128 (1). MIT
Press: 205--248.

\leavevmode\hypertarget{ref-herrnsteinPrelec1991melioration}{}%
Herrnstein, Richard J, and Dražen Prelec. 1991. ``Melioration: A Theory
of Distributed Choice.'' \emph{The Journal of Economic Perspectives}.
JSTOR, 137--156.

\leavevmode\hypertarget{ref-hsiaw2013goalSettingAndSelfControl}{}%
Hsiaw, Alice. 2013. ``Goal-Setting and Self-Control.'' \emph{Journal of
Economic Theory} 148 (2). Elsevier: 601--626.

\leavevmode\hypertarget{ref-huangNguyen-Huu2018timeConsistentStopping}{}%
Huang, Yu-Jui, and Adrien Nguyen-Huu. 2018. ``Time-Consistent Stopping
Under Decreasing Impatience.'' \emph{Finance and Stochastics} 22 (1).
Springer: 69--95.

\leavevmode\hypertarget{ref-kaufmann2020wpProjectionBias}{}%
Kaufmann, Marc. 2020. ``Projection Bias in Effort Choices.'' Working
Paper.
\url{https://trichotomy.xyz/publication/projection-bias-in-effort-choices/projection-bias-in-effort-choices.pdf}.

\leavevmode\hypertarget{ref-laibson1997golden}{}%
Laibson, David. 1997. ``Golden Eggs and Hyperbolic Discounting.''
\emph{The Quarterly Journal of Economics} 112 (2). MIT Press: 443--478.

\leavevmode\hypertarget{ref-levy2009empirical}{}%
Levy, Matthew. 2009. ``An Empirical Analysis of Biases in Cigarette
Addiction.'' Working Paper.

\leavevmode\hypertarget{ref-leYaouanqSchwardmann2019learning}{}%
Le Yaouanq, Yves, and Peter Schwardmann. 2019. ``Learning About One's
Self.'' CEPR Discussion Paper No. DP13510.

\leavevmode\hypertarget{ref-loewensteinAdler1995bias}{}%
Loewenstein, George, and Daniel Adler. 1995. ``A Bias in the Prediction
of Tastes.'' \emph{The Economic Journal}. JSTOR, 929--937.

\leavevmode\hypertarget{ref-loewensteinNaginPaternoster1997sexualArousal}{}%
Loewenstein, George, Daniel Nagin, and Raymond Paternoster. 1997. ``The
Effect of Sexual Arousal on Expectations of Sexual Forcefulness.''
\emph{Journal of Research in Crime and Delinquency} 34 (4). Sage
Publications: 443--473.

\leavevmode\hypertarget{ref-loewenstein2003projection}{}%
Loewenstein, George, Ted O'Donoghue, and Matthew Rabin. 2003.
``Projection Bias in Predicting Future Utility.'' \emph{The Quarterly
Journal of Economics} 118 (4): 1209--1248.

\leavevmode\hypertarget{ref-michelStenzel2020coolingOffPolicies}{}%
Michel, Christian, and André Stenzel. 2020. ``Model-Based Evaluation of
Cooling-Off Policies.''

\leavevmode\hypertarget{ref-nordgren2008instability}{}%
Nordgren, Loran F, Joop van der Pligt, and Frenk van Harreveld. 2008.
``The Instability of Health Cognitions: Visceral States Influence
Self-Efficacy and Related Health Beliefs.'' \emph{Health Psychology} 27
(6). American Psychological Association: 722.

\leavevmode\hypertarget{ref-odonoghueRabin1999doingNowOrLater}{}%
O'Donoghue, Ted, and Matthew Rabin. 1999. ``Doing It Now or Later.''
\emph{The American Economic Review} 89 (1): 103--124.

\leavevmode\hypertarget{ref-odonoghueRabin2008longTermProcrastination}{}%
O'Donoghue, Ted, and Matthew Rabin. 2008. ``Procrastination on Long-Term
Projects.'' \emph{Journal of Economic Behavior \& Organization} 66 (2):
161--175.

\leavevmode\hypertarget{ref-quahStrulovici2013discounting}{}%
Quah, John K-H, and Bruno Strulovici. 2013. ``Discounting, Values, and
Decisions.'' \emph{Journal of Political Economy} 121 (5). University of
Chicago Press Chicago, IL: 896--939.

\leavevmode\hypertarget{ref-readVanLeeuwen1998predicting}{}%
Read, Daniel, and Barbara Van Leeuwen. 1998. ``Predicting Hunger: The
Effects of Appetite and Delay on Choice.'' \emph{Organizational Behavior
and Human Decision Processes} 76 (2): 189--205.

\leavevmode\hypertarget{ref-vanBovenLoewenstein2003social}{}%
Van Boven, Leaf, and George Loewenstein. 2003. ``Social Projection of
Transient Drive States.'' \emph{Personality and Social Psychology
Bulletin} 29 (9). Sage Publications: 1159--1168.

\end{document}